\documentclass[11pt]{article}

\usepackage[margin=1in]{geometry}
\usepackage{amsmath,amssymb,amsthm,mathtools}
\usepackage{bm}
\usepackage{booktabs}
\usepackage{graphicx}
\usepackage{enumitem}
\usepackage{microtype}
\usepackage[colorlinks=true,allcolors=blue]{hyperref}
\usepackage[nameinlink,noabbrev,capitalise]{cleveref}
\usepackage{orcidlink}
\usepackage{authblk}

\graphicspath{{./files/}}

\theoremstyle{plain}

\newtheorem{theorem}{Theorem}[section]
\newtheorem{proposition}{Proposition}[section]
\newtheorem{lemma}{Lemma}[section]
\newtheorem{corollary}{Corollary}[section]
\newtheorem{assumption}{Assumption}[section]

\newtheorem{remark}{Remark}[section]

\crefname{theorem}{Theorem}{Theorems}
\Crefname{theorem}{Theorem}{Theorems}

\crefname{proposition}{Proposition}{Propositions}
\Crefname{proposition}{Proposition}{Propositions}

\crefname{lemma}{Lemma}{Lemmas}
\Crefname{lemma}{Lemma}{Lemmas}

\crefname{corollary}{Corollary}{Corollaries}
\Crefname{corollary}{Corollary}{Corollaries}

\crefname{assumption}{Assumption}{Assumptions}
\Crefname{assumption}{Assumption}{Assumptions}

\crefname{example}{Example}{Examples}
\Crefname{example}{Example}{Examples}

\crefname{remark}{Remark}{Remarks}
\Crefname{remark}{Remark}{Remarks}

\newcommand{\R}{\mathbb{R}}
\newcommand{\Z}{\mathbb{Z}}
\newcommand{\E}{\mathbb{E}}
\newcommand{\Pp}{\mathbb{P}}
\newcommand{\F}{\mathcal{F}}

\newcommand{\cE}{\mathcal{E}}
\newcommand{\cI}{\mathcal{I}}
\newcommand{\bx}{\bm{x}}
\newcommand{\by}{\bm{y}}
\newcommand{\bz}{\bm{z}}
\newcommand{\ba}{\bm{a}}
\newcommand{\bp}{\bm{p}}
\newcommand{\bd}{\bm{d}}
\newcommand{\br}{\bm{r}}
\newcommand{\bq}{\bm{q}}
\newcommand{\bk}{\bm{k}}
\newcommand{\bi}{\bm{i}}
\newcommand{\bj}{\bm{j}}
\newcommand{\bs}{\bm{s}}
\newcommand{\bu}{\bm{u}}
\newcommand{\bA}{\bm{A}}
\newcommand{\bF}{\bm{F}}
\newcommand{\bK}{\bm{K}}
\newcommand{\bI}{\bm{I}}

\newcommand{\bD}{\bm{D}}
\newcommand{\bP}{\bm{P}}
\newcommand{\be}{\bm{e}}
\newcommand{\bn}{\bm{n}}
\newcommand{\dd}{\,\mathrm{d}}
\newcommand{\pv}{\operatorname{p.v.}}
\newcommand{\supp}{\operatorname{supp}}
\newcommand{\Qsub}[1]{Q_{#1}}
\newcommand{\veps}{\varepsilon}
\newcommand{\Qeps}{Q_\veps}
\newcommand{\Repsl}{R_{\veps,\ell}}

\title{Continuum Limit of Nonlocal Electrostatics in Random Media}
\author[1]{Prashant K. Jha\orcidlink{0000-0003-2158-364X}\thanks{Emails: \texttt{pjha.sci@gmail.com} (Prashant K. Jha) and \texttt{Kaushik.Dayal@cmu.edu} (Kaushik Dayal).}}
\author[2]{Kaushik Dayal\orcidlink{0000-0002-0516-3066}}
\affil[1]{Department of Mechanical Engineering, South Dakota School of Mines and Technology,
Rapid City, SD 57701, USA.}
\affil[2]{Department of Civil and Environmental Engineering, Carnegie Mellon University,
Pittsburgh, PA 15213, USA.}
\date{}

\begin{document}
\maketitle


\begin{abstract}
We derive a two-scale continuum limit for the electrostatic energy of random charge density fields that are stationary and ergodic under lattice translations. Each microscopic cell is charge neutral in every realization, while its dipole moment may fluctuate and have a nonzero mean. Under assumptions on the stationary microscopic potential and the finite-volume fields, the local and nonlocal energies converge almost surely to deterministic limits. The local limit consists of the ensemble-averaged microscopic Coulomb energy and a cell-depolarization term, whereas the effective polarization determines the nonlocal limit. The cell-depolarization terms cancel in the total energy, which consists of the ensemble-averaged microscopic cell energy and the macroscopic electrostatic field energy. The deterministic specialization recovers the periodic two-scale limit. For an independent-cell random-displacement model, we verify the assumptions of the continuum-limit theorem and show that centered microscopic fluctuations can change the local energy without changing the effective polarization or the nonlocal energy.
\end{abstract}

\noindent\textbf{Keywords.} Electrostatic energy; two-scale convergence; continuum limit; stationary and ergodic random media; lattice stationarity.

\noindent\textbf{2020 Mathematics Subject Classification.} 35B27 (primary); 78A30, 60G60, 60G10 (secondary).

\section{Introduction}
\label{sec:introduction}

Continuum electrostatic theories describe the energetic consequences of polarization while suppressing the microscopic charge distribution. Classical elastic-dielectric theories and their polarization-gradient extensions show how polarization, surfaces, and electromechanical coupling enter continuum energies \cite{Toupin1956,Mindlin1968,Tagantsev1986,ShenHu2010}. At the microscopic scale, however, Coulomb interactions are long ranged, and their lattice sums and boundary-dependent terms require special treatment \cite{deLeeuwPerramSmith1980,ToukmajiBoard1996}. The loss of microscopic information is especially relevant at finite temperature, where charged atoms fluctuate about their mean positions. Displaced-structure ensembles are used in finite-temperature electronic-structure calculations \cite{ZachariasGiustino2020}, and molecular simulations of ionic crystals resolve thermal motion about equilibrium positions \cite{GangemiEtAl2022}. Polarization fluctuations also contribute to dielectric response and interact with depolarizing fields and surface-charge screening \cite{PonomarevaBellaicheResta2007}. Finite-temperature fluctuations therefore motivate a continuum description that distinguishes their effect on local electrostatic interactions from their effect on the macroscopic polarization field.

We represent the microscopic fluctuations by a prescribed classical charge density $\rho=\rho(\bx,\by,\omega)$, where $\bx\in\Omega$, $\by\in\R^3$, and $\omega\in\Xi$ denote the material point, the microscopic coordinate, and a realization in a probability space $(\Xi,\F,\Pp)$, respectively. The long range of the Coulomb interaction makes the passage from the random microscopic field to a macroscopic energy nontrivial. Previous analyses of random Coulomb systems have treated thermodynamic limits for Coulomb matter and crystals \cite{LiebLebowitz1972,Fefferman1985}, energies and homogenization in stochastic lattices \cite{BlancLeBrisLions2007Energy,BlancLeBrisLions2007Homogenization}, and electronic-structure or mean-field models of disordered crystals \cite{BlancLewin2012,CancesLahbabiLewin2013}. Finite-temperature coarse graining has also been analyzed for one-dimensional models \cite{BlancEtAl2010}. The problem considered here is instead a three-dimensional continuum-energy decomposition for a prescribed stationary random charge density.

Passing from a stationary microscopic field to a continuum energy requires spatial averaging. Stochastic homogenization provides such averaging for rapidly oscillating coefficients \cite{PapanicolaouVaradhan1981,Kozlov1980,DalMasoModica1986,JikovKozlovOleinik1994}, while periodic and stochastic two-scale convergence resolve both macroscopic and microscopic variables \cite{Nguetseng1989,Allaire1992,BourgeatMikelicWright1994}. Related homogenization and discrete-to-continuum results address stationary random measures, discrete networks, random convolution energies, and long-range multibody systems \cite{ZhikovPiatnitskii2006,NeukammVarga2018,BraidesPiatnitski2021,BachBraidesCicalese2020}. We use lattice ergodic averaging within a two-scale energy decomposition. Unlike standard local homogenized energies, the Coulomb scaling retains a nonlocal contribution determined by the macroscopic polarization.

Marshall and Dayal \cite{MarshallDayal2014} introduced the local/nonlocal energy separation for periodic charge densities, distinguishing interactions within each material point from polarization-mediated interactions between distinct material points. Their formulation was subsequently used in atomistic-to-continuum models of crystalline defects \cite{Jha2016,jha2022atomic}. Long-range electrical interactions in nanostructures \cite{JhaBreitzmanDayal2023} and dipolar limits in micromagnetics \cite{JamesMuller1994,MullerSchlomerkemper2002,SchlomerkemperSchmidt2009} provide related examples in which lattice-scale regularization produces both macroscopic fields and local corrections. Recent two-scale analyses of periodic dielectric crystals address the dependence of bulk polarization on the reference-cell convention and the accompanying boundary charge \cite{SenWangBreitzmanDayal2024,SenWangBreitzmanDayal2026}.

We extend the local/nonlocal decomposition of \cite{MarshallDayal2014} to stationary and ergodic random charge density fields and prove almost-sure convergence of the discrete local and nonlocal energies to deterministic continuum limits. The decomposition is defined on the three length scales illustrated in \cref{fig:length_scales}: microscopic charge distributions at scale $\ell$ are grouped into material points of size $\veps$ within a body of characteristic size $L$, with $\ell\ll\veps\ll L$.

\begin{figure}[h]
\centering
\includegraphics[width=0.52\textwidth]
{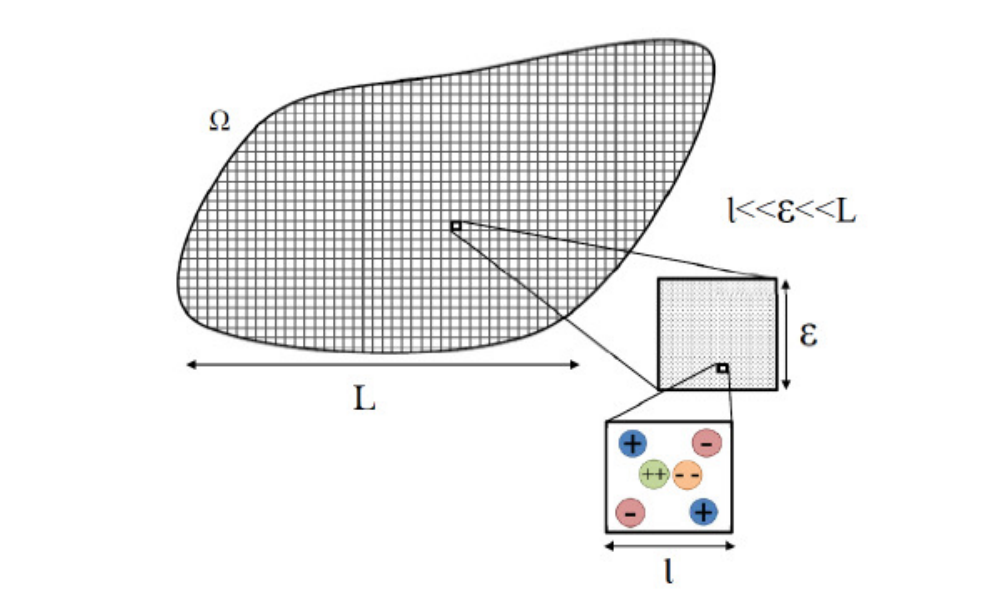}
\caption{The length-scale hierarchy underlying the two-scale local/nonlocal energy decomposition.}
\label{fig:length_scales}
\end{figure}

In the random setting, stationarity under lattice translations is expressed through measurable, measure-preserving, ergodic actions $\{T_{\bk}:\Xi\to\Xi\}_{\bk\in\Z^3}$. A reference-cell charge density $\bar\rho$ generates the stationary microscopic field according to
\begin{equation}
    \rho(\bx,\by,\omega)
    =
    \bar\rho(\bx,\bs,T_{\bk}\omega),
    \qquad
    \by=\bk+\bs,
    \label{eq:stationary_representation_intro}
\end{equation}
where $\bk\in\Z^3$ identifies a cell and $\bs\in Q:=[-1/2,1/2)^3$ is the coordinate within the reference cell. Exact charge neutrality is imposed on $\bar\rho$ in every realization and is inherited by each translated cell. Within a realization, different cells may have different charge profiles and dipole moments, while every cell remains charge neutral.

The lattice-stationary representation in \cref{eq:stationary_representation_intro} requires a reference-cell convention, and we take $Q$ to be the unit cube. The cell polarization depends on how we assign charges to $Q$. The resulting nonuniqueness of bulk polarization and its relation to surface charge are central to the modern theory of polarization \cite{KingSmithVanderbilt1993,VanderbiltKingSmith1993,Resta1994,RestaVanderbilt2007,Spaldin2012}. In the periodic setting, a change in the reference-cell convention changes the bulk polarization together with the corresponding boundary charge, leaving the physical electric field and total electrostatic energy unchanged \cite{SenWangBreitzmanDayal2024,SenWangBreitzmanDayal2026}. We therefore state the cell polarization and the associated local quantities relative to the fixed convention $Q$.

Cell neutrality removes the contribution of the lattice position to the dipole of each translated cell. The lattice ergodic theorem then identifies the large-volume average of the cell dipoles as the deterministic effective polarization
\begin{equation*}
    \bp(\bx)
    =
    \E\!\left[
    \int_Q
    \bar\rho(\bx,\bs,\cdot)\bs\dd\bs
    \right].
\end{equation*}
Individual cell dipoles may vary within a realization, but their large-volume average is independent of the realization.

Combining the finite-scale energy decomposition with the ergodic identification of $\bp$ gives the main result. Under the assumptions in \cref{sec:assumptions}, the local contribution converges almost surely to
\begin{equation*}
    \cE_{\mathrm{loc}}
    =
    \frac12
    \int_\Omega
    \E\!\left[
    \int_Q
    \bar\rho(\bx,\bs,\cdot)
    \bar h(\bx,\bs,\cdot)
    \dd\bs
    \right]\dd\bx
    +
    \frac16
    \int_\Omega
    |\bp(\bx)|^2
    \dd\bx,
\end{equation*}
where $\bar h$ is the reference-cell representation of the lattice-stationary microscopic Coulomb potential; see \cref{eq:reference_cell_potential}. The first term retains the statistics of the microscopic charge density, while the second is the depolarization energy associated with the cubic material point. The interactions between distinct material points converge to the nonlocal dipole--dipole energy, equivalently represented by
\begin{equation*}
    \cE_{\mathrm{nloc}}
    =
    \frac12
    \int_{\R^3}
    |\nabla\phi|^2
    \dd\bx
    -
    \frac16
    \int_\Omega
    |\bp(\bx)|^2
    \dd\bx,
\end{equation*}
where $\phi\in\dot H^1(\R^3)$ solves $-\Delta\phi=-\nabla\cdot\bp_0$ in $\R^3$ and $\bp_0$ is the extension of $\bp$ by zero outside $\Omega$. The equivalent principal-value representation using the dipole kernel is stated in \cref{prop:dipole_representation}. The $+\frac16\int_\Omega|\bp|^2$ local term and the corresponding negative nonlocal term cancel. Let $\cE_{\veps,\ell}(\omega)$ denote the discrete electrostatic energy for the realization $\omega$. The cancellation yields, for almost every realization,
\begin{equation*}
    \lim_{\veps\downarrow0}
    \lim_{\ell/\veps\downarrow0}
    \cE_{\veps,\ell}(\omega)
    =
    \frac12
    \int_\Omega
    \E\!\left[
    \int_Q
    \bar\rho(\bx,\bs,\cdot)
    \bar h(\bx,\bs,\cdot)
    \dd\bs
    \right]\dd\bx +
    \frac12
    \int_{\R^3}
    |\nabla\phi|^2
    \dd\bx.
\end{equation*}
The local and nonlocal limits retain different information about microscopic fluctuations. Although both limits are deterministic, the local limit retains the ensemble-averaged microscopic charge statistics, whereas the nonlocal limit is determined solely by the effective polarization.

To show that the assumptions admit both periodic and nonperiodic random charge fields, we examine two cases. When the probability space consists of a single point, the framework reduces to a periodic charge density, and the continuum energy agrees with the formal periodic energy obtained in \cite{MarshallDayal2014}. We then verify the assumptions for a concrete nonperiodic random-displacement model in which the displacement fields are identically distributed and independent from cell to cell, while displacements of charges within a cell may be correlated and may vary with the macroscopic position $\bx$.

For centered displacements, microscopic fluctuations can change the ensemble-averaged microscopic term in the local energy while leaving the effective polarization, and hence the nonlocal energy, unchanged. The independent-cell measure provides an idealized static ensemble rather than an atomistic Gibbs measure or a model of temporal dynamics; physical displacements are generally correlated across cells.

The remainder of the paper is organized as follows. \Cref{sec:microscopic_model} defines the two-scale charge density, the finite-scale electrostatic energy, and the lattice-stationary averaging framework. \Cref{sec:assumptions} states the assumptions on the charge density and microscopic fields. The continuum limit is established in \cref{sec:continuum_limit}. The periodic and random-displacement cases are considered in \cref{sec:examples}, and \cref{sec:conclusions} summarizes the main conclusions and scope of the result.

\section{Two-scale electrostatic model}
\label{sec:microscopic_model}

We now define the finite-scale electrostatic model used in the analysis. We construct the random charge density field on the two-scale domain using an ergodic lattice action and, following \cite{MarshallDayal2014}, separate the total electrostatic energy into local and nonlocal contributions. We also introduce the charge moments and interaction potentials used in the subsequent analysis. The section concludes with the ergodic theorems for lattice-stationary random fields that are needed to pass to the continuum limit.

\subsection{Material points and microscopic coordinates}\label{sec:setup}

Let $\Omega\subset\R^3$ be a bounded Lipschitz domain whose characteristic length is normalized to $L=1$. As illustrated in \cref{fig:length_scales}, we first represent the material at the scale $\veps$, over which material properties such as the charge density field vary slowly. The behavior of each material point is determined by the microscopic interactions within its associated region at the smaller length scale $\ell$. Thus,
\begin{equation*}
\ell\ll\veps\ll1.
\end{equation*}

To represent the discrete structure of the material at the scale $\veps$, we introduce the set of material points
\begin{equation}
\Omega_\veps
:=
\left\{
\bx\in\veps\Z^3:
\bx+\Qeps\subset\Omega
\right\},
\qquad
\Qeps:=\veps Q,
\qquad
Q:=\left[-\frac12,\frac12\right)^3.
\label{eq:cells}
\end{equation}
Here, $\Omega_\veps$ is the set of material points in $\Omega$, and $\bx+\Qeps$ is the corresponding material-point region. We include only those points $\bx\in\veps\Z^3$ for which the entire region $\bx+\Qeps$ is contained in $\Omega$. The regions $\bx + \Qeps$ are disjoint and their union is a subset of $\Omega$, i.e., $\bigcup_{\bx\in\Omega_\veps}(\bx+\Qeps) \subset \Omega$. The remaining part of $\Omega$ is contained in a boundary layer whose volume vanishes as $\veps\to0$.

Within each material-point region, we consider a lattice of size $\ell$ containing charged atoms. Let $\bz_i$ denote the physical position of the $i$th charged atom, let $q_i$ denote its charge, and let $\cI_{\bx}$ denote the set of atoms contained in $\bx+\Qeps$. The electrostatic energy of the material can then be written as
\begin{equation*}
\hat{\cE}_{\veps,\ell}
=
\frac12
\sum_{\bx,\bx'\in\Omega_\veps}
\sum_{\substack{
i\in\cI_{\bx},
j\in\cI_{\bx'} \\
(\bx,i) \ne (\bx',j)}}
G(\bz_i-\bz_j)q_iq_j,
\qquad
G(\bz):=\frac{1}{4\pi|\bz|}.
\end{equation*}
Here, $G(\cdot)$ is the Coulomb kernel. We suppress the vacuum permittivity, which can be restored by multiplying the energy by the appropriate material coefficient. The factor of $1/2$ accounts for double counting of the interactions. The outer sum is over the material points, while the inner sum accounts for the interactions between the discrete charges contained in the corresponding material-point regions. The condition $(\bx,i)\ne(\bx',j)$ excludes the self-interaction of an individual charge.

In this work, we use an alternative description of the electrostatic energy in terms of a charge density field $\hat{\rho}(\bx,\br)$; see \cite{MarshallDayal2014,jha2022atomic,xiao2005influence}. Here, $\br\in \Qeps$ denotes the position relative to the material point $\bx$, so that $\bx+\br$ is the corresponding physical position. The field $\hat{\rho}(\bx,\br)$ provides a continuous representation of the discrete charges contained in $\bx+\Qeps$. In terms of this field, the electrostatic energy is
\begin{equation*}
\hat{\cE}_{\veps,\ell}
=
\frac12
\sum_{\bx,\bx'\in\Omega_\veps}
\int_{Q_\veps}
\int_{Q_\veps}
G(\bx+\br-\bx'-\br')
\hat{\rho}(\bx,\br)
\hat{\rho}(\bx',\br')
\dd\br'\dd\br.
\end{equation*}
This form is more convenient for the analysis of the continuum limit. The discrete lattice description is recovered by taking
\begin{equation*}
\hat{\rho}(\bx,\br)
=
\sum_{i\in{\cI}_{\bx}}
q_i 
\delta\left(\br-(\bz_i-\bx)\right),
\end{equation*}
where $\delta$ is the Dirac delta distribution. When using this charge density representation for point charges, we must ensure the summation excludes $\bz_i - \bz_j = 0$ to avoid the singularity of the Coulomb kernel. 

Next, let $\by$ denote the microscopic point, which is obtained by rescaling the coordinate $\br$ by the lattice scale $\ell$:
\begin{equation*}
    \by=\frac{\br}{\ell}
    \in \Qsub{\Repsl},
    \qquad
    \Repsl:=\frac{\veps}{\ell}.
\end{equation*}
Since $\Repsl>1$, for a given $\by \in \Qsub{\Repsl}$, there are unique $\bk \in \Z^3$ and $\bs \in Q$ such that $\by = \bk + \bs$, 
where $\bk$ identifies a microscopic cell and $\bs$ is the coordinate within the reference cell.

The thermodynamic limit within a material point corresponds to $\Repsl\to\infty$ at fixed $\veps$. For convenience, we take this limit along the sequence
\begin{equation}
    \ell_n=\frac{\veps}{2n+1}, \qquad 
R_{\veps, \ell_n} = \frac{\veps}{\ell_n} = 2n+1 =: R_{n}\,.
\label{eq:lattice_scale_sequence}
\end{equation}
The microscopic region $\Qsub{R_n}$ is the union of $(2n+1)^3$ reference cells $Q$ arranged symmetrically about the origin. The index set of the lattice cells contained in $\Qsub{R_n}$ is
\begin{equation}
I_{R_n}:=\{-n,\ldots, 0, \ldots, n\}^3.
\label{eq:lattice_index_set}
\end{equation}
Consequently,
\begin{equation}
Q_{R_n}
=
\biguplus_{\bk\in I_{R_n}}(\bk+Q),
\qquad
|Q_{R_n}|=|I_{R_n}|=R_n^3.
\label{eq:cell_union}
\end{equation}
Thus, $\by \in Q_{R_n}$ is the scaled relative coordinate of the microscopic point $\br = \ell_n \by \in \Qeps$.

\subsection{Random charge density field}\label{sec:randomDensity}

Our goal is to describe a material in which the microscopic charge density profile can vary from one cell to another in a given realization, while its statistical properties remain unchanged under lattice translations. We also require averages over a large number of microscopic cells to converge to the corresponding ensemble averages. These properties are described using measurable, measure-preserving, ergodic lattice actions (dynamical systems), a standard framework in the analysis of random media and stochastic homogenization \cite{Kozlov1980,JikovKozlovOleinik1994}.

Let $(\Xi,\F,\Pp)$ be a probability space, where $\omega\in\Xi$ denotes a
realization of the microscopic charge density. Let
$\{T_{\bk}:\Xi\to\Xi\}_{\bk\in\Z^3}$ be a measurable, measure-preserving
lattice action satisfying
\begin{equation}
    T_{\bm{0}}=\operatorname{Id},
    \qquad
    T_{\bk+\bk'}=T_{\bk}\circ T_{\bk'},
    \qquad
    \Pp(T_{\bk}^{-1}A)=\Pp(A)
    \label{eq:action_properties}
\end{equation}
for every $\bk,\bk'\in\Z^3$ and $A\in\F$. The action is assumed to be ergodic. That is, if a measurable set
$A\in\F$ satisfies
\begin{equation*}
    T_{\bk}^{-1}A=A
    \qquad
    \text{for every }\bk\in\Z^3,
\end{equation*}
then $\Pp(A)$ is either zero or one. Ergodicity allows spatial averages over
a large number of cells to be identified with ensemble averages \cite[Chapter 7]{JikovKozlovOleinik1994}.

Using the lattice action, we define the random charge density profile on the reference cell $Q$ by
\begin{equation*}
    \bar\rho:\overline\Omega\times Q\times\Xi\to\R.
\end{equation*}
For fixed $\bx\in\overline\Omega$ and $\omega\in\Xi$, the function
$\bar\rho(\bx,\cdot,\omega)$ describes the charge density within the
reference cell $Q$. The charge density on the full lattice is obtained by extending $\bar\rho$ using the maps $\{T_{\bk}\}$ as follows:
\begin{equation}\label{eq:stationary_representation}
\rho(\bx, \by, \omega) := \bar\rho(\bx, \bs, T_{\bk}\omega)\,,\qquad \forall \by \in \R^3\,,
\end{equation}
where $\bk\in \Z^3$ and $\bs \in Q$ are such that $\by = \bk + \bs$. Thus, for a fixed realization $\omega$, different cells can have different charge density profiles, depending on the map $T_{\bk}$ associated with the cell coordinate $\bk$. However, because the action is measure preserving, the charge density profiles have the same probability distribution in every cell.

Within the material-point region $\bx + Q_\veps$, the physical charge density $\rho^\ell$ is defined by
\begin{equation}
    \rho^\ell(\bx,\br,\omega)
    :=
    \frac{1}{\ell}
    \rho\left(\bx,\frac{\br}{\ell},\omega\right),
    \qquad
    \br\in Q_\veps.
    \label{eq:scaled_charge_density}
\end{equation}
The ansatz scaling factor $\ell^{-1}$ is introduced so that the local and nonlocal electrostatic energies are of order one in the limit $\ell/\veps \to 0$. Here, $\bar\rho$ generates the lattice-stationary microscopic field $\rho$ and, after rescaling, the physical charge density $\rho^\ell$ used in the finite-scale electrostatic energy.

\begin{remark}
The choice of the generator $\bar{\rho}: \overline\Omega \times Q \times \Xi \to \R$ and the ergodic lattice action $\{T_{\bk}\}_{\bk\in\Z^3}$ allows a broad class of random charge density fields. Periodic charge density fields and random perturbations of periodic fields are special cases; see \cref{sec:examples}.
\end{remark}

\subsection{Discrete electrostatic energies}

Using the physical charge density $\rho^\ell$ defined in
\cref{eq:scaled_charge_density}, the electrostatic energy at fixed
$\veps$ and $\ell$ is
\begin{equation}\label{eq:finite_total_energy}
    \cE_{\veps,\ell}(\omega)
    :=
    \frac12
    \sum_{\bx,\bx'\in\Omega_\veps}
    \int_{Q_\veps}
    \int_{Q_\veps}
    \rho^\ell(\bx,\br,\omega)
    \rho^\ell(\bx',\br',\omega)
    \,
    G(\bx+\br-\bx'-\br')
    \dd\br'\dd\br.
\end{equation}
We separate the energy according to whether the two interacting charge
distributions belong to the same material point or to distinct material points. The local energy contains the interactions within each material-point region:
\begin{equation}\label{eq:finite_local_energy}
    \cE_{\veps,\ell}^{\mathrm{loc}}(\omega)
    :=
    \frac12
    \sum_{\bx\in\Omega_\veps}
    \int_{\veps Q}
    \int_{\veps Q}
    \rho^\ell(\bx,\br,\omega)
    \rho^\ell(\bx,\br',\omega)
    \,
    G(\br-\br')
    \dd\br'\dd\br.
\end{equation}
The nonlocal energy contains the interactions between distinct
material-point regions:
\begin{equation}\label{eq:finite_nonlocal_energy}
    \cE_{\veps,\ell}^{\mathrm{nloc}}(\omega)
    :=
    \frac12
    \sum_{\substack{
    \bx,\bx'\in\Omega_\veps\\
    \bx\ne\bx'}}
    \int_{\veps Q}
    \int_{\veps Q}
    \rho^\ell(\bx,\br,\omega)
    \rho^\ell(\bx',\br',\omega)
    \,
    G(\bx+\br-\bx'-\br')
    \dd\br'\dd\br.
\end{equation}
Therefore,
\begin{equation}
    \cE_{\veps,\ell}
    =
    \cE_{\veps,\ell}^{\mathrm{loc}}
    +
    \cE_{\veps,\ell}^{\mathrm{nloc}}.
    \label{eq:energy_split}
\end{equation}

\subsection{Microscopic moments and interaction potentials}

We introduce quantities that will appear in the limiting continuum model. For each material point $\bx\in\Omega_\veps$, define the total charge and dipole moment by
\begin{equation}\label{eq:cell_moments}
    q_{\veps,\ell}(\bx,\omega)
    :=
    \int_{\veps Q}
    \rho^\ell(\bx,\br,\omega)\dd\br, \qquad 
    \bd_{\veps,\ell}(\bx,\omega)
    :=
    \int_{\veps Q}
    \rho^\ell(\bx,\br,\omega)\br\dd\br.
\end{equation}
Using $\by=\br/\ell$ and
$R=R_{\veps,\ell}=\veps/\ell$, the charge and dipole densities can be written using \eqref{eq:scaled_charge_density} as
\begin{equation}\label{eq:rescaled_cell_moments}
    \frac{q_{\veps,\ell}(\bx,\omega)}{\veps^3}
    =
    \frac{R}{\veps}
    \frac{1}{|Q_R|}
    \int_{Q_R}
    \rho(\bx,\by,\omega)\dd\by,
    \qquad 
    \bp_R(\bx,\omega)
    := \frac{\bd_{\veps,\ell}(\bx,\omega)}{\veps^3}
    =
    \frac{1}{|Q_R|}
    \int_{Q_R}
    \rho(\bx,\by,\omega)\by\dd\by.
\end{equation}
\cref{asmp:neutrality} implies that $q_{\veps,\ell}(\bx,\omega)=0$
for every fixed $\bx$ and almost every $\omega$. Under the charge neutrality condition, the following will be shown
\begin{equation}
    \bp_R(\bx,\omega) \quad 
    \xrightarrow[R\to\infty]{} \quad 
    \bp(\bx)
    :=
    \E\left[
    \int_Q
    \rho(\bx,\by,\cdot)\by\dd\by
    \right]\,,
    \label{eq:effective_polarization}
\end{equation}
where $\bp(\bx)$ is the effective polarization. 

We next introduce the microscopic potential used in the local-energy
analysis. For each $\bx\in\overline\Omega$ and $\omega\in\Xi$, define
\begin{equation}
    h(\bx,\by,\omega)
    :=
    G\ast\rho(\bx,\cdot,\omega)(\by),
    \label{eq:microscopic_potential}
\end{equation}
whenever the infinite-volume convolution exists. \cref{asmp:stationary} specifies its existence and the integrability required in the analysis.

The microscopic potential $h$ is a lattice-stationary random field, which follows from the lattice-stationary property of the charge density field $\rho$. To see this, we define the restriction of $h$ to the reference cell by
\begin{equation}
    \bar h(\bx,\bs,\omega)
    :=
    h(\bx,\bs,\omega),
    \qquad
    \bs\in Q.
    \label{eq:reference_cell_potential}
\end{equation}
It follows that, for $\by \in Q$ and $\bk \in \Z^3$,
\begin{equation*}
\begin{aligned}
    h(\bx,\by+\bk,\omega)
    &=
    G\ast\rho(\bx,\cdot,\omega)(\by+\bk)
    =
    G\ast\rho(\bx,\cdot+\bk,\omega)(\by)\\
    &=
    G\ast\rho(\bx,\cdot,T_{\bk}\omega)(\by)
    =
    h(\bx,\by,T_{\bk}\omega) = \bar{h}(\bx,\by,T_{\bk}\omega).
\end{aligned}
\end{equation*}

At finite volume ($\ell/\veps > 0$), the charge density within a material point is restricted to the finite microscopic region $Q_R$ ($R = \Repsl = \veps/\ell$). The corresponding truncated potential is
\begin{equation}
    h_R(\bx,\by,\omega)
    := G\ast (\rho \chi_{Q_R}) = 
    \int_{Q_R}
    G(\by-\bz)\rho(\bx,\bz,\omega)\dd\bz.
    \label{eq:truncated_potential_at_origin}
\end{equation}
Unlike $h$, the potential $h_R$ is not lattice stationary because its source is restricted to the fixed cube $Q_R$. The difference $h_R-h$ need not vanish in the limit and accounts for the correction arising from the boundary of the material-point region. The limit is described by the potential of a uniformly polarized cube. For $\bq\in\R^3$, define
\begin{equation}
    H_Q[\bq](\boldsymbol\xi)
    :=
    \int_{\partial Q}
    G(\boldsymbol\xi-\boldsymbol\eta)
    \bq\cdot\bn_Q(\boldsymbol\eta)
    \dd S_{\boldsymbol\eta}.
\label{eq:cube_depolarization_potential}
\end{equation}
It satisfies
\begin{equation}
-\Delta H_Q[\bq]
    =
    -\nabla\cdot(\bq\chi_Q)
    \qquad
    \text{in }\mathcal D'(\R^3).
\end{equation}
The source $-\nabla\cdot(\bq\chi_Q)$ is the charge source due to the uniform polarization $\bq$.

For a fixed $\bk\in\Z^3\setminus\{\bm0\}$, we define the interaction
between two microscopic regions separated by $R\bk$ as
\begin{equation}
    B_{R,\bk}(\bx,\bx',\omega)
    :=
    \frac{1}{R^3}
    \int_{Q_R}
    \rho(\bx',\bz,\omega)
    h_R(\bx,\bz+R\bk,\omega)
    \dd\bz.
    \label{eq:finite_volume_pair_interaction}
\end{equation}
Using \cref{eq:truncated_potential_at_origin} and the symmetry
$G(-\bz)=G(\bz)$, this is equivalently
\begin{equation*}
\begin{aligned}
    B_{R,\bk}(\bx,\bx',\omega)
    =
    \frac{1}{R^3}
    \int_{Q_R}\int_{Q_R}
    G(\by-\bz-R\bk)
    \rho(\bx,\by,\omega)
    \rho(\bx',\bz,\omega)
    \dd\by\dd\bz.
\end{aligned}
\end{equation*}
The corresponding interaction between uniformly polarized unit cubes is
defined by
\begin{equation}
\begin{aligned}
    H_{\bk+Q}[\bq](\boldsymbol\xi)
    &:=
    H_Q[\bq](\boldsymbol\xi-\bk),\\
    B_{\bk}(\bp,\bq)
    &:=
    \int_{\R^3}
    \nabla H_Q[\bp]\cdot
    \nabla H_{\bk+Q}[\bq]
    \dd\boldsymbol\xi.
\end{aligned}
\label{eq:polarized_cell_pair_interaction}
\end{equation}

At large separations, the interaction between the polarized cubes reduces
to the interaction between their dipole moments. The dipole interaction
kernel is
\begin{equation}
    \bK(\bz)
    :=
    -\nabla^2G(\bz)
    =
    \frac{1}{4\pi|\bz|^3}
    \left(
    \bI
    -
    3\frac{\bz}{|\bz|}
    \otimes
    \frac{\bz}{|\bz|}
    \right),
    \qquad
    \bz\ne\bm0.
    \label{eq:dipole_kernel}
\end{equation}

\subsection{Ergodic theorem for lattice-stationary random variables}
\label{sec:random_fields}

The action $T_{\bk}$ used in this work is parameterized by $\Z^3$. For an $\R^3$-parameterized action, the Birkhoff ergodic theorem identifies the spatial average of a stationary random field with its expectation \cite[Theorem~7.2]{JikovKozlovOleinik1994}. The corresponding spatial average for the lattice action is the discrete average over the cells indexed by $I_R$. We use the following result, which is a special case of \cite[Theorem~7.2]{JikovKozlovOleinik1994}. We also state a corollary that is used in the analysis of the continuum limit.

\begin{theorem}[Lattice ergodic theorem]
\label{thm:ergodic}
Let $f\in L^1(\Xi; \R^m)$, and let
$\{T_{\bk}\}_{\bk\in\Z^3}$ be the measure-preserving ergodic action in
\cref{eq:action_properties}. Then, for almost every $\omega\in\Xi$,
\begin{equation}
    \lim_{\substack{R\to\infty\\R=2n+1}}
    \frac{1}{R^3}
    \sum_{\bk\in I_R}
    f(T_{\bk}\omega)
    =
    \E[f].
    \label{eq:ergodic_average}
\end{equation}
If $f\in L^2(\Xi; \R^m)$, the convergence holds in the $L^2$ norm.
\end{theorem}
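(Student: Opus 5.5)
The plan is to reduce the statement to the multiparameter pointwise ergodic theorem for $\Z^3$-actions along the cubes $I_R$, $R=2n+1$. The cubes $I_{R_n}=\{-n,\dots,n\}^3$ are nested, increasing, and satisfy the regularity (Tempelman) condition $|I_{R_n}-I_{R_n}|\le 2^3|I_{R_n}|$. By linearity and componentwise application it suffices to take $m=1$. Almost-everywhere convergence of $A_n f:=R_n^{-3}\sum_{\bk\in I_{R_n}}f\circ T_{\bk}$ will follow from two ingredients. The first is convergence on a dense class of $L^1$. The second is a maximal inequality that transfers this convergence to all of $L^1$ via the usual Banach principle.

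The dense class I would use is $\mathcal{D}:=\{\text{$T$-invariant } g\in L^\infty\}+\operatorname{span}\{h-h\circ T_{\be_j}: h\in L^\infty,\ j=1,2,3\}$. For invariant $g$, $A_ng=g$. For a coboundary $h-h\circ T_{\be_j}$, the sum over $I_{R_n}$ telescopes in the $j$th direction, leaving at most $2R_n^2$ terms bounded by $\|h\|_\infty$. Hence $|A_n(h-h\circ T_{\be_j})|\le 2\|h\|_\infty/R_n\to0$ pointwise. Density of $\mathcal{D}$ in $L^2$ (and hence in $L^1$) follows from the von Neumann argument. If $u\in L^2$ is orthogonal to all coboundaries, then $u\circ T_{\be_j}^{-1}=u$ for each $j$ (using that $T_{\bk}$ is measure preserving and invertible on the generated group; otherwise one works with the adjoint Koopman operator), so $u$ is invariant. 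Ergodicity then forces invariant functions to be a.e. constant, so the limit of $A_nf$ is $\E[f]$. This identification holds first on $\mathcal{D}$ and then by continuity, since $\E[A_nf]=\E[f]$ by measure preservation.

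The main obstacle is the maximal inequality $\Pp(\sup_n|A_nf|>\lambda)\le C\|f\|_{L^1}/\lambda$. I would prove it by Calderón transference. Fix a large box $I_{R_N}$ and a realization $\omega$, and consider the function $\bk\mapsto f(T_{\bk}\omega)$ on $I_{R_N+R_M}\subset\Z^3$. Apply the discrete Hardy--Littlewood maximal inequality on $\Z^3$ for centered cubes, which follows from the Vitali covering lemma with constant $3^3$. Then average over $\omega$ using measure preservation, and let $N\to\infty$ and then $M\to\infty$. Combining the maximal inequality with convergence on $\mathcal{D}$ shows that $\limsup A_nf-\liminf A_nf=0$ a.e. for every $f\in L^1$.

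For the $L^2$ statement, the von Neumann mean ergodic theorem gives the result directly. $A_n$ is a contraction on $L^2$, and it converges in norm on the closure of $\mathcal{D}$, which is all of $L^2$. On coboundaries, the uniform bound $2\|h\|_\infty/R_n$ gives norm convergence to zero, and density plus uniform boundedness extends this. Alternatively, since the statement is noted as a special case of \cite[Theorem~7.2]{JikovKozlovOleinik1994}, one can embed the $\Z^3$-action into an $\R^3$-action on $\Xi\times Q$ (suspension flow) and apply that theorem to $F(\omega,\bs)=f(\omega)$. Averaging over $Q_{R}$ of the flow then reproduces $A_nf$ exactly for odd $R$, which is a quicker route if citation is acceptable.
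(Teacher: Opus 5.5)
Your main argument is correct, and it takes a genuinely different route from the paper.

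You prove the $\Z^3$ statement directly, in three steps:
\begin{enumerate}[label=(\roman*)]
\item explicit convergence on the dense class of invariant functions plus coboundaries $h-h\circ T_{\be_j}$, where telescoping gives the rate $2\|h\|_\infty/R$;
\item a weak-type $(1,1)$ maximal inequality, obtained by Calder\'on transference from the discrete Hardy--Littlewood inequality for centered cubes;
\item the Banach principle.
\end{enumerate}
The $L^2$ statement then follows from the same dense class and $\|A_n\|_{L^2\to L^2}\le 1$.

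The paper proves no maximal inequality. It builds the suspension $\Xi\times Q$ with the $\R^3$-action $(\omega,\bs)\mapsto(T_{\bk_{\by}(\bs)}\omega,\tau_{\by}(\bs))$, where $\bs+\by=\bk_{\by}(\bs)+\tau_{\by}(\bs)$ and $\tau_{\by}(\bs)\in Q$. It verifies the group law, invariance of $\Pp\otimes\mathcal L^3|_Q$, and ergodicity of this action, and then imports the continuous-parameter Birkhoff theorem \cite[Theorem~7.2]{JikovKozlovOleinik1994}. For the $L^2$ claim it factors the cube average into three commuting one-dimensional averages and applies the mean ergodic theorem in each direction.

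The paper's route is shorter given the citation, but it outsources the analytic core to the reference. Yours is self-contained, reuses one dense class for both the a.e.\ and the $L^2$ statements, and carries over unchanged to other nested regular families of boxes.

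Two small corrections.
\begin{itemize}
\item In the transference step the relevant window is $I_{R_N}+I_{R_M}=I_{R_{N+M}}$, not $I_{R_N+R_M}$.
\item In your closing aside, the flow average over $Q_R$ does not reproduce $A_nf$ exactly. At $(\omega,\bs)$ it equals $R^{-3}\int_{Q_R+\bs}F_\omega\dd\by$ with $F_\omega(\bk+\boldsymbol t):=f(T_{\bk}\omega)$. This coincides with $A_nf(\omega)$ only at $\bs=\bm0$, a null set about which the $\R^3$ theorem says nothing. That is why the paper uses Fubini to pick a good $\bs$ for a.e.\ $\omega$. It then shows, by applying the theorem to $|f|$, that the layer between $Q_R$ and $Q_R+\bs$ (of volume $O(R^2)$) contributes $o(1)$ after division by $R^3$.
\end{itemize}
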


\begin{corollary}[Weighted lattice ergodic limit]
\label{cor:weighted_ergodic}
Let $a\in C(\overline Q)$ and let
$f\in L^2(\Xi;\R^m)$. Then, for almost every $\omega\in\Xi$,
\begin{equation}
    \frac{1}{R^3}
    \sum_{\bk\in I_R}
    a\left(\frac{\bk}{R}\right)
    f(T_{\bk}\omega)
    \longrightarrow
    \left(\int_Q a(\boldsymbol\eta)\dd\boldsymbol\eta\right)
    \E[f]
\end{equation}
as $R=2n+1\to\infty$.
\end{corollary}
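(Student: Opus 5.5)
The plan is to reduce the weighted average to the unweighted lattice ergodic theorem (\cref{thm:ergodic}) by approximating $a$ with functions that are piecewise constant on a fixed coarse partition of $Q$. Fix an integer $m\ge1$ and partition $Q$ into $m^3$ congruent subcubes $Q^{(j)}$, $j=1,\dots,m^3$, of side $1/m$. Let $a_m$ be the piecewise constant function equal to the value of $a$ at the center of $Q^{(j)}$ on each subcube. Since $a$ is uniformly continuous on $\overline Q$, we have $\|a-a_m\|_{L^\infty(Q)}\le \omega_a(\sqrt3/m)\to0$, where $\omega_a$ is the modulus of continuity of $a$.

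\textbf{Error bound.} Replacing $a$ by $a_m$ in the sum costs at most
\begin{equation*}
\|a-a_m\|_\infty\,\frac{1}{R^3}\sum_{\bk\in I_R}|f(T_{\bk}\omega)|.
\end{equation*}
By \cref{thm:ergodic} applied to $|f|\in L^1$, this is bounded in the limit by $\|a-a_m\|_\infty\,\E|f|$ for almost every $\omega$. The same replacement on the right-hand side costs at most $\|a-a_m\|_{L^1(Q)}|\E f|$.

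\textbf{Main step.} For the piecewise constant part, it suffices to show that for each fixed subcube $P=Q^{(j)}$,
\begin{equation*}
\frac{1}{R^3}\sum_{\bk\in I_R,\ \bk/R\in P} f(T_{\bk}\omega)\to |P|\,\E[f]
\end{equation*}
almost surely. This is the main obstacle, because \cref{thm:ergodic} is stated only for centered cubes $I_R$, while $\{\bk:\bk/R\in P\}$ is an off-center box. I would handle it by inclusion--exclusion over boxes anchored at the origin. Each subcube's index set differs by $O(R^2)$ points from a signed combination of sets of the form $\{\bk\in\Z^3: 0\le k_i<\lfloor t_iR\rfloor\}$ and their reflections. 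It therefore suffices to prove that for fixed $t\in(0,1/2]^3$ and each octant,
\begin{equation*}
R^{-3}\sum_{0\le k_i<t_iR} f(T_{\bk}\omega)\to t_1t_2t_3\,\E f.
\end{equation*}
The boundary discrepancies of $O(R^2)$ points are negligible after division by $R^3$. Justifying this needs the pointwise bound $\max_{\bk\in I_R}|f(T_{\bk}\omega)|=o(R)$, which follows from $\sum_{\bk}\Pp(|f|>\delta|\bk|)<\infty$ because $\E|f|^3$ need not be finite. To avoid that difficulty, I would instead truncate $f$ as $f=f\chi_{|f|\le M}+f\chi_{|f|>M}$. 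The tail contributes at most $\|a\|_\infty R^{-3}\sum_{I_R}|f|\chi_{|f|>M}\to\|a\|_\infty\E[|f|\chi_{|f|>M}]$, which is small for large $M$, and the bounded part has trivially controlled boundary terms.

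\textbf{Boxes.} Convergence on anchored boxes is the multiparameter pointwise ergodic theorem for $\Z^3$-actions over boxes of comparable (regular) side lengths, which is exactly the content of \cite[Theorem~7.2]{JikovKozlovOleinik1994} applied to general regular families. If only centered cubes are allowed, I would instead write $\chi_{\text{box}}$ as a difference of centered cubes at scales $t R$ composed with shifts $T_{\bk_0}$. In that form, shifted centered averages converge because $T_{\bk_0}$ is measure preserving and the a.s. set can be taken invariant under countably many shifts.

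\textbf{Conclusion.} Summing over $j$, taking $\limsup$ as $R\to\infty$, and then letting $m\to\infty$ and $M\to\infty$ gives the claim on a countable intersection of full-measure sets.
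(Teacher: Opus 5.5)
Your main route is sound, but it is organized differently from the paper's proof. The paper never approximates $a$. It forms the piecewise-constant realization $f_\omega(\bk+\bs)=f(T_{\bk}\omega)$ and takes from the mean-value form of Birkhoff's theorem that $f_\omega(R\,\cdot)\rightharpoonup\E[f]$ weakly in $L^2(Q;\R^m)$. It tests this against the continuous $a$, and then replaces $a((\bk+\bs)/R)$ by $a(\bk/R)$ at cost $\omega_a(\sqrt3/(2R))\,R^{-3}\sum_{\bk\in I_R}|f(T_{\bk}\omega)|\to0$. You instead approximate $a$ uniformly by step functions on a fixed grid, reduce to counting averages over the off-centre sets $\{\bk:\bk/R\in P\}$, and absorb boundary discrepancies by truncating $f$. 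Your version has two advantages. It works for $f\in L^1$, whereas the paper works in $L^2$ to get weak compactness of $f_\omega(R\,\cdot)$. It also makes explicit the ingredient that the paper's appeal to ``the mean-value statement used in the proof of \cref{thm:ergodic}'' leaves implicit: \cref{thm:ergodic} concerns centred cubes $I_R$ and does not control averages over sets $RP$ drifting away from the origin, and the appendix carries the suspension-flow argument through only for $K=Q$. The paper's version buys brevity: once weak convergence is granted, continuity of $a$ finishes the proof with no grid, inclusion--exclusion, or truncation.

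There are two corrections. First, your fallback for the box step fails. The set $\{\bk:\bk/R\in P\}$ is a cube of side about $R/m$ centred near $R\,c_P$, where $c_P$ is the centre of $P$, so the shifts $\bk_0$ you would need grow linearly with $R$. Invariance of the full-measure set under countably many \emph{fixed} shifts says nothing about averages shifted by $\bk_0(R)$. Nor can signed combinations of centred cubes, which are symmetric sets, produce a one-sided box. So the box step must rest on a genuine multiparameter theorem. One option is Tempelman's pointwise theorem for $\Z^3$-actions along regular families. The other is Theorem~7.2 of \cite{JikovKozlovOleinik1994} transported through the suspension flow $\widetilde T_{\by}$ of the appendix; that theorem concerns $\R^3$-actions, so it cannot be applied to $\{T_{\bk}\}$ directly. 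In the latter form the mean-value property holds for every bounded Borel set, so you can take $K=P$ directly. You then need the same boundary-layer transfer from $RP+\bs$ to the union of cells, controlled by slightly enlarged and shrunk boxes, and the inclusion--exclusion over anchored boxes becomes unnecessary. Second, your sentence on $\sum_{\bk}\Pp(|f|>\delta|\bk|)$ is inverted: that series is finite essentially iff $\E|f|^3<\infty$, which is exactly why truncation is the right device.
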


\cref{thm:ergodic} and \cref{cor:weighted_ergodic} are proved in \cref{sec:ergodicityProof}. The result applies componentwise to finite-dimensional vector-valued random variables and to cell observables obtained by integrating over $Q$. For example, for a function $f:Q\times \Xi \to \R^m$, we can define a function $F:\Xi \to \R^m$ by integrating over the reference cell
\begin{equation}
    F(\omega) = \int_Q f(\bs,\omega)\dd\bs
\end{equation}
and then apply \cref{thm:ergodic} and \cref{cor:weighted_ergodic} to $F$. 

\begin{remark}
For two-scale functions $f(\bx,\bs,\omega)$, the ergodic theorem is applied for each fixed $\bx\in\overline\Omega$. The resulting almost-sure statements may depend on $\bx$, because the null set may depend on $\bx$. Whenever finitely or countably many almost-sure statements are used simultaneously, we work on their intersection, which still has probability one. We use this convention below without introducing separate notation for each exceptional set.
\end{remark}

\begin{remark}
\label{rem:derived_stationary_fields}
If $f(\bs,\omega)$ and $g(\bs,\omega)$ are measurable, then
\begin{equation*}
    f(\bs,T_{\bk}\omega)
    g(\bs,T_{\bk}\omega)
\end{equation*}
is lattice stationary. A convolution also preserves lattice stationarity whenever it is well defined. 
\end{remark}

\section{Assumptions and key consequences}
\label{sec:assumptions}

In this section, we collect key assumptions on the random charge density field, the microscopic potential, and the finite-volume interactions. Following the list of assumptions, we state the key consequences that are used in the analysis of the continuum limit.

\subsection{Assumptions}

\begin{assumption}[\textbf{(A1)} Continuity and cell neutrality]
\label{asmp:neutrality}
The map $\bx\mapsto\bar\rho(\bx,\cdot,\cdot)$ is continuous from
$\overline\Omega$ into $L^2(Q\times\Xi)$. Moreover,
\begin{equation}
    \int_Q\bar\rho(\bx,\bs,\omega)\dd\bs=0
    \label{eq:cell_neutrality}
\end{equation}
for every $\bx\in\overline\Omega$ and almost every $\omega\in\Xi$.
\end{assumption}

Since $\overline\Omega$ is compact, \textbf{A1} implies
\begin{equation}
    \sup_{\bx\in\overline\Omega}
    \E\!\left[
    \int_Q|\bar\rho(\bx,\bs,\cdot)|^2\dd\bs
    \right]<\infty.
    \label{eq:rho_moment}
\end{equation}
Taking the expectation in \cref{eq:cell_neutrality} gives
\begin{equation*}
    \int_Q\E[\bar\rho(\bx,\bs,\cdot)]\dd\bs=0.
\end{equation*}
However, the assumption does not require $\E[\bar\rho(\bx,\bs,\cdot)]$ to vanish pointwise in $\bs$. Hence the mean cell dipole, and therefore $\bp(\bx)$ in \cref{eq:effective_polarization}, can be nonzero.

\begin{assumption}[\textbf{(A2)} Microscopic Coulomb potential]
\label{asmp:stationary}
For every $\bx\in\overline\Omega$ and almost every $\omega\in\Xi$, the
Coulomb convolution
\begin{equation}
    h(\bx,\by,\omega)=G\ast\rho(\bx,\cdot,\omega)(\by)
    \label{eq:a2_convolution}
\end{equation}
exists and its reference-cell representation $\bar{h}$ is $L^2$-integrable in the sense that
\begin{equation}
    \bar h(\bx,\cdot,\cdot)\in L^2(Q\times\Xi).
    \label{eq:a2_h_l2}
\end{equation}
Moreover, the expectation of the microscopic energy density, 
\begin{equation}
    \bx\longmapsto
    \E\!\left[
    \int_Q
    \bar\rho(\bx,\bs,\cdot)\bar h(\bx,\bs,\cdot)\dd\bs
    \right],
    \label{eq:local_observable_continuity}
\end{equation}
is continuous on $\overline\Omega$.
\end{assumption}

By the Cauchy--Schwarz inequality,
\begin{equation}
    \E\!\left[
    \int_Q
    |\bar\rho(\bx,\bs,\cdot)\bar h(\bx,\bs,\cdot)|
    \dd\bs
    \right]<\infty,
    \label{eq:local_observable_integrability}
\end{equation}
so \cref{thm:ergodic} applies to the microscopic energy density.

As we will see in the analysis of the continuum limit, the energies depend on the finite-volume potential $h_R$ defined in \cref{eq:truncated_potential_at_origin}. However, the limit of the energy more naturally involves the full-space potential $h$ and we find that
\begin{equation*}
    \nabla_{\boldsymbol\xi}
    \left[
    \frac{
    h_R(\bx,R\boldsymbol\xi,\omega)
    -
    h(\bx,R\boldsymbol\xi,\omega)
    }{R}
    \right]
\end{equation*}
has the weak limit $\nabla H_Q[\bp(\bx)]$; see \cref{prop:weak_finite_volume_field}. For the analysis of the energy, we require the convergence above to $\nabla H_Q[\bp(\bx)]$ to be strong in $L^2$ so that we can pass to the limit in the energy. This is taken as the third assumption. We also add to this assumption the convergence of the finite-volume potential $h_R$ in the translated region $R(\bk+Q)$, which is used to describe the interaction between two distinct material points in the nonlocal energy. We note that \cref{prop:weak_finite_volume_field} proves the weak convergence based on \textbf{A1} and \textbf{A2}, but the strong convergence is assumed in \textbf{A3}. Thus, \cref{prop:weak_finite_volume_field} is intended to provide the intuitive justification for the assumption.

\begin{assumption}[\textbf{(A3)} Strong finite-volume field convergence]
\label{asmp:local_finite_volume}
Let $h$ and $h_R$ be defined by
\cref{eq:microscopic_potential,eq:truncated_potential_at_origin}.
For every fixed $\bx\in\overline\Omega$ and almost every
$\omega\in\Xi$, as $R=2n+1\to\infty$,
\begin{equation}
    \nabla_{\boldsymbol\xi}
    \left[
    \frac{
    h_R(\bx,R\boldsymbol\xi,\omega)
    -
    h(\bx,R\boldsymbol\xi,\omega)
    }{R}
    \right]
    \longrightarrow
    \nabla H_Q[\bp(\bx)](\boldsymbol\xi)
    \label{eq:strong_finite_volume_field}
\end{equation}
strongly in $L^2(Q;\R^3)$. In addition, for every fixed
$\bk\in\Z^3\setminus\{\bm0\}$,
\begin{equation}
    \nabla_{\boldsymbol\xi}
    \left[
    \frac{
    h_R(\bx,R\boldsymbol\xi,\omega)
    }{R}
    \right]
    \longrightarrow
    \nabla H_Q[\bp(\bx)](\boldsymbol\xi)
    \label{eq:strong_translated_finite_volume_field}
\end{equation}
strongly in $L^2(\bk+Q;\R^3)$.
\end{assumption}

\subsection{Key consequences}

\begin{proposition}[Deterministic effective polarization]
\label{prop:deterministic_polarization}
Under \textbf{A1}, for every fixed $\bx\in\overline\Omega$,
\begin{equation}
    \bp_R(\bx,\omega)\longrightarrow\bp(\bx)
    \label{eq:deterministic_polarization}
\end{equation}
almost surely in $\omega$ and in $L^2(\Xi;\R^3)$.
\end{proposition}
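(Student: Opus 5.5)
We decompose the cube $Q_R$ into the unit cells indexed by $I_R$ and reduce the statement to \cref{thm:ergodic}. By \cref{eq:cell_union} and the stationary representation \cref{eq:stationary_representation}, writing $\by=\bk+\bs$ with $\bk\in I_R$, $\bs\in Q$, we have
\begin{equation*}
    \bp_R(\bx,\omega)
    =
    \frac{1}{R^3}\sum_{\bk\in I_R}\int_Q \bar\rho(\bx,\bs,T_{\bk}\omega)(\bk+\bs)\dd\bs
    =
    \frac{1}{R^3}\sum_{\bk\in I_R}\bk\, q(T_{\bk}\omega)
    +
    \frac{1}{R^3}\sum_{\bk\in I_R} \bd(T_{\bk}\omega),
\end{equation*}
where $q(\omega):=\int_Q\bar\rho(\bx,\bs,\omega)\dd\bs$ and $\bd(\omega):=\int_Q\bar\rho(\bx,\bs,\omega)\bs\dd\bs$ for fixed $\bx$.

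The first step is to remove the lattice-position term. By \textbf{A1}, $q=0$ almost surely. Since each $T_{\bk}$ is measure preserving, $q\circ T_{\bk}=0$ outside a null set $N_{\bk}$. The countable union $\bigcup_{\bk}N_{\bk}$ is still null, so for almost every $\omega$ the first sum vanishes identically for every $R$. This is the point at which the unbounded weight $\bk$ is disposed of. Without neutrality it would contribute a term of order $R$.

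The second step is to check integrability of $\bd$. By Cauchy--Schwarz, $|\bd(\omega)|^2\le \int_Q|\bs|^2\dd\bs\int_Q|\bar\rho(\bx,\bs,\omega)|^2\dd\bs$, so $\bd\in L^2(\Xi;\R^3)$ by \cref{eq:rho_moment}. We also need $\bd$ to be measurable in $\omega$, which follows from Fubini since $\bar\rho(\bx,\cdot,\cdot)\in L^2(Q\times\Xi)$.

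Applying \cref{thm:ergodic} componentwise to $\bd$ then gives almost-sure convergence of $\bp_R(\bx,\cdot)$ to $\E[\bd]$, together with convergence in $L^2(\Xi;\R^3)$. It remains to identify $\E[\bd]$ with $\bp(\bx)$ as defined in \cref{eq:effective_polarization}. On $Q$ we have $\by=\bs$ and $\rho(\bx,\bs,\omega)=\bar\rho(\bx,\bs,T_{\bm 0}\omega)=\bar\rho(\bx,\bs,\omega)$, so the two expressions coincide, and Fubini justifies exchanging the expectation with the cell integral.

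The argument is short. The only delicate point is the null-set bookkeeping in the first step: the neutrality identity must hold simultaneously for all translated cells, not just for $\bk=\bm 0$. Countability of $\Z^3$ handles this.
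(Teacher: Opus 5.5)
Your proposal is correct and matches the paper's proof: both decompose $Q_R$ into the cells indexed by $I_R$, use countability of $\Z^3$ and the measure-preserving action to get neutrality in every cell simultaneously (which kills the $\bk$-weighted term), and then apply \cref{thm:ergodic} to the $L^2$ cell-dipole observable $\int_Q \bs\,\bar\rho(\bx,\bs,\cdot)\dd\bs$. Your extra remarks on measurability via Fubini and on identifying $\E[\bd]$ with $\bp(\bx)$ fill in details the paper leaves implicit.
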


\begin{proof}
Because the lattice is countable and the action is measure preserving,
\cref{eq:cell_neutrality} holds simultaneously in all cells for almost every
realization. Using $Q_R=\bigcup_{\bk\in I_R}(\bk+Q)$,
\begin{equation}\label{eq:polarization_ergodic_average}
    \bp_R(\bx,\omega)
    =
    \frac{1}{R^3}
    \sum_{\bk\in I_R}
    \int_Q
    (\bk+\bs)\bar\rho(\bx,\bs,T_{\bk}\omega)\dd\bs 
    =
    \frac{1}{R^3}
    \sum_{\bk\in I_R}
    \int_Q
    \bs\,\bar\rho(\bx,\bs,T_{\bk}\omega)\dd\bs,
\end{equation}
where the term with $\bk$ vanishes by cell neutrality \textbf{A1}. The
cell-dipole observable belongs to $L^2(\Xi;\R^3)$ by \textbf{A1}. Applying
\cref{thm:ergodic} to $F(\omega) = \int_Q \bs\,\bar\rho(\bx,\bs,\omega)\dd\bs$ gives the stated almost-sure and
$L^2(\Xi;\R^3)$ convergence to $\bp(\bx)$.
\end{proof}

\begin{proposition}[Microscopic Poisson equation and field integrability]
\label{prop:microscopic_potential_properties}
Under \textbf{A1}--\textbf{A2}, $h=G\ast\rho$ satisfies
\begin{equation}
    -\Delta_{\by}h(\bx,\by,\omega)
    =
    \rho(\bx,\by,\omega)
    \qquad
    \text{in }\mathcal D'(\R^3)
    \label{eq:microscopic_poisson_problem}
\end{equation}
for every fixed $\bx$ and almost every $\omega$. Moreover,
\begin{equation}
    \E\!\left[
    \int_Q
    |\nabla_{\by}h(\bx,\bs,\cdot)|^2
    \dd\bs
    \right]
    <\infty.
    \label{eq:stationary_field_energy_bound}
\end{equation}
\end{proposition}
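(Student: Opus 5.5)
The plan has two parts. First I establish the distributional Poisson equation \cref{eq:microscopic_poisson_problem} for each fixed $\bx$ and each realization $\omega$ in the full-measure set where \textbf{A2} holds. Then I derive the field bound \cref{eq:stationary_field_energy_bound} from an interior elliptic (Caccioppoli) estimate on a neighbourhood of the reference cell, combined with lattice stationarity and the measure-preserving property of $\{T_{\bk}\}$.

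\emph{Poisson equation.} Fix $\varphi\in C_c^\infty(\R^3)$ and choose $M$ so large that $\supp\varphi$ lies well inside $Q_M$. Split the source as $\rho=\rho\chi_{Q_{M}}+\rho\chi_{\R^3\setminus Q_M}$, which gives $h=h^{\mathrm{near}}+h^{\mathrm{far}}$.

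For the near part, $\rho\chi_{Q_M}\in L^2$ has compact support and $G\in L^1_{\mathrm{loc}}$. The triple integral $\iint G(\by-\bz)|\rho(\bz)\chi_{Q_M}(\bz)||\Delta\varphi(\by)|$ is therefore finite. Fubini then gives $\int h^{\mathrm{near}}(-\Delta\varphi)=\int\rho\chi_{Q_M}\,(G\ast(-\Delta\varphi))=\int\rho\chi_{Q_M}\varphi=\int\rho\varphi$, using $G\ast(-\Delta\varphi)=\varphi$.

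For the far part, the kernel $\by\mapsto G(\by-\bz)$ is smooth and harmonic on a neighbourhood of $\supp\varphi$, uniformly for $\bz\notin Q_M$.

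\begin{itemize}
\item If the convolution in \textbf{A2} is absolutely convergent, differentiation under the integral shows that $h^{\mathrm{far}}$ is harmonic near $\supp\varphi$.
\item If the convolution is understood as a limit over exhausting cubes, $h^{\mathrm{far}}=\lim_R G\ast(\rho\chi_{Q_R\setminus Q_M})$, then each approximant is harmonic there. The limit is harmonic by Weyl's lemma, since $L^1_{\mathrm{loc}}$ limits of harmonic functions are harmonic.
\end{itemize}

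In either case $\int h^{\mathrm{far}}\Delta\varphi=0$, which yields \cref{eq:microscopic_poisson_problem}.

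This step is where I expect the main obstacle. The stationary density $\rho$ is not integrable on $\R^3$, so a naive global Fubini argument fails. The proof must fix precisely which mode of convergence of $G\ast\rho$ \textbf{A2} provides, and show that it passes to distributional limits. I would take the cube-exhaustion interpretation, which is consistent with $h_R$ in \cref{eq:truncated_potential_at_origin}, and argue as above.

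\emph{Field bound.} By \textbf{A2}, $h(\bx,\cdot,\omega)\in L^2_{\mathrm{loc}}$, since each translated cell carries $\bar h(\bx,\cdot,T_{\bk}\omega)$. Also $\Delta h=-\rho\in L^2_{\mathrm{loc}}$. Interior elliptic regularity therefore gives $h\in H^2_{\mathrm{loc}}\subset H^1_{\mathrm{loc}}$.

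Now take a cutoff $\eta\in C_c^\infty(2Q)$ with $\eta=1$ on $Q$. Testing the equation with $\eta^2h$ and applying Young's inequality gives
\begin{equation*}
\int_Q|\nabla_{\by}h|^2\le \int\eta^2|\nabla h|^2\le C\int_{2Q}\left(|h|^2+|\rho|^2\right).
\end{equation*}
The set $2Q$ is covered by the $27$ cells $\bk+Q$ with $\bk\in\{-1,0,1\}^3$, so the right-hand side is at most $C\sum_{\bk}\int_Q\big(|\bar h(\bx,\bs,T_{\bk}\omega)|^2+|\bar\rho(\bx,\bs,T_{\bk}\omega)|^2\big)\dd\bs$.

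Taking expectations and using $\Pp\circ T_{\bk}^{-1}=\Pp$, each term equals $\E\int_Q|\bar h|^2$ or $\E\int_Q|\bar\rho|^2$. These are finite by \cref{eq:a2_h_l2} and \cref{eq:rho_moment}, which proves \cref{eq:stationary_field_energy_bound}.

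Measurability of $\omega\mapsto\int_Q|\nabla h|^2$ should follow from measurability of $h$ as a distributional-derivative limit. I would remark on this rather than belabour it.
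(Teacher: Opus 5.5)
Your proof follows the paper's route: the Poisson equation from $-\Delta G=\delta_{\bm0}$ (your near/far splitting is a more careful version of the paper's one-line justification), then $H^2_{\mathrm{loc}}$ interior regularity, a Caccioppoli estimate on a neighbourhood of $\overline Q$, a cover by finitely many lattice cells, and the measure-preserving property of $T_{\bk}$. One caution about the reading you favour: the cube-exhaustion limit of $h_R$ is not the lattice-stationary $h$ the paper uses---in the periodic example it equals $h_{\mathrm{per}}+\tfrac13\bp(\bx)\cdot\by$ up to a constant, so when $\bp(\bx)\neq\bm0$ it is incompatible with the stationarity you invoke in the field bound and with \textbf{A3}---but the discrepancy is affine, hence harmonic, so the Poisson equation carries over to the stationary $h$.
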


\begin{proposition}[Weak finite-volume field limits]
\label{prop:weak_finite_volume_field}
Let $h$ and $h_R$ be defined by
\cref{eq:microscopic_potential,eq:truncated_potential_at_origin}.
Under \textbf{A1}--\textbf{A2}, for every fixed
$\bx\in\overline\Omega$ and almost every $\omega\in\Xi$,
\begin{equation}
    \nabla_{\boldsymbol\xi}
    \left[
    \frac{
    h_R(\bx,R\boldsymbol\xi,\omega)
    }{R}
    \right]
    \rightharpoonup
    \nabla H_Q[\bp(\bx)](\boldsymbol\xi)
    \label{eq:weak_truncated_finite_volume_field}
\end{equation}
weakly in $L^2(\R^3;\R^3)$ as $R=2n+1\to\infty$. Moreover,
\begin{equation}
    \nabla_{\boldsymbol\xi}
    \left[
    \frac{
    h_R(\bx,R\boldsymbol\xi,\omega)
    -
    h(\bx,R\boldsymbol\xi,\omega)
    }{R}
    \right]
    \rightharpoonup
    \nabla H_Q[\bp(\bx)](\boldsymbol\xi)
    \label{eq:weak_finite_volume_field}
\end{equation}
weakly in $L^2(Q;\R^3)$.
\end{proposition}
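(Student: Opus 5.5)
Set $u_R(\boldsymbol\xi):=h_R(\bx,R\boldsymbol\xi,\omega)/R$ for fixed $\bx$ and a realization in the full-probability set where \textbf{A1}, \textbf{A2}, \cref{prop:deterministic_polarization} and \cref{prop:microscopic_potential_properties} hold. The plan has four steps: bound $\nabla u_R$ uniformly in $L^2(\R^3)$, identify every weak limit through the distributional Poisson equation and \cref{prop:deterministic_polarization}, then show that the stationary part $h(\bx,R\boldsymbol\xi,\omega)/R$ has gradient tending weakly to zero on $Q$.

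\textbf{Step 1: rescaled Poisson equation.} By \cref{eq:truncated_potential_at_origin}, $-\Delta_{\by}h_R=\rho\chi_{Q_R}$. After scaling,
\begin{equation*}
-\Delta_{\boldsymbol\xi}u_R=R\,\rho(\bx,R\boldsymbol\xi,\omega)\chi_Q(\boldsymbol\xi)=:f_R .
\end{equation*}
Let $\varphi\in C_c^\infty(\R^3)$ and write $\by=R\boldsymbol\xi$. Then
\begin{equation*}
\langle f_R,\varphi\rangle=R^{-2}\int_{Q_R}\rho(\bx,\by)\varphi(\by/R)\dd\by .
\end{equation*}
Expand $\varphi$ on each cell $\bk+Q$ about $\bk/R$. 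Cell neutrality \textbf{A1} removes the zeroth-order term. The first-order term is
\begin{equation*}
R^{-3}\sum_{\bk\in I_R}\nabla\varphi(\bk/R)\cdot\Big(\int_Q \bs\,\bar\rho(\bx,\bs,T_{\bk}\omega)\dd\bs\Big).
\end{equation*}
The remainder is $O(R^{-1})\,R^{-3}\sum_{\bk}\|\bar\rho(T_{\bk}\omega)\|_{L^1(Q)}$, which tends to zero by \cref{thm:ergodic}. The main term converges to $\int_Q\nabla\varphi\cdot\bp(\bx)$ by \cref{cor:weighted_ergodic} applied to the cell-dipole observable; continuity of $\nabla\varphi$ on $\overline Q$ is enough here. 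Hence
\begin{equation*}
f_R\to-\nabla\cdot(\bp(\bx)\chi_Q)\quad\text{in }\mathcal D'(\R^3).
\end{equation*}

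\textbf{Step 2: uniform energy bound.} I would test against $u_R$ itself:
\begin{equation*}
\|\nabla u_R\|^2_{L^2(\R^3)}=\int f_Ru_R=R^{-3}\int_{Q_R}\int_{Q_R}G(\by-\bz)\rho(\by)\rho(\bz)\dd\by\dd\bz,
\end{equation*}
which is the rescaled finite self-energy. This is the main obstacle, because a direct bound fails. The route I would try is an $\dot H^{-1}$ bound on $\rho\chi_{Q_R}$ from the stationary field. Since $\rho=-\Delta h=\nabla\cdot(-\nabla h)$ on $\R^3$, set $\bm g:=-\nabla h$. Let $\psi$ be a cutoff equal to $1$ on $Q_{R-1}$ and supported in $Q_R$, so that $\rho\chi_{Q_R}$ differs from $\nabla\cdot(\psi\bm g)$ by boundary-layer terms. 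Then
\begin{equation*}
\|\nabla\cdot(\psi\bm g)\|_{\dot H^{-1}}\le\|\bm g\|_{L^2(Q_R)},
\end{equation*}
and $\|\bm g\|_{L^2(Q_R)}^2=O(R^3)$ a.s. by \cref{thm:ergodic} and \cref{eq:stationary_field_energy_bound}. The rescaling contributes $R^{-3}$, so the bound is $O(1)$.

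For the boundary-layer terms: on the layer $Q_R\setminus Q_{R-1}$, of volume $O(R^2)$, neutrality lets me replace $\rho\chi_{Q_R}$ cellwise by the divergence of a cell-supported field with $L^2$ norm controlled by $\|\bar\rho(T_{\bk}\omega)\|_{L^2}$; a Bogovskii-type construction on $Q$ does this. The remaining error is $\bm g\cdot\nabla\psi$, and I would arrange its cellwise neutralization similarly. Boundary averages of stationary quantities need some care almost surely. I would use that for the dyadic-odd sequence the boundary cells form a fixed fraction of a difference of ergodic sums, $I_R\setminus I_{R-2}$, each converging. This yields $\sup_R\|\nabla u_R\|_{L^2}<\infty$.

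\textbf{Step 3: identification.} By Step 2, any subsequence of $\nabla u_R$ has a weakly convergent further subsequence in $L^2(\R^3)$, and its limit is a gradient $\nabla u$ with $u\in\dot H^1$. By Step 1, $-\Delta u=-\nabla\cdot(\bp\chi_Q)$. Uniqueness in $\dot H^1$ up to constants gives $\nabla u=\nabla H_Q[\bp(\bx)]$, so the whole sequence converges, which proves \cref{eq:weak_truncated_finite_volume_field}.

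\textbf{Step 4: stationary part.} It remains to show
\begin{equation*}
\nabla_{\boldsymbol\xi}\big[h(\bx,R\boldsymbol\xi,\omega)/R\big]=(\nabla_{\by}h)(\bx,R\boldsymbol\xi,\omega)\rightharpoonup0\quad\text{in }L^2(Q).
\end{equation*}
Its $L^2(Q)$ norm squared is $R^{-3}\int_{Q_R}|\nabla h|^2$, which is bounded by \cref{thm:ergodic}. For a test function $a\in C(\overline Q)$, the pairing reduces, up to an error controlled by uniform continuity, to
\begin{equation*}
R^{-3}\sum_{\bk}a(\bk/R)\int_Q\nabla h(\bx,\bs,T_{\bk}\omega)\dd\bs .
\end{equation*}
This converges to $\int_Q a\cdot\E[\int_Q\nabla h]$ by \cref{cor:weighted_ergodic}. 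I then need $\E[\int_Q\nabla_{\by}h]=0$. To get it, write $\int_Q\nabla h=\int_{\partial Q}h\,\bn$. Each face integral is a difference of $\bar h$ over opposite faces of neighbouring cells, so stationarity and measure preservation make its expectation vanish; averaging $\bar h$ over parallel slices and using $\bar h\in L^2$ from \textbf{A2} should make this rigorous. Density of continuous functions in $L^2(Q)$, together with the norm bound, gives the weak limit zero. Subtracting this from \cref{eq:weak_truncated_finite_volume_field} restricted to $Q$ yields \cref{eq:weak_finite_volume_field}.
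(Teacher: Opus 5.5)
Your Steps 1 and 3 are essentially the paper's argument. Step 2, which you rightly call the crux, has a genuine gap.

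\emph{The layer control is not available.} The layer $I_R\setminus I_{R-2}$ contains about $6R^2$ cells, which is a vanishing rather than a fixed fraction of $I_R$. Applying \cref{thm:ergodic} to $I_R$ and to $I_{R-2}$ only shows that the layer sum of a stationary integrable observable is $o(R^3)$ almost surely, not $O(R^2)$. The layers are disjoint sets of new cells, and the bulk ergodic theorem says nothing about them at scale $R^2$.

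\emph{Why $o(R^3)$ is not enough here.} An $o(R^3)$ bound would suffice for cellwise-neutral layer charges, since their squared $\dot H^{-1}$ norm is at most a constant times the sum of their squared cell $L^2$ norms. Your remainder is not of that type. With $\bm g=-\nabla h$ one has $\rho\chi_{Q_R}=\nabla\cdot(\chi_{Q_R}\bm g)+(\bm g\cdot\bn)\,\mathcal H^2|_{\partial Q_R}$. The smooth cutoff only spreads this charge sheet over the layer, so each layer cell carries a net charge equal to a flux of $\nabla h$ through its faces. The sheet has charge of order one per unit area over an area of order $R^2$. Its Coulomb energy is therefore of the same order $R^3$ as the quantity you are bounding. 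In fact, after rescaling it converges to the polarization surface charge $(\bp\cdot\bn_Q)\,\mathcal H^2|_{\partial Q}$ that generates $H_Q[\bp]$, so it is leading order, not an error. Bounding it would need almost-sure $O(R^2)$ control of face sums of traces of $\nabla h$, which you do not have. Estimating the bulk and sheet pieces separately also throws away the cancellation between them. ``Arranging the cellwise neutralization similarly'' does not resolve this.

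\emph{The fix.} Use the construction you already invoke for the layer, but apply it to every cell. Cell neutrality makes the Neumann problem on $Q$ solvable (a Bogovskii field works equally well). This gives $\bar\bP$ with $-\nabla_{\bs}\cdot\bar\bP=\bar\rho$, $\bar\bP\cdot\bn_Q=0$ and $\|\bar\bP\|_{L^2(Q)}\le C\|\bar\rho\|_{L^2(Q)}$, as in \cref{lem:cell_polarization_field}. Since $Q_R$ is a union of complete cells and the normal traces vanish on every face, the cellwise extension satisfies $\rho\chi_{Q_R}=-\nabla_{\by}\cdot(\bP\chi_{Q_R})$ exactly. Hence $-\Delta u_R=-\nabla_{\boldsymbol\xi}\cdot[\bP(R\boldsymbol\xi)\chi_Q]$, and testing with $u_R$ gives
\[
\|\nabla u_R\|_{L^2(\R^3)}^2\le\frac{1}{R^3}\sum_{\bk\in I_R}\|\bar\bP(\bx,\cdot,T_{\bk}\omega)\|_{L^2(Q)}^2 .
\]
The right-hand side is bounded almost surely by \cref{thm:ergodic}. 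This is the paper's proof, and no stationary field, cutoff, or boundary layer is needed.

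\emph{Step 4.} Your Step 4 is correct but longer than necessary. The paper observes that $\|h(R\cdot)/R\|_{L^2(Q)}^2=R^{-2}\cdot R^{-3}\int_{Q_R}|h|^2\dd\by\to0$. Pairing with $\boldsymbol\varphi\in C_c^\infty(Q;\R^3)$ and integrating by parts then shows that every weak limit of the bounded gradients is $\bm0$. This avoids both the face-trace identity $\E[\int_Q\nabla h]=\bm0$ and the weighted ergodic theorem.
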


\cref{prop:microscopic_potential_properties,prop:weak_finite_volume_field}
are proved in \cref{app:weak_potential}.
\cref{prop:weak_finite_volume_field} identifies the weak limits of the
finite-volume fields that enter the local and nonlocal energies, while
\textbf{A3} assumes the corresponding strong convergences.

\begin{remark}[Admissibility of \textbf{A2}]
For a deterministic periodic neutral charge density, \textbf{A2} follows directly from the Fourier representation, as shown in \cref{sec:periodic_example}. For random perturbations, the existence of the lattice-stationary convolution depends on the perturbation law; it is established for the model in \cref{sec:random_example}.
\end{remark}

\section{Two-scale limit of the electrostatic energy}
\label{sec:continuum_limit}

The two-scale limit is taken in two steps. First, $\ell/\veps\to0$ with $\veps$ fixed. As mentioned in \cref{sec:setup}, the sequence is $\ell=\veps/(2n+1)$ and $R=\Repsl=\veps/\ell=2n+1$ throughout the article. Second, $\veps\to0$. The limits below are understood in this order: for each fixed $\veps$, the inner almost-sure limit is first identified with a deterministic quantity, and the ordinary limit $\veps\to0$ is then taken for these deterministic quantities. The following theorem is the main result of this work, and its proof is given in the two subsections below.

\begin{theorem}[Continuum limit of the electrostatic energy]
\label{thm:main}
Let \textbf{A1}--\textbf{A3} hold. Let $\bp_0$ denote the extension of
$\bp$ by zero outside $\Omega$, and let
$\phi\in\dot H^1(\R^3)$, unique up to an additive constant, satisfy
\begin{equation}
    \int_{\R^3}\nabla\phi\cdot\nabla\psi\dd\bx
    =
    \int_{\R^3}\bp_0\cdot\nabla\psi\dd\bx
    \qquad
    \text{for every }\psi\in\dot H^1(\R^3).
    \label{eq:macroscopic_poisson_solution}
\end{equation}
Then, for almost every $\omega\in\Xi$,
\begin{equation}
    \lim_{\veps\downarrow0}
    \lim_{\ell/\veps\downarrow0}
    \cE_{\veps,\ell}^{\mathrm{loc}}(\omega)
    =
    \frac12
    \int_\Omega
    \E\!\left[
    \int_Q
    \bar\rho(\bx,\bs,\cdot)\bar h(\bx,\bs,\cdot)
    \dd\bs
    \right]\dd\bx
    +
    \frac16
    \int_\Omega|\bp|^2\dd\bx,
    \label{eq:limiting_local_energy}
\end{equation}
and
\begin{equation}
    \lim_{\veps\downarrow0}
    \lim_{\ell/\veps\downarrow0}
    \cE_{\veps,\ell}^{\mathrm{nloc}}(\omega)
    =
    \frac12
    \int_{\R^3}|\nabla\phi|^2\dd\bx
    -
    \frac16
    \int_\Omega|\bp|^2\dd\bx.
    \label{eq:limiting_nonlocal_energy}
\end{equation}
Consequently,
\begin{equation}
    \lim_{\veps\downarrow0}
    \lim_{\ell/\veps\downarrow0}
    \cE_{\veps,\ell}(\omega)
    ={}
    \frac12
    \int_\Omega
    \E\!\left[
    \int_Q
    \bar\rho(\bx,\bs,\cdot)\bar h(\bx,\bs,\cdot)
    \dd\bs
    \right]\dd\bx +
    \frac12
    \int_{\R^3}|\nabla\phi|^2\dd\bx.
\label{eq:main_energy_limit}
\end{equation}
All three limiting energies are deterministic.
\end{theorem}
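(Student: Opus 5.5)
The plan is to split $\cE_{\veps,\ell}$ as in \cref{eq:energy_split} and treat each part in two stages. At fixed $\veps$, the finite set $\Omega_\veps$ lets us pass $R=2n+1\to\infty$ term by term on the intersection of finitely many full-measure sets. We then send $\veps\to0$ in the resulting deterministic Riemann-type sums.

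\textbf{Local energy.} Substituting $\br=\ell\by$, \cref{eq:scaled_charge_density}, $G(\ell\bz)=G(\bz)/\ell$ and $\ell^3=\veps^3/R^3$, each material-point term becomes
\[
\frac{\veps^3}{2}\,\frac{1}{R^3}\int_{Q_R}\rho(\bx,\by,\omega)\,h_R(\bx,\by,\omega)\dd\by .
\]
We write $h_R=h+(h_R-h)$.

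For the $h$ part, \cref{eq:cell_union} and lattice stationarity of $h$ turn the average into
\[
\frac1{R^3}\sum_{\bk\in I_R}\int_Q\bar\rho\,\bar h(\bx,\bs,T_\bk\omega)\dd\bs .
\]
By \cref{eq:local_observable_integrability} and \cref{thm:ergodic}, this converges to $\E[\int_Q\bar\rho\bar h]$.

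For the correction, we rescale $\by=R\boldsymbol\xi$ and set $W_R(\boldsymbol\xi)=[h_R-h](\bx,R\boldsymbol\xi)/R$, so the term equals $\int_Q R\rho(\bx,R\boldsymbol\xi)\,W_R\dd\boldsymbol\xi$. The key device is to use cell neutrality \textbf{A1} to write the charge of the reference cell as a divergence. We solve $-\nabla_\bs\cdot\bar\Pi(\bx,\bs,\omega)=\bar\rho(\bx,\bs,\omega)$ in $Q$ with $\bar\Pi$ vanishing on $\partial Q$, for instance with a Bogovskii-type operator, which gives $\bar\Pi\in L^2(Q\times\Xi)$. Note that $\int_Q\bar\Pi\dd\bs=\int_Q\bs\,\bar\rho\dd\bs$.

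Extending $\bar\Pi$ stationarily gives $R\rho(\bx,R\boldsymbol\xi)=-\nabla_{\boldsymbol\xi}\cdot\Pi(\bx,R\boldsymbol\xi,\omega)$ on $Q$. Since $Q$ is tiled exactly by the scaled cells ($R$ odd), there is no boundary term, and
\[
\int_Q R\rho\, W_R=\int_Q \Pi(\bx,R\boldsymbol\xi,\omega)\cdot\nabla W_R\dd\boldsymbol\xi .
\]
By \cref{cor:weighted_ergodic}, tested against continuous functions and combined with the uniform $L^2$ bound, $\Pi(\bx,R\,\cdot\,,\omega)\rightharpoonup\bp(\bx)$ weakly in $L^2(Q)$. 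By \textbf{A3}, $\nabla W_R\to\nabla H_Q[\bp(\bx)]$ strongly. The weak–strong pairing therefore gives $\int_Q\bp\cdot\nabla H_Q[\bp]$. This is the depolarization term $\pm\frac13|\bp(\bx)|^2$ (inside the cube $\nabla H_Q[\bq]=-\bq/3$, and the sign has to be tracked carefully), and after the factor $\frac12$ it should give $\frac16|\bp(\bx)|^2$.

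At fixed $\veps$ the limit is thus $\veps^3\sum_{\bx\in\Omega_\veps}f(\bx)$ with $f(\bx)=\frac12\E[\int_Q\bar\rho\bar h]+\frac16|\bp(\bx)|^2$. This $f$ is continuous by \textbf{A2} and by \textbf{A1}, because $\bp$ is continuous. Since the boundary layer has vanishing volume, the Riemann sums converge to \cref{eq:limiting_local_energy}.

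\textbf{Nonlocal energy.} For $\bx'=\bx-\veps\bk$ the same rescaling gives $G(\bx-\bx'+\ell(\by-\bz))=G(R\bk+\by-\bz)/\ell$. Each pair term therefore equals $\frac{\veps^3}2B_{R,\bk}$ from \cref{eq:finite_volume_pair_interaction}, up to the orientation of $\bk$. We use the same $\Pi$-representation for $\rho(\bx',\cdot)$ on the translated cube $\bk+Q$, together with the second part of \textbf{A3} (the field of the cube centred at $0$, restricted to $\bk+Q$). The weak–strong pairing then gives
\[
B_{R,\bk}\to\int_{\bk+Q}\bp(\bx')\cdot\nabla H_Q[\bp(\bx)]=B_\bk(\bp(\bx),\bp(\bx')),
\]
the last equality by integrating by parts in \cref{eq:polarized_cell_pair_interaction}.

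Define the piecewise-constant field $\bp_\veps:=\sum_{\bx\in\Omega_\veps}\bp(\bx)\chi_{\bx+\Qeps}$ and its potential $\phi_\veps$. Scaling the cubes by $\veps$ shows that the $\veps$-level limit equals
\[
\tfrac12\int_{\R^3}|\nabla\phi_\veps|^2-\tfrac{\veps^3}2\sum_{\bx\in\Omega_\veps}\int|\nabla H_Q[\bp(\bx)]|^2 .
\]
The subtracted self-interactions are $\frac{\veps^3}{2}\cdot\frac13|\bp(\bx)|^2$ each. Since $\bp$ is continuous, $\bp_\veps\to\bp_0$ in $L^2(\R^3)$. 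By the Lax–Milgram stability of \cref{eq:macroscopic_poisson_solution}, $\nabla\phi_\veps\to\nabla\phi$ in $L^2$, and the self-term sum converges to $\frac16\int_\Omega|\bp|^2$. This yields \cref{eq:limiting_nonlocal_energy}.

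Adding the two limits, the $\pm\frac16$ terms cancel and we obtain \cref{eq:main_energy_limit}. Determinism follows because every limit is expressed through $\E[\cdot]$ and $\bp$.

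\textbf{Main obstacle.} The delicate step is identifying the boundary correction $\int_Q R\rho\,W_R$. We only have strong convergence of $\nabla W_R$ on $Q$, while $R\rho(R\cdot)$ converges only weakly, so a direct integration by parts over $\R^3$ would produce uncontrolled boundary terms. The cell-wise divergence representation $\Pi$ is the first thing I would try, since it confines everything to $Q$ (respectively $\bk+Q$) and reduces the problem to a weak–strong product. The remaining sign and constant bookkeeping, namely that the cube's depolarization factor is $\frac13$, must be checked against \cref{eq:cube_depolarization_potential}.
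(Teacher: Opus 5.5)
Your proposal follows the paper's proof essentially step for step. The shared structure is:
\begin{itemize}
\item the split $h_R=h+(h_R-h)$, with the ergodic theorem applied to the stationary part;
\item a cellwise divergence potential for the neutral cell charge with vanishing normal flux, so that no boundary terms arise on $Q$ or $\bk+Q$;
\item weak convergence of its rescaled stationary extension to $\bp(\bx)$ via \cref{cor:weighted_ergodic};
\item the weak--strong pairing with the two parts of \textbf{A3};
\item the decomposition of $\frac12\int_{\R^3}|\nabla\phi_\veps|^2$ into $\cE^{\mathrm{cell}}_\veps$ plus cubic self-terms.
\end{itemize}
Using a Bogovskii field in $H^1_0(Q)$ instead of the paper's Neumann gradient $\bar\bP=\nabla_{\bs}\psi$ changes nothing. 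Only four properties are used: $-\nabla\cdot\bar\Pi=\bar\rho$, zero normal trace, the $L^2$ bound, and $\int_Q\bar\Pi=\int_Q\bs\,\bar\rho$.

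The one step you have not established is the depolarization constant, and the heuristic you give for it is wrong on two counts. First, the field of a uniformly polarized cube is \emph{not} uniform inside $Q$; a uniform interior field is special to ellipsoids, so no pointwise value is available. Second, the sign is opposite to what you wrote. Taking $\psi=H_Q[\bq]$ in the weak equation for $H_Q[\bq]$ gives
$\int_Q\bq\cdot\nabla H_Q[\bq]\dd\boldsymbol\xi=\int_{\R^3}|\nabla H_Q[\bq]|^2\dd\boldsymbol\xi\ge0$.

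What you actually need is only the cell average $\bD_Q\bq:=\int_Q\nabla H_Q[\bq]\dd\boldsymbol\xi=\bq/3$. This follows in three steps:
\begin{itemize}
\item the reflection symmetries of $Q$ make $\bD_Q$ diagonal;
\item permutation of the axes makes it a multiple of $\bI$;
\item its trace equals $1$, because $H_Q[\be_j]=-\partial_j(G\ast\chi_Q)$ gives $\sum_j\partial_jH_Q[\be_j]=-\Delta(G\ast\chi_Q)=\chi_Q$.
\end{itemize}
The paper does the same computation via Plancherel. This single identity supplies both the local term $\frac16|\bp(\bx)|^2$ and the nonlocal self-term $\frac{\veps^3}{6}|\bp(\bx)|^2$.

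The total-energy identity \cref{eq:main_energy_limit} does not depend on the value of this constant. The local correction and the subtracted self-term both equal $\frac12\int_{\R^3}|\nabla H_Q[\bp(\bx)]|^2$, so they cancel exactly regardless of its value.
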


\begin{proposition}[Dipole-kernel representation of the nonlocal energy]
\label{prop:dipole_representation}
The nonlocal energy in \cref{eq:limiting_nonlocal_energy} satisfies
\begin{equation}
\begin{aligned}
    \frac12
    \int_{\R^3}|\nabla\phi|^2\dd\bx
    -
    \frac16
    \int_\Omega|\bp|^2\dd\bx
    =
    \frac12
    \pv
    \int_\Omega\int_\Omega
    \bK(\bx-\bx')
    :
    \bp(\bx)\otimes\bp(\bx')
    \dd\bx'\dd\bx.
\end{aligned}
\label{eq:nonlocal_dipole_representation}
\end{equation}
\end{proposition}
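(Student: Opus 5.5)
The plan is to compute $\nabla\phi$ explicitly as a singular integral of $\bp_0$ and then evaluate the field energy, keeping track of the delta-function part of $-\nabla^2 G$.

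\textbf{Step 1: representation of the field.} Extend $\bp$ from $\overline\Omega$ by zero to obtain $\bp_0\in L^2(\R^3;\R^3)$. This is possible because $\bp$ is continuous on $\overline\Omega$ by \textbf{A1}, since it is the expectation of an $L^2$-continuous cell observable. Then $\phi=-\nabla G\ast\bp_0=-\operatorname{div}(G\ast\bp_0)$ solves \cref{eq:macroscopic_poisson_solution}, as one checks on Fourier symbols: $\hat\phi(\bk)=-i\bk\cdot\hat\bp_0/|\bk|^2$ up to convention. Hence
\[
\nabla\phi=-\nabla\nabla(G\ast\bp_0),
\qquad
\widehat{\nabla\phi}(\bk)=\frac{\bk\otimes\bk}{|\bk|^2}\hat\bp_0(\bk).
\]
By Plancherel,
\[
\int_{\R^3}|\nabla\phi|^2\dd\bx=\int_{\R^3}\nabla\phi\cdot\bp_0\dd\bx=(2\pi)^{-3}\int\frac{|\bk\cdot\hat\bp_0|^2}{|\bk|^2}\dd\bk,
\]
where the first equality uses $\psi=\phi$ in the weak formulation.

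\textbf{Step 2: Calder\'on--Zygmund decomposition.} Apply the classical decomposition of second derivatives of the Newtonian potential. For $f\in L^2$ with compact support, and for any ball $B_\delta$,
\[
-\partial_i\partial_j(G\ast f)(\bx)=\pv\int K_{ij}(\bx-\bx')f(\bx')\dd\bx'+\tfrac13\delta_{ij}f(\bx).
\]
The constant $\tfrac13\delta_{ij}$ arises from $\int_{\partial B_1}\partial_jG\,n_i\,dS=-\delta_{ij}/3$, using spherical symmetry of the excision. Equivalently, the symbol $k_ik_j/|k|^2$ splits as $\tfrac13\delta_{ij}$ plus a mean-zero homogeneous symbol of degree zero, which is the Fourier transform of the principal-value kernel $\bK$. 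Combining this with Step 1 gives
\[
\nabla\phi(\bx)=\pv\int_\Omega\bK(\bx-\bx')\bp(\bx')\dd\bx'+\tfrac13\bp_0(\bx)
\]
in $L^2(\R^3)$. The principal-value operator is bounded on $L^2$ because $\bK$ is a Calder\'on--Zygmund kernel with vanishing spherical mean.

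\textbf{Step 3: pairing.} Pair the identity from Step 2 with $\bp_0$:
\[
\int|\nabla\phi|^2=\int_\Omega\bp\cdot\pv\!\int_\Omega\bK(\bx-\bx')\bp(\bx')\dd\bx'\dd\bx+\tfrac13\int_\Omega|\bp|^2.
\]
Multiplying by $\tfrac12$ and moving the last term to the left-hand side yields \cref{eq:nonlocal_dipole_representation}. Here the iterated principal-value integral in the statement is interpreted as the $L^2$ pairing of $\bp$ with the truncated integrals, which converge in $L^2$ as the excision radius tends to zero.

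\textbf{Main obstacle.} The delicate step is justifying the $\tfrac13$ term and the $L^2$ convergence of the truncated integrals for merely continuous, bounded $\bp_0$ with a jump across the Lipschitz boundary $\partial\Omega$. I would first prove the identity for $\bp_0\in C_c^\infty$. There, one integrates by parts on $\R^3\setminus B_\delta(\bx)$, and the sphere term produces $\tfrac13\bp(\bx)$ as $\delta\to0$. The result then extends by density in $L^2$, using the uniform $L^2$ boundedness of the truncated Calder\'on--Zygmund operators together with the continuity of both sides in $\bp_0$.
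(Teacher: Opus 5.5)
Your proposal is correct and follows the paper's proof. You identify $\nabla\phi=(-\nabla^2G)\ast\bp_0$ from the Fourier symbols, split $-\nabla^2G=\pv\bK+\frac13\bI\delta_{\bm0}$, use Calder\'on--Zygmund $L^2$ convergence of the truncated integrals, and pair with $\bp_0$ via $\psi=\phi$. The only difference is how you get the $\frac13$: you use explicit ball excision for smooth fields followed by density, whereas the paper uses a support, homogeneity and rotation-invariance argument with a trace computation (\cref{eq:dipole_distribution_identity}).
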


The proofs of \cref{thm:main} and \cref{prop:dipole_representation} are given in the next two subsections. 

\subsection{Analysis of the local energy}
\label{sec:local_limit}

Changing variables $\br=\ell\by$ and $\br'=\ell\bz$ in
\cref{eq:finite_local_energy}, and using
$G(\ell(\by-\bz))=\ell^{-1}G(\by-\bz)$, gives
\begin{equation}
    \cE_{\veps,\ell}^{\mathrm{loc}}(\omega)
    =
    \sum_{\bx\in\Omega_\veps}
    \veps^3 e_R^{\mathrm{loc}}(\bx,\omega),
    \qquad R=\frac{\veps}{\ell},
    \label{eq:local_riemann_sum}
\end{equation}
where
\begin{equation}
\begin{aligned}
    e_R^{\mathrm{loc}}(\bx,\omega)
    &:=
    \frac{1}{2R^3}
    \int_{Q_R}\int_{Q_R}
    G(\by-\bz)
    \rho(\bx,\by,\omega)
    \rho(\bx,\bz,\omega)
    \dd\bz\dd\by\\
    &=
    \frac{1}{2R^3}
    \int_{Q_R}
    \rho(\bx,\by,\omega)h_R(\bx,\by,\omega)\dd\by.
\end{aligned}
\label{eq:finite_local_density}
\end{equation}
Using $h_R = h + (h_R - h)$, we decompose the local energy density into two terms:
\begin{equation}
\begin{aligned}
    e_R^{\mathrm{loc}}(\bx,\omega)
    ={}&
    \frac{1}{2R^3}
    \int_{Q_R}\rho(\bx,\by,\omega)h(\bx,\by,\omega)\dd\by\\
    &+
    \frac{1}{2R^3}
    \int_{Q_R}
    \rho(\bx,\by,\omega)
    [h_R(\bx,\by,\omega)-h(\bx,\by,\omega)]
    \dd\by.
\end{aligned}
\label{eq:local_density_decomposition}
\end{equation}

Using the fact that $Q_R$ is the union of the cells $\bk+Q$ for $\bk\in I_R$, we can rewrite the first term in \cref{eq:local_density_decomposition} as a sum of stationary cell averages:
\begin{equation}
\begin{aligned}
    \frac{1}{R^3}\int_{Q_R}\rho h\dd\by
    =
    \frac{1}{R^3}
    \sum_{\bk\in I_R}
    \int_Q
    \bar\rho(\bx,\bs,T_{\bk}\omega)
    \bar h(\bx,\bs,T_{\bk}\omega)\dd\bs,
\end{aligned}
\label{eq:stationary_local_observable}
\end{equation}
where we used the relations between $\rho$ and $\bar\rho$ in \cref{eq:stationary_representation}, and between $h$ and $\bar h$ in \cref{eq:reference_cell_potential}. By \cref{thm:ergodic}, we have
\begin{equation}
    \frac{1}{R^3}\int_{Q_R}\rho h\dd\by
    \longrightarrow
    \E\!\left[
    \int_Q
    \bar\rho(\bx,\bs,\cdot)\bar h(\bx,\bs,\cdot)\dd\bs
    \right].
    \label{eq:stationary_local_limit}
\end{equation}

To treat the second term in \cref{eq:local_density_decomposition}, we need the following two lemmas. 

\begin{lemma}[Cell polarization field]
\label{lem:cell_polarization_field}
Under \textbf{A1}, for every fixed $\bx$ and almost every $\omega$, there exists
$\bar\bP(\bx,\cdot,\omega)\in L^2(Q;\R^3)$ such that
\begin{equation}
    -\nabla_{\bs}\cdot\bar\bP
    =
    \bar\rho
    \quad\text{in }Q,
    \qquad
    \bar\bP\cdot\bn_Q=0
    \quad\text{on }\partial Q,
    \label{eq:cell_polarization_problem}
\end{equation}
and
\begin{equation}
    \int_Q\bar\bP(\bx,\bs,\omega)\dd\bs
    =
    \int_Q\bs\,\bar\rho(\bx,\bs,\omega)\dd\bs.
    \label{eq:cell_polarization_mean}
\end{equation}
Moreover,
\begin{equation}
    \|\bar\bP(\bx,\cdot,\omega)\|_{L^2(Q)}
    \leq
    C\|\bar\rho(\bx,\cdot,\omega)\|_{L^2(Q)}.
    \label{eq:cell_polarization_estimate}
\end{equation}
\end{lemma}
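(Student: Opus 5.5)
The plan is to construct $\bar\bP$ as a gradient, solving a Neumann problem on the unit cube. Fix $\bx$ and choose $\omega$ outside the null set where \cref{eq:cell_neutrality} fails or where $\bar\rho(\bx,\cdot,\omega)\notin L^2(Q)$. The latter is a null set by \textbf{A1}, since $\E[\int_Q|\bar\rho|^2]<\infty$ and Fubini apply. Write $f:=\bar\rho(\bx,\cdot,\omega)\in L^2(Q)$. By cell neutrality, $\int_Q f=0$, which is exactly the compatibility condition for the Neumann problem
\begin{equation*}
    -\Delta u=f\quad\text{in }Q,\qquad \nabla u\cdot\bn_Q=0\quad\text{on }\partial Q.
\end{equation*}
I would pose this weakly: find $u\in H^1(Q)$ with $\int_Q u=0$ such that $\int_Q\nabla u\cdot\nabla v=\int_Q f v$ for all $v\in H^1(Q)$. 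The constraint $\int_Q f=0$ lets us test against all of $H^1(Q)$ rather than only mean-zero functions. Since $Q$ is Lipschitz, the Poincar\'e--Wirtinger inequality $\|v-\bar v\|_{L^2(Q)}\le C_P\|\nabla v\|_{L^2(Q)}$ holds, so Lax--Milgram gives a unique solution. Set $\bar\bP:=\nabla u$.

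The weak formulation is precisely the statement that $-\nabla\cdot\bar\bP=f$ with zero normal trace. Taking $v\in C_c^\infty$ gives the equation in $\mathcal D'(Q)$, and allowing general $v\in H^1$ encodes $\bar\bP\cdot\bn_Q=0$ in the sense of the $H(\mathrm{div})$ normal trace. Note that $\bar\bP\in H(\mathrm{div};Q)$ because its divergence is $-f\in L^2$.

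For the estimate \cref{eq:cell_polarization_estimate}, test with $v=u$:
\begin{equation*}
    \|\nabla u\|_{L^2}^2=\int_Q f u\le\|f\|_{L^2}\|u\|_{L^2}\le C_P\|f\|_{L^2}\|\nabla u\|_{L^2}.
\end{equation*}
Hence $\|\bar\bP\|_{L^2(Q)}\le C_P\|f\|_{L^2(Q)}$, with $C=C_P$ depending only on $Q$.

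For the mean identity \cref{eq:cell_polarization_mean}, test componentwise with $v=s_i$, which lies in $H^1(Q)$ with $\nabla v=\be_i$. The weak formulation then gives
\begin{equation*}
    \int_Q\bar P_i\dd\bs=\int_Q\nabla u\cdot\be_i=\int_Q f\,s_i\dd\bs,
\end{equation*}
which is the claimed identity.

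No step is a serious obstacle. The only points needing care are:
\begin{itemize}[label={}, leftmargin=*]
\item \emph{The null sets.} The exceptional set must be taken as the union of the neutrality failure set and the set where $\bar\rho(\bx,\cdot,\omega)\notin L^2(Q)$; both are null.
\item \emph{Measurability in $\omega$.} This is not required by the statement, but it follows if wanted because the solution map $f\mapsto\nabla u$ is bounded and linear from $L^2(Q)$ to $L^2(Q;\R^3)$. The same linearity also gives continuity in $\bx$.
\end{itemize}
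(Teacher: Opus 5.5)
Your proposal is correct and follows the paper's proof. In both, you solve the zero-mean Neumann problem on $Q$ (solvable by cell neutrality), set $\bar\bP=\nabla u$, and read off the $L^2$ estimate and the mean identity; your test with $v=s_i$ is the weak form of the paper's integration by parts. Your explicit handling of the $\omega$-null set, the Lax--Milgram/Poincar\'e--Wirtinger step, and the $H(\mathrm{div})$ normal trace supplies details the paper leaves implicit.
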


\begin{proof}
For fixed $\bx$ and $\omega$, let $\psi\in H^1(Q)$ be the zero-mean solution of
the Neumann problem
\begin{equation*}
    -\Delta_{\bs}\psi=\bar\rho,
    \qquad
    \nabla_{\bs}\psi\cdot\bn_Q=0,
    \qquad
    \int_Q\psi\dd\bs=0.
\end{equation*}
Such a solution exists because $\int_Q\bar\rho\dd\bs=0$ by \textbf{A1}, which is the solvability condition for the Neumann problem.
We set $\bar\bP=\nabla_{\bs}\psi$. Then \cref{eq:cell_polarization_problem} follows directly, and the standard
estimate for the Neumann problem ($\|\bar\bP\|_{L^2} \leq \|\psi\|_{H^1} \leq C \|\bar\rho\|_{L^2}$) gives \cref{eq:cell_polarization_estimate}. Finally, integration by parts gives
\begin{equation*}
    \int_Q s_i\bar\rho\dd\bs
    =
    -\int_Q s_i\nabla_{\bs}\cdot\bar\bP\dd\bs
    =
    \int_Q\bar P_i\dd\bs,
\end{equation*}
which proves \cref{eq:cell_polarization_mean}.
\end{proof}

\begin{lemma}[Extended polarization field and its weak limit]
\label{lem:cell_polarization_weak_limit}
Under \textbf{A1}, let $\bar\bP$ be the cell polarization field from
\cref{lem:cell_polarization_field}, and define its cellwise extension by
\begin{equation}
    \bP(\bx,\bk+\bs,\omega)
    :=
    \bar\bP(\bx,\bs,T_{\bk}\omega),
    \qquad
    \bk\in\Z^3,\quad \bs\in Q.
    \label{eq:cell_polarization_extension}
\end{equation}
Then, for every fixed $\bx$ and almost every $\omega$,
\begin{equation}
    \bP(\bx,\cdot,\omega)
    \in
    H_{\mathrm{loc}}(\operatorname{div};\R^3),
\end{equation}
and
\begin{equation}
    -\nabla_{\by}\cdot\bP(\bx,\by,\omega)
    =
    \rho(\bx,\by,\omega)
    \qquad
    \text{in }\mathcal D'(\R^3).
    \label{eq:global_polarization_divergence}
\end{equation}
Moreover,
\begin{equation}
    \bP(\bx,R\,\cdot,\omega)
    \rightharpoonup
    \bp(\bx)
    \qquad
    \text{weakly in }L^2(Q;\R^3)
    \label{eq:cell_polarization_weak_limit}
\end{equation}
as $R=2n+1\to\infty$.
\end{lemma}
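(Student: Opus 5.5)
We prove the three assertions in order: membership in $H_{\mathrm{loc}}(\operatorname{div})$ together with the divergence identity, which comes from the zero normal flux, and then the weak limit, which comes from ergodic averaging.

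\emph{Step 1: the divergence identity.} Fix $\bx$ and work on the full-probability set where \cref{lem:cell_polarization_field} applies to $T_{\bk}\omega$ for every $\bk\in\Z^3$. This set is a countable intersection of full-measure sets, and each is of full measure because $T_{\bk}$ is measure preserving. On each cell $\bk+Q$, the field $\bP(\bx,\cdot,\omega)$ lies in $L^2$ and satisfies $-\nabla\cdot\bP=\rho$ there, so $\bP\in H(\operatorname{div};\bk+Q)$ with zero normal trace on $\partial(\bk+Q)$.

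To verify the identity globally, take a test function $\varphi\in C_c^\infty(\R^3)$. We have
\begin{equation*}
\int_{\R^3}\bP\cdot\nabla\varphi\dd\by=\sum_{\bk}\int_{\bk+Q}\bP\cdot\nabla\varphi\dd\by .
\end{equation*}
Only finitely many cells contribute, since $\varphi$ has compact support. On each cell, the generalized Green formula in $H(\operatorname{div})$ gives
\begin{equation*}
\int_{\bk+Q}\bP\cdot\nabla\varphi\dd\by=-\int_{\bk+Q}\varphi\,\nabla\cdot\bP\dd\by+\langle\bP\cdot\bn,\varphi\rangle_{\partial(\bk+Q)},
\end{equation*}
and the boundary pairing vanishes. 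Summing over the cells yields $\int\bP\cdot\nabla\varphi=\int\rho\varphi$, which is \cref{eq:global_polarization_divergence}. Since $\rho(\bx,\cdot,\omega)\in L^2_{\mathrm{loc}}$ almost surely by \textbf{A1}, it follows that $\bP\in H_{\mathrm{loc}}(\operatorname{div})$. This is the only point that needs care, because gluing $H(\operatorname{div})$ fields across interfaces requires matching normal traces. Here both traces are zero, so no jump term appears.

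\emph{Step 2: a uniform $L^2$ bound.} Set $\bP_R(\boldsymbol\xi):=\bP(\bx,R\boldsymbol\xi,\omega)$. Changing variables gives
\begin{equation*}
\|\bP_R\|_{L^2(Q)}^2=R^{-3}\sum_{\bk\in I_R}\|\bar\bP(\bx,\cdot,T_{\bk}\omega)\|_{L^2(Q)}^2\le C R^{-3}\sum_{\bk\in I_R}\|\bar\rho(\bx,\cdot,T_{\bk}\omega)\|_{L^2(Q)}^2,
\end{equation*}
using \cref{eq:cell_polarization_estimate}. By \cref{thm:ergodic} applied to $\|\bar\rho(\bx,\cdot,\omega)\|_{L^2(Q)}^2\in L^1(\Xi)$, which is finite by \cref{eq:rho_moment}, the right-hand side converges almost surely. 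Hence $\bP_R$ is bounded in $L^2(Q)$ almost surely.

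\emph{Step 3: identifying the weak limit.} By the bound of Step 2, it suffices to test against a dense class, and we take $a\in C(\overline Q)$. Write $\boldsymbol\xi=(\bk+\bs)/R$. Then
\begin{equation*}
\int_Q\bP_R\,a\dd\boldsymbol\xi=R^{-3}\sum_{\bk\in I_R}\int_Q a\!\left(\tfrac{\bk+\bs}{R}\right)\bar\bP(\bx,\bs,T_{\bk}\omega)\dd\bs .
\end{equation*}
Replace $a((\bk+\bs)/R)$ by $a(\bk/R)$. By uniform continuity of $a$, the resulting error is at most $o(1)\cdot R^{-3}\sum_{\bk}\|\bar\bP(\bx,\cdot,T_{\bk}\omega)\|_{L^1(Q)}$, and this average is bounded by Step 2, so the error tends to zero.

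The main term is $R^{-3}\sum_{\bk\in I_R} a(\bk/R)F(T_{\bk}\omega)$ with $F(\omega):=\int_Q\bar\bP(\bx,\bs,\omega)\dd\bs$. Here $F\in L^2(\Xi)$ by \cref{eq:cell_polarization_estimate}, provided $\omega\mapsto\bar\bP$ is measurable. Measurability holds because the Neumann solution operator is a bounded linear map applied to the measurable $L^2(Q)$-valued map $\omega\mapsto\bar\rho(\bx,\cdot,\omega)$.

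\cref{cor:weighted_ergodic} then gives convergence to $(\int_Q a)\,\E[F]$. By \cref{eq:cell_polarization_mean}, $\E[F]=\E[\int_Q\bs\bar\rho\dd\bs]=\bp(\bx)$. The null sets of \cref{cor:weighted_ergodic} depend on $a$, so we take $a$ in a countable dense subset of $C(\overline Q)$ and intersect the corresponding full-measure sets. Combined with the bound of Step 2, this yields \cref{eq:cell_polarization_weak_limit}.
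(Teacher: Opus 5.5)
Your proposal is correct and follows essentially the same route as the paper. The zero normal trace of $\bar\bP$ on the cell faces gives the global divergence identity. For the weak limit, you replace the test function by its value at $\bk/R$, apply the weighted lattice ergodic corollary to the cell-averaged polarization (whose mean is $\bp(\bx)$ by the moment identity), and close with the almost-sure $L^2$ bound and density. Your added care with the $H(\operatorname{div})$ Green formula, the measurability of $\omega\mapsto\bar\bP$, and the countable dense family of test functions (giving a single null set) only fills in details the paper leaves implicit.
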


\begin{proof}
Since $\bar\bP\cdot\bn_Q=0$ on $\partial Q$, the normal component of the
cellwise extension vanishes on every cell boundary. Thus no interface terms
arise in the distributional divergence, and, for fixed $\bx$ and almost every $\omega$,
\begin{equation*}
    -\nabla_{\by}\cdot\bP(\bx,\by,\omega)
    =
    \rho(\bx,\by,\omega)
    \qquad
    \text{in }\mathcal D'(\R^3).
\end{equation*}
Together with the local $L^2$ integrability of $\bP$ and $\rho$, this gives $\bP(\bx, \cdot, \omega)\in H_{\mathrm{loc}}(\operatorname{div};\R^3)$.

To prove \cref{eq:cell_polarization_weak_limit}, let
$\boldsymbol\varphi\in C^1(\overline Q;\R^3)$. Since
$Q_R=\bigcup_{\bk\in I_R}(\bk+Q)$,
\begin{equation*}
    \int_Q
    \bP(\bx,R\boldsymbol\xi,\omega)
    \cdot\boldsymbol\varphi(\boldsymbol\xi)
    \dd\boldsymbol\xi
    =
    \frac{1}{R^3}
    \sum_{\bk\in I_R}
    \int_Q
    \bar\bP(\bx,\bs,T_{\bk}\omega)
    \cdot
    \boldsymbol\varphi\left(\frac{\bk+\bs}{R}\right)
    \dd\bs.
\end{equation*}
Since $\boldsymbol\varphi\in C^1(\overline Q;\R^3)$ and $Q$ is bounded,
\begin{equation*}
    \left|
    \boldsymbol\varphi\left(\frac{\bk+\bs}{R}\right)
    -
    \boldsymbol\varphi\left(\frac{\bk}{R}\right)
    \right|
    \leq
    \frac{C}{R}
\end{equation*}
uniformly in $\bk\in I_R$ and $\bs\in Q$. Hence
\begin{equation}
    \int_Q
    \bP(\bx,R\boldsymbol\xi,\omega)
    \cdot\boldsymbol\varphi(\boldsymbol\xi)
    \dd\boldsymbol\xi
    =
    \frac{1}{R^3}
    \sum_{\bk\in I_R}
    \left[
    \int_Q
    \bar\bP(\bx,\bs,T_{\bk}\omega)\dd\bs
    \right]
    \cdot
    \boldsymbol\varphi\left(\frac{\bk}{R}\right)
    +
    r_R,
\label{eq:cell_polarization_weak_limit_aux}
\end{equation}
where
\begin{equation*}
    |r_R|
    \leq
    \frac{C}{R}
    \frac{1}{R^3}
    \sum_{\bk\in I_R}
    \int_Q
    |\bar\bP(\bx,\bs,T_{\bk}\omega)|
    \dd\bs.
\end{equation*}
By \cref{eq:cell_polarization_estimate},
\begin{equation*}
    \E\!\left[
    \int_Q
    |\bar\bP(\bx,\bs,\cdot)|
    \dd\bs
    \right]
    <\infty,
\end{equation*}
and therefore the ergodic theorem implies that the average in the last
estimate remains bounded almost surely. Thus
\begin{equation*}
    r_R\longrightarrow0.
\end{equation*}

Moreover, by \cref{eq:cell_polarization_mean},
\begin{equation*}
    \bp(\bx)
    =
    \E\!\left[
    \int_Q
    \bar\rho(\bx,\bs,\cdot)\bs
    \dd\bs
    \right]
    =
    \E\!\left[
    \int_Q
    \bar\bP(\bx,\bs,\cdot)
    \dd\bs
    \right].
\end{equation*}
Hence the weighted ergodic theorem
\cref{cor:weighted_ergodic} gives
\begin{equation*}
    \frac{1}{R^3}
    \sum_{\bk\in I_R}
    \left[
    \int_Q
    \bar\bP(\bx,\bs,T_{\bk}\omega)\dd\bs
    \right]
    \cdot
    \boldsymbol\varphi\left(\frac{\bk}{R}\right) 
    \qquad\longrightarrow
    \bp(\bx)\cdot
    \int_Q
    \boldsymbol\varphi(\boldsymbol\xi)
    \dd\boldsymbol\xi.
\end{equation*}
Combining this with \cref{eq:cell_polarization_weak_limit_aux} proves the desired convergence for smooth test fields.

Finally,
\begin{equation*}
\begin{aligned}
    \|\bP(\bx,R\,\cdot,\omega)\|_{L^2(Q)}^2
    &=
    \frac{1}{R^3}
    \sum_{\bk\in I_R}
    \int_Q
    |\bar\bP(\bx,\bs,T_{\bk}\omega)|^2
    \dd\bs
\end{aligned}
\end{equation*}
is bounded almost surely by \cref{eq:cell_polarization_estimate,thm:ergodic}. Density of $C^1(\overline Q;\R^3)$ in $L^2(Q;\R^3)$ then yields \cref{eq:cell_polarization_weak_limit}.
\end{proof}

Continuing with the limit of the second term in \cref{eq:local_density_decomposition}, we define the scaled potential correction
\begin{equation}
    U_R(\bx,\boldsymbol\xi,\omega)
    :=
    \frac{
    h_R(\bx,R\boldsymbol\xi,\omega)
    -
    h(\bx,R\boldsymbol\xi,\omega)
    }{R}.
    \label{eq:scaled_potential_correction}
\end{equation}
The change of variables $\by=R\boldsymbol\xi$ and
\cref{eq:global_polarization_divergence} gives
\begin{equation}
    \frac{1}{R^3}
    \int_{Q_R}
    \rho(\bx,\by,\omega)
    [h_R(\bx,\by,\omega)-h(\bx,\by,\omega)]
    \dd\by
    =
    \int_Q
    \bP(\bx,R\boldsymbol\xi,\omega)\cdot
    \nabla_{\boldsymbol\xi}U_R(\bx,\boldsymbol\xi,\omega)
    \dd\boldsymbol\xi.
\label{eq:local_correction_pairing}
\end{equation}
Note that there is no boundary term because $Q_R$ is a union of complete cells and the normal trace of $\bP$ vanishes on every cell boundary.

The first factor in \cref{eq:local_correction_pairing} converges weakly by
\cref{eq:cell_polarization_weak_limit} in \cref{lem:cell_polarization_weak_limit}, while the second converges strongly by the assumption \textbf{A3}. Therefore,
\begin{equation}
    \frac{1}{R^3}
    \int_{Q_R}
    \rho(\bx,\by,\omega)
    [h_R(\bx,\by,\omega)-h(\bx,\by,\omega)]
    \dd\by
    \longrightarrow
    \int_Q
    \bp(\bx)\cdot\nabla H_Q[\bp(\bx)]
    \dd\boldsymbol\xi.
\label{eq:local_depolarization_limit}
\end{equation}
For the unit cube, \cref{app:electrostatic_identities} shows that 
\begin{equation}
    \int_Q
    \bp\cdot\nabla H_Q[\bp]\dd\boldsymbol\xi
    =
    \frac13|\bp|^2.
    \label{eq:cube_depolarization_identity}
\end{equation}
Hence, combining \cref{eq:stationary_local_limit}, \cref{eq:local_depolarization_limit}, and the identity above, we have, for every fixed $\bx\in\overline\Omega$ and almost every $\omega$,
\begin{equation}
    e_R^{\mathrm{loc}}(\bx,\omega)
    \; \xrightarrow[R = \veps/\ell \to \infty]{} \;
    e_{\mathrm{loc}}(\bx)
    :=
    \frac12
    \E\!\left[
    \int_Q
    \bar\rho(\bx,\bs,\cdot)\bar h(\bx,\bs,\cdot)\dd\bs
    \right]
    +
    \frac16|\bp(\bx)|^2.
    \label{eq:local_density_convergence}
\end{equation}

\begin{proposition}[Limit of the local energy]
\label{prop:local_energy}
Under \textbf{A1}--\textbf{A3}, for almost every $\omega$,
\begin{equation}
    \lim_{\veps\downarrow0}
    \lim_{\ell/\veps\downarrow0}
    \cE_{\veps,\ell}^{\mathrm{loc}}(\omega)
    =
    \frac12
    \int_\Omega
    \E\!\left[
    \int_Q
    \bar\rho(\bx,\bs,\cdot)\bar h(\bx,\bs,\cdot)\dd\bs
    \right]\dd\bx
    +
    \frac16
    \int_\Omega|\bp(\bx)|^2\dd\bx.
    \label{eq:local_energy_limit}
\end{equation}
\end{proposition}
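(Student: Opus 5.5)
The argument passes to the limit in the Riemann-sum representation \cref{eq:local_riemann_sum}. First fix $\veps$. The set $\Omega_\veps$ is finite, and for each $\bx\in\Omega_\veps$ the pointwise limit \cref{eq:local_density_convergence} holds for almost every $\omega$. Intersecting these finitely many full-measure sets, and then taking the countable intersection over a sequence $\veps_j\downarrow0$, gives one full-measure set of realizations. On this set, for each fixed $\veps$,
\begin{equation*}
    \lim_{\ell/\veps\downarrow0}\cE_{\veps,\ell}^{\mathrm{loc}}(\omega)
    =
    \sum_{\bx\in\Omega_\veps}\veps^3 e_{\mathrm{loc}}(\bx),
\end{equation*}
and the right-hand side is deterministic. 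This step is only a finite sum of almost-sure limits, so no uniformity in $\bx$ is needed.

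Next comes the limit $\veps\downarrow0$, which is a statement about deterministic Riemann sums. I will show that $e_{\mathrm{loc}}$ is continuous on $\overline\Omega$. The first term $\tfrac12\E[\int_Q\bar\rho\bar h\dd\bs]$ is continuous by the continuity hypothesis in \textbf{A2}. For the second term, \textbf{A1} makes $\bx\mapsto\bar\rho(\bx,\cdot,\cdot)$ continuous into $L^2(Q\times\Xi)$, so Cauchy--Schwarz with $|\bs|\le\sqrt3/2$ gives
\begin{equation*}
    |\bp(\bx)-\bp(\bx')|\le \tfrac{\sqrt3}{2}\,\|\bar\rho(\bx,\cdot,\cdot)-\bar\rho(\bx',\cdot,\cdot)\|_{L^2(Q\times\Xi)}.
\end{equation*}
Hence $\bp$, and therefore $|\bp|^2$, is continuous. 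Since $\overline\Omega$ is compact, $e_{\mathrm{loc}}$ is uniformly continuous and bounded.

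We then compare
\begin{equation*}
    \sum_{\bx\in\Omega_\veps}\veps^3e_{\mathrm{loc}}(\bx)
    \quad\text{with}\quad
    \int_\Omega e_{\mathrm{loc}}\dd\bx .
\end{equation*}
The difference splits into two parts. The first is $\sum_{\bx}\int_{\bx+\Qeps}(e_{\mathrm{loc}}(\bx)-e_{\mathrm{loc}}(\bx'))\dd\bx'$, which is bounded by $|\Omega|$ times the modulus of continuity at $\sqrt3\veps$. The second is $\int_{\Omega\setminus\bigcup(\bx+\Qeps)}e_{\mathrm{loc}}$, which is bounded by $\sup|e_{\mathrm{loc}}|$ times the boundary-layer volume. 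That volume vanishes as $\veps\to0$ because $\Omega$ is Lipschitz, so its boundary has zero Lebesgue measure and the layer lies within distance $\sqrt3\veps$ of $\partial\Omega$. Both parts tend to zero, which yields \cref{eq:local_energy_limit}.

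The main obstacle is making sure the inner limit is really pointwise almost sure for each of the finitely many material points. This requires the identity \cref{eq:local_correction_pairing}, the weak convergence from \cref{lem:cell_polarization_weak_limit} and the strong convergence from \textbf{A3} to hold on a common full-measure set for each fixed $\bx$. The product of a weakly and a strongly convergent sequence then converges as claimed. Everything after that is the deterministic Riemann-sum argument, whose only input is the continuity of $\bp$ and of the microscopic energy density.
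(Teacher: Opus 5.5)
Your proposal is correct and follows essentially the same route as the paper. For fixed $\veps$ you intersect finitely many almost-sure sets to identify the inner limit with the deterministic Riemann sum $\sum_{\bx\in\Omega_\veps}\veps^3 e_{\mathrm{loc}}(\bx)$, and then use uniform continuity of $e_{\mathrm{loc}}$ together with the vanishing boundary-layer volume of the Lipschitz domain. You also make explicit two points the paper leaves implicit: the continuity of $\bp$, obtained from \textbf{A1} via Cauchy--Schwarz, and the use of a countable sequence $\veps_j\downarrow0$ to obtain a single full-measure set of realizations.
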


\begin{proof}
Fix $\veps>0$. The set $\Omega_{\veps}$ is finite. Hence, intersecting the almost-sure sets on which \cref{eq:local_density_convergence} holds for $\bx\in\Omega_{\veps}$ gives a set of probability one on which
\begin{equation*}
    \lim_{\ell/\veps\downarrow0}
    \cE_{\veps,\ell}^{\mathrm{loc}}(\omega)
    =
    \sum_{\bx\in\Omega_{\veps}}
    \veps^3 e_{\mathrm{loc}}(\bx).
\end{equation*}

By \textbf{A1}--\textbf{A2}, $e_{\mathrm{loc}}$ is continuous on
$\overline\Omega$, and hence uniformly continuous and bounded. Let
\begin{equation*}
    \Omega_{\veps}^{\mathrm{int}}
    :=
    \bigcup_{\bx\in\Omega_\veps}(\bx+\Qeps).
\end{equation*}
Since $|\veps Q|=\veps^3$,
\begin{equation*}
    \left|
    \sum_{\bx\in\Omega_\veps}
    \veps^3e_{\mathrm{loc}}(\bx)
    -
    \int_{\Omega_\veps^{\mathrm{int}}}
    e_{\mathrm{loc}}(\bz)\dd\bz
    \right| \leq
    |\Omega|
    \sup_{\substack{\bx,\bz\in\overline\Omega\\
    |\bz-\bx|\leq\sqrt3\,\veps/2}}
    |e_{\mathrm{loc}}(\bz)-e_{\mathrm{loc}}(\bx)|
    \longrightarrow0.
\end{equation*}
Moreover, since $\Omega$ is Lipschitz,
\begin{equation*}
    |\Omega\setminus\Omega_\veps^{\mathrm{int}}|
    \longrightarrow0,
\end{equation*}
and hence
\begin{equation*}
    \left|
    \int_{\Omega_\veps^{\mathrm{int}}}e_{\mathrm{loc}}\dd\bz
    -
    \int_\Omega e_{\mathrm{loc}}\dd\bz
    \right|
    \leq
    \|e_{\mathrm{loc}}\|_{L^\infty(\Omega)}
    |\Omega\setminus\Omega_\veps^{\mathrm{int}}|
    \longrightarrow0.
\end{equation*}
Therefore,
\begin{equation*}
    \lim_{\veps\downarrow0} \lim_{\ell/\veps\downarrow0}
    \cE_{\veps,\ell}^{\mathrm{loc}}(\omega) = \int_{\Omega} e_{\mathrm{loc}}(\bz) \dd \bz.
\end{equation*}
Substituting the definition of $e_{\mathrm{loc}}$ from \cref{eq:local_density_convergence} gives the desired result.
\end{proof}


\subsection{Analysis of the nonlocal energy}
\label{sec:nonlocal_limit}

For fixed $\veps$, the first limit replaces the microscopic interaction
between two material points by the interaction between uniformly
polarized cells. This follows from \textbf{A3} and the weak convergence
of the cell polarization field. The limit $\veps\downarrow0$ is then
obtained by relating the resulting polarized-cell energy to the
electrostatic field generated by the effective polarization $\bp$.


\begin{proposition}[First-scale limit of the nonlocal energy]
\label{prop:first_scale_reduction}
For fixed $\veps$, define
\begin{equation}
    \cE_\veps^{\mathrm{cell}}
    :=
    \frac12
    \sum_{\substack{\bx,\bx'\in\Omega_\veps\\\bx\ne\bx'}}
    \veps^3
    B_{(\bx'-\bx)/\veps}
    \bigl(\bp(\bx),\bp(\bx')\bigr),
    \label{eq:polarized_cell_energy}
\end{equation}
where $B_{\bk}$ is defined in \cref{eq:polarized_cell_pair_interaction}. Under \textbf{A1}--\textbf{A3}, for almost every $\omega\in\Xi$,
\begin{equation}
    \lim_{\ell/\veps\downarrow0}
    \cE_{\veps,\ell}^{\mathrm{nloc}}(\omega)
    =
    \cE_\veps^{\mathrm{cell}}.
    \label{eq:first_scale_cell_reduction}
\end{equation}
\end{proposition}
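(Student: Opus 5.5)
The plan is to rescale each pair interaction in \cref{eq:finite_nonlocal_energy} into the finite-volume form $B_{R,\bk}$. Each $B_{R,\bk}$ is then written as a pairing between the cell polarization field of \cref{lem:cell_polarization_weak_limit} and the scaled field of \textbf{A3}, which lets us pass to the limit by weak--strong convergence. Since $\Omega_\veps$ is finite for fixed $\veps$, only finitely many pairs $(\bx,\bx')$ occur. So it suffices to prove convergence pair by pair, on the intersection of the corresponding finitely many almost-sure sets.

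\emph{Step 1 (rescaling).} Fix $\bx\ne\bx'$ in $\Omega_\veps$ and set $\bk:=(\bx'-\bx)/\veps\in\Z^3\setminus\{\bm0\}$. Substitute $\br=\ell\by$ and $\br'=\ell\bz$, and use $\bx-\bx'=-\ell R\bk$ together with $G(\ell\,\cdot)=\ell^{-1}G$. The factor $\ell^{-2}$ from \cref{eq:scaled_charge_density}, the Jacobian $\ell^6$ and the factor $\ell^{-1}$ from $G$ combine to $\ell^3=\veps^3/R^3$. Hence
\begin{equation*}
    \cE_{\veps,\ell}^{\mathrm{nloc}}(\omega)
    =
    \frac12\sum_{\substack{\bx,\bx'\in\Omega_\veps\\ \bx\ne\bx'}}
    \veps^3\,B_{R,\bk}(\bx,\bx',\omega),
\end{equation*}
with $B_{R,\bk}$ as in \cref{eq:finite_volume_pair_interaction}, after checking that the sign convention on $\bk$ matches via $G(-\bz)=G(\bz)$. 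It therefore suffices to show $B_{R,\bk}(\bx,\bx',\omega)\to B_{\bk}(\bp(\bx),\bp(\bx'))$ almost surely.

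\emph{Step 2 (polarization pairing).} By \cref{lem:cell_polarization_weak_limit}, $\rho(\bx',\cdot,\omega)=-\nabla\cdot\bP(\bx',\cdot,\omega)$, and the normal trace of $\bP$ vanishes on every cell face. Because $Q_R$ is a union of full cells, integration by parts against $h_R(\bx,\cdot+R\bk,\omega)$ produces no boundary term. This step is legitimate because $\rho\chi_{Q_R}\in L^2$ has compact support, so $h_R\in H^2_{\mathrm{loc}}$. Rescaling $\bz=R\boldsymbol\xi$ and then translating $\boldsymbol\eta=\boldsymbol\xi+\bk$ gives
\begin{equation*}
    B_{R,\bk}(\bx,\bx',\omega)
    =
    \int_{\bk+Q}
    \bP(\bx',R(\boldsymbol\eta-\bk),\omega)\cdot
    \nabla_{\boldsymbol\eta}\Bigl[\frac{h_R(\bx,R\boldsymbol\eta,\omega)}{R}\Bigr]
    \dd\boldsymbol\eta .
\end{equation*}
The first factor is a translate of $\bP(\bx',R\,\cdot,\omega)$. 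By \cref{eq:cell_polarization_weak_limit} it converges weakly in $L^2(\bk+Q)$ to the constant $\bp(\bx')$. The second factor converges strongly in $L^2(\bk+Q)$ to $\nabla H_Q[\bp(\bx)]$ by the translated part \cref{eq:strong_translated_finite_volume_field} of \textbf{A3}. The weak--strong product therefore converges, and
\begin{equation*}
    B_{R,\bk}(\bx,\bx',\omega)\to\int_{\R^3}\bp(\bx')\chi_{\bk+Q}\cdot\nabla H_Q[\bp(\bx)]\dd\boldsymbol\eta .
\end{equation*}

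\emph{Step 3 (identification with $B_{\bk}$).} It remains to show that this integral equals $B_{\bk}(\bp(\bx),\bp(\bx'))$. The weak form of $-\Delta H_{\bk+Q}[\bq]=-\nabla\cdot(\bq\chi_{\bk+Q})$, tested with $H_Q[\bp]$, gives
\begin{equation*}
    \int_{\R^3}\nabla H_{\bk+Q}[\bq]\cdot\nabla H_Q[\bp]\dd\boldsymbol\xi
    =
    \int_{\bk+Q}\bq\cdot\nabla H_Q[\bp]\dd\boldsymbol\xi ,
\end{equation*}
with $\bq=\bp(\bx')$ and $\bp=\bp(\bx)$. I expect this to be the main technical point, because $H_Q[\bp]$ is not compactly supported. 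The plan is to use the surface representation \cref{eq:cube_depolarization_potential}. It shows that $H_Q[\bp]$ is the potential of a neutral surface charge, so it decays like $|\boldsymbol\xi|^{-2}$ and its gradient lies in $L^2(\R^3)$. Hence $H_Q[\bp]\in\dot H^1(\R^3)$, and the weak formulation extends from $C_c^\infty$ test functions by density.

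Combining the three steps and summing the finitely many pairs gives \cref{eq:first_scale_cell_reduction}. The only other care needed is bookkeeping. The almost-sure sets from \cref{lem:cell_polarization_weak_limit} (applied at $\bx'$) and from \textbf{A3} (applied at $\bx$ and $\bk$) are intersected over the finitely many pairs in $\Omega_\veps$.
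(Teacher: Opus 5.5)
Your proposal is correct and follows the paper's proof essentially step for step. You rescale each pair term to $\veps^3 B_{R,\bk}$, integrate by parts against the cell polarization field $\bP$ (no boundary terms on $\partial Q_R$), and pass to the limit by weak--strong convergence using \cref{lem:cell_polarization_weak_limit} and the translated part of \textbf{A3}. You then identify the limit with $B_{\bk}$ through the weak equation for $H_{\bk+Q}$, intersecting the finitely many almost-sure sets. The only cosmetic difference is that you write the pairing on $\bk+Q$ with a translated polarization field, whereas the paper keeps it on $Q$ with a translated potential.
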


\begin{proof}
Fix $\veps$ and two distinct material points
$\bx,\bx'\in\Omega_\veps$, and write $\bx'=\bx+\veps\bk$ for $\bk\in\Z^3\setminus\{\bm0\}$. With
$\br=\ell\by$, $\br'=\ell\bz$, and $R=\veps/\ell$, the corresponding
term in \cref{eq:finite_nonlocal_energy} becomes
\begin{equation}
\begin{aligned}
    &\int_{\veps Q}\int_{\veps Q}
    \rho^\ell(\bx,\br,\omega)
    \rho^\ell(\bx',\br',\omega)
    G(\bx+\br-\bx'-\br')
    \dd\br'\dd\br\\
    &\qquad=
    \frac{\veps^3}{R^3}
    \int_{Q_R}\int_{Q_R}
    G(\by-\bz-R\bk)
    \rho(\bx,\by,\omega)
    \rho(\bx',\bz,\omega)
    \dd\by\dd\bz =
    \veps^3
    B_{R,\bk}(\bx,\bx',\omega),
\end{aligned}
\label{eq:pair_rescaling_for_cell_limit}
\end{equation}
where $B_{R,\bk}$ is defined in \cref{eq:finite_volume_pair_interaction}.

We next determine the limit of $B_{R,\bk}$. From the definition of $B_{R,\bk}$, we have
\begin{equation}
    B_{R,\bk}(\bx,\bx',\omega)
    =
    \frac{1}{R^3}
    \int_{Q_R}
    \rho(\bx',\bz,\omega)
    h_R(\bx,\bz+R\bk,\omega)
    \dd\bz.
    \label{eq:pair_interaction_hr}
\end{equation}
Let $\bP$ be the extended cell polarization field from
\cref{lem:cell_polarization_weak_limit}. Since
\begin{equation*}
    -\nabla_{\bz}\cdot\bP(\bx',\bz,\omega)
    =
    \rho(\bx',\bz,\omega)
\end{equation*}
and $\bP\cdot\bn=0$ on the boundaries of the microscopic cells,
integration by parts over $Q_R$ gives
\begin{equation*}
\begin{aligned}
    B_{R,\bk}(\bx,\bx',\omega)
    =
    \frac{1}{R^3}
    \int_{Q_R}
    \bP(\bx',\bz,\omega)
    \cdot
    \nabla_{\bz}
    h_R(\bx,\bz+R\bk,\omega)
    \dd\bz.
\end{aligned}
\end{equation*}
Setting $\bz=R\boldsymbol\xi$ and using the chain rule gives
\begin{equation}
\begin{aligned}
    B_{R,\bk}(\bx,\bx',\omega)
    =
    \int_Q
    \bP(\bx',R\boldsymbol\xi,\omega)
    \cdot
    \nabla_{\boldsymbol\xi}
    \left[
    \frac{
    h_R(\bx,R(\boldsymbol\xi+\bk),\omega)
    }{R}
    \right]
    \dd\boldsymbol\xi.
\end{aligned}
\label{eq:pair_interaction_field_form}
\end{equation}

By \cref{lem:cell_polarization_weak_limit},
\begin{equation}
    \bP(\bx',R\,\cdot,\omega)
    \rightharpoonup
    \bp(\bx')
    \qquad
    \text{weakly in }L^2(Q;\R^3).
    \label{eq:pair_polarization_weak_limit}
\end{equation}
The second part of \textbf{A3}, applied on $\bk+Q$, gives
\begin{equation}
    \nabla_{\boldsymbol\xi}
    \left[
    \frac{
    h_R(\bx,R(\boldsymbol\xi+\bk),\omega)
    }{R}
    \right]
    \longrightarrow
    \nabla H_Q[\bp(\bx)](\boldsymbol\xi+\bk)
    \label{eq:pair_field_strong_limit}
\end{equation}
strongly in $L^2(Q;\R^3)$. Therefore,
\cref{eq:pair_interaction_field_form} and the weak--strong convergence
give
\begin{equation}
\begin{aligned}
    B_{R,\bk}(\bx,\bx',\omega)
    \longrightarrow
    \int_Q
    \bp(\bx')
    \cdot
    \nabla H_Q[\bp(\bx)](\boldsymbol\xi+\bk)
    \dd\boldsymbol\xi.
\end{aligned}
\label{eq:pair_interaction_polarized_limit}
\end{equation}

The right-hand side is precisely the polarized-cell interaction.
Indeed, using the weak equation for
$H_{\bk+Q}[\bp(\bx')]$ in \cref{eq:polarized_cell_pair_interaction}, we have
\begin{equation*}
\begin{aligned}
    B_{\bk}
    \bigl(\bp(\bx),\bp(\bx')\bigr)
    &=
    \int_{\R^3}
    \nabla H_Q[\bp(\bx)]
    \cdot
    \nabla H_{\bk+Q}[\bp(\bx')]
    \dd\boldsymbol\xi =
    \int_{\bk+Q}
    \bp(\bx')
    \cdot
    \nabla H_Q[\bp(\bx)]
    \dd\boldsymbol\xi\\
    &=
    \int_Q
    \bp(\bx')
    \cdot
    \nabla H_Q[\bp(\bx)](\boldsymbol\xi+\bk)
    \dd\boldsymbol\xi.
\end{aligned}
\end{equation*}
Hence
\begin{equation}
    B_{R,\bk}(\bx,\bx',\omega)
    \longrightarrow
    B_{\bk}
    \bigl(\bp(\bx),\bp(\bx')\bigr).
    \label{eq:finite_volume_pair_limit}
\end{equation}

Substituting \cref{eq:pair_rescaling_for_cell_limit} into
\cref{eq:finite_nonlocal_energy} gives
\begin{equation*}
    \cE_{\veps,\ell}^{\mathrm{nloc}}(\omega)
    =
    \frac12
    \sum_{\substack{\bx,\bx'\in\Omega_\veps\\\bx\ne\bx'}}
    \veps^3
    B_{R,(\bx'-\bx)/\veps}(\bx,\bx',\omega).
\end{equation*}
For fixed $\veps$, this is a finite sum. Therefore
\cref{eq:finite_volume_pair_limit} may be applied simultaneously to all
pairs, and passing the limit through the sum gives
\cref{eq:first_scale_cell_reduction}.
\end{proof}


\begin{proposition}[Limit of the nonlocal energy]
\label{prop:nonlocal_energy}
Under \textbf{A1}--\textbf{A3}, for almost every $\omega\in\Xi$,
\begin{equation}
    \lim_{\veps\downarrow0}
    \lim_{\ell/\veps\downarrow0}
    \cE_{\veps,\ell}^{\mathrm{nloc}}(\omega)
    =
    \frac12
    \int_{\R^3}|\nabla\phi|^2\dd\bx
    -
    \frac16
    \int_\Omega|\bp(\bx)|^2\dd\bx,
    \label{eq:nonlocal_energy_limit}
\end{equation}
where $\phi$ is defined by
\cref{eq:macroscopic_poisson_solution}.
\end{proposition}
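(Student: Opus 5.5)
By \cref{prop:first_scale_reduction}, for almost every $\omega$ it suffices to prove the deterministic statement $\lim_{\veps\downarrow0}\cE^{\mathrm{cell}}_\veps=\frac12\int_{\R^3}|\nabla\phi|^2\dd\bx-\frac16\int_\Omega|\bp|^2\dd\bx$. The plan is to rewrite $\cE^{\mathrm{cell}}_\veps$ as the field energy of a piecewise-constant polarization minus the self-energies of its cells. Define
\begin{equation*}
\bp_\veps:=\sum_{\bx\in\Omega_\veps}\bp(\bx)\chi_{\bx+\Qeps},
\end{equation*}
and let $\phi_\veps\in\dot H^1(\R^3)$ solve \cref{eq:macroscopic_poisson_solution} with $\bp_0$ replaced by $\bp_\veps$. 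By linearity, $\nabla\phi_\veps=\sum_{\bx}\nabla\Phi_{\bx}$, where $\Phi_{\bx}(\bz)=\veps H_Q[\bp(\bx)]((\bz-\bx)/\veps)$ is the potential of the uniformly polarized cube $\bx+\Qeps$. The gradient $\nabla\Phi_{\bx}(\bz)$ equals $(\nabla H_Q[\bp(\bx)])((\bz-\bx)/\veps)$, so the change of variables $\bz=\bx+\veps\boldsymbol\xi$ gives
\begin{equation*}
\int_{\R^3}\nabla\Phi_{\bx}\cdot\nabla\Phi_{\bx'}\dd\bz=\veps^3B_{(\bx'-\bx)/\veps}\bigl(\bp(\bx),\bp(\bx')\bigr).
\end{equation*}
This uses the translation convention $H_{\bk+Q}[\bq](\boldsymbol\xi)=H_Q[\bq](\boldsymbol\xi-\bk)$. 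Expanding $\frac12\|\nabla\phi_\veps\|^2_{L^2(\R^3)}$ then yields the exact identity
\begin{equation*}
\cE^{\mathrm{cell}}_\veps=\frac12\int_{\R^3}|\nabla\phi_\veps|^2\dd\bz-\frac12\sum_{\bx\in\Omega_\veps}\veps^3\int_{\R^3}|\nabla H_Q[\bp(\bx)]|^2\dd\boldsymbol\xi.
\end{equation*}

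Next, the diagonal term is evaluated with the cube identity \cref{eq:cube_depolarization_identity}. Testing the weak equation of $H_Q[\bq]$ with itself gives
\begin{equation*}
\int_{\R^3}|\nabla H_Q[\bq]|^2=\int_Q\bq\cdot\nabla H_Q[\bq]=\tfrac13|\bq|^2.
\end{equation*}
The subtracted term is therefore $\frac16\sum_{\bx}\veps^3|\bp(\bx)|^2$. Since $\bp$ is continuous on $\overline\Omega$ by \textbf{A1} (via $L^2$ continuity of $\bar\rho$ and Cauchy--Schwarz), this Riemann sum converges to $\frac16\int_\Omega|\bp|^2$. The argument is the same as in the proof of \cref{prop:local_energy}: it uses uniform continuity and $|\Omega\setminus\Omega^{\mathrm{int}}_\veps|\to0$ for Lipschitz $\Omega$.

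For the field term, note that the solution map $\bq\mapsto\nabla\phi[\bq]$ of \cref{eq:macroscopic_poisson_solution} is the $L^2$-orthogonal projection onto gradients, so it is a contraction $L^2(\R^3;\R^3)\to L^2(\R^3;\R^3)$. Hence $\|\nabla\phi_\veps-\nabla\phi\|_{L^2}\le\|\bp_\veps-\bp_0\|_{L^2(\R^3)}$. The right side is bounded by $\sup_{|\bx-\bz|\le\sqrt3\veps/2}|\bp(\bx)-\bp(\bz)|\,|\Omega|^{1/2}+\|\bp\|_\infty|\Omega\setminus\Omega^{\mathrm{int}}_\veps|^{1/2}\to0$. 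Strong convergence then gives $\frac12\|\nabla\phi_\veps\|^2\to\frac12\|\nabla\phi\|^2$, which completes the argument.

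The main obstacle is the exact identity in the first paragraph. It needs $\nabla H_Q[\bq]\in L^2(\R^3)$, which holds because $\bq\chi_Q\in L^2$ and $H_Q[\bq]$ is the $\dot H^1$ solution. It also needs care with the orientation of $\bk=(\bx'-\bx)/\veps$ and with the symmetry $B_{\bk}(\bp,\bq)=B_{-\bk}(\bq,\bp)$, so that the ordered-pair sum matches the expansion of the square. I would verify these by writing each cross term as $\int_{\bx'+\Qeps}\bp(\bx')\cdot\nabla\Phi_{\bx}$, exactly as in the last display of the proof of \cref{prop:first_scale_reduction}.
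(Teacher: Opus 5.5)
Your proposal is correct and follows essentially the paper's proof. Like the paper, you reduce to $\cE^{\mathrm{cell}}_\veps$ via \cref{prop:first_scale_reduction}, write $\nabla\phi_\veps$ as a superposition of rescaled cube fields, and expand the square to get $\frac12\|\nabla\phi_\veps\|^2=\cE^{\mathrm{cell}}_\veps+\frac16\sum_{\bx}\veps^3|\bp(\bx)|^2$; you then pass to the limit using $L^2$-stability of the Poisson solution map together with the Riemann-sum argument. Your projection/contraction bound is the same as the paper's subtract-and-test estimate. The orientation issue you flag needs no symmetry argument, because expanding over ordered pairs directly yields $B_{(\bx'-\bx)/\veps}(\bp(\bx),\bp(\bx'))$ for each ordered pair.
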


\begin{proof}
By \textbf{A1} and \cref{eq:effective_polarization}, $\bp$ is continuous
on $\overline\Omega$. In particular, its zero extension $\bp_0$ belongs
to $L^2(\R^3;\R^3)$. Therefore the weak problem
\cref{eq:macroscopic_poisson_solution} has a solution
$\phi\in\dot H^1(\R^3)$, unique up to an additive constant.

Let $\bp_\veps$ be the piecewise-constant polarization
\begin{equation}
    \bp_\veps(\bz)
    :=
    \sum_{\bx\in\Omega_\veps}
    \bp(\bx)\chi_{\bx+\veps Q}(\bz).
    \label{eq:piecewise_constant_polarization}
\end{equation}
Using $|\bz-\bx|\leq\sqrt3\,\veps/2$ for $\bz\in\bx+\veps Q$, we obtain
\begin{equation*}
    \|\bp_\veps-\bp_0\|_{L^2(\R^3)}^2
    \leq
    |\Omega|
    \sup_{\substack{\bx,\bz\in\overline\Omega\\
    |\bx-\bz|\leq\sqrt3\,\veps/2}}
    |\bp(\bx)-\bp(\bz)|^2 \;+\;
    \|\bp\|_{L^\infty(\Omega)}^2
    \left|
    \Omega\setminus
    \bigcup_{\bx\in\Omega_\veps}
    (\bx+\veps Q)
    \right|.
\end{equation*}
It follows that
\begin{equation}
    \bp_\veps
    \longrightarrow
    \bp_0
    \qquad
    \text{strongly in }L^2(\R^3;\R^3).
    \label{eq:strong_polarization_extension}
\end{equation}

Let $\phi_\veps\in\dot H^1(\R^3)$ satisfy
\begin{equation}
    \int_{\R^3}
    \nabla\phi_\veps\cdot\nabla\psi
    \dd\bz
    =
    \int_{\R^3}
    \bp_\veps\cdot\nabla\psi
    \dd\bz
    \qquad
    \text{for every }\psi\in\dot H^1(\R^3).
    \label{eq:piecewise_polarization_problem}
\end{equation}
Subtracting \cref{eq:macroscopic_poisson_solution} from
\cref{eq:piecewise_polarization_problem} and taking
$\psi=\phi_\veps-\phi$ gives
\begin{equation*}
    \|\nabla\phi_\veps-\nabla\phi\|_{L^2(\R^3)}^2
    =
    \int_{\R^3}
    (\bp_\veps-\bp_0)
    \cdot
    (\nabla\phi_\veps-\nabla\phi)
    \dd\bz \leq
    \|\bp_\veps-\bp_0\|_{L^2(\R^3)}
    \|\nabla\phi_\veps-\nabla\phi\|_{L^2(\R^3)}.
\end{equation*}
Therefore,
\begin{equation}
    \nabla\phi_\veps
    \longrightarrow
    \nabla\phi
    \qquad
    \text{strongly in }L^2(\R^3;\R^3).
    \label{eq:macroscopic_field_convergence}
\end{equation}

We next relate the field energy of $\phi_\veps$ to
$\cE_\veps^{\mathrm{cell}}$. By
\cref{eq:piecewise_constant_polarization,eq:piecewise_polarization_problem},
for every $\psi\in\dot H^1(\R^3)$,
\begin{equation}\label{eq:phi_psi_relation}
    \int_{\R^3}
    \nabla\phi_\veps\cdot\nabla\psi
    \dd\bz
    =
    \sum_{\bx\in\Omega_\veps}
    \int_{\bx+\veps Q}
    \bp(\bx)\cdot\nabla\psi
    \dd\bz.
\end{equation}
For fixed $\bx\in\Omega_\veps$, use the weak equation for
$H_Q[\bp(\bx)]$ with
\begin{equation*}
    \eta(\boldsymbol\xi)
    :=
    \psi(\bx+\veps\boldsymbol\xi).
\end{equation*}
Since
$\nabla_{\boldsymbol\xi}\eta
=
\veps\nabla\psi(\bx+\veps\boldsymbol\xi)$,
\begin{equation*}
    \int_{\R^3}
    \nabla H_Q[\bp(\bx)](\boldsymbol\xi)
    \cdot
    \nabla\psi(\bx+\veps\boldsymbol\xi)
    \dd\boldsymbol\xi
    =
    \int_Q
    \bp(\bx)\cdot
    \nabla\psi(\bx+\veps\boldsymbol\xi)
    \dd\boldsymbol\xi.
\end{equation*}
Changing variables
$\bz=\bx+\veps\boldsymbol\xi$ gives
\begin{equation*}
    \int_{\R^3}
    \nabla H_Q[\bp(\bx)]
    \left(
    \frac{\bz-\bx}{\veps}
    \right)
    \cdot\nabla\psi(\bz)
    \dd\bz
    =
    \int_{\bx+\veps Q}
    \bp(\bx)\cdot\nabla\psi(\bz)
    \dd\bz.
\end{equation*}
Summing over $\bx\in\Omega_\veps$ and comparing with
\cref{eq:phi_psi_relation} yields
\begin{equation}
    \nabla\phi_\veps(\bz)
    =
    \sum_{\bx\in\Omega_\veps}
    \nabla H_Q[\bp(\bx)]
    \left(
    \frac{\bz-\bx}{\veps}
    \right)
    \qquad
    \text{in }L^2(\R^3;\R^3).
    \label{eq:piecewise_polarization_field}
\end{equation}

Using the preceding representation of $\nabla\phi_\veps$ and expanding the square in $\frac12\int_{\R^3}|\nabla\phi_\veps|^2\dd\bz$, we obtain a double sum indexed by $\bx$ and $\bx'$. Its diagonal terms have $\bx=\bx'$, and its off-diagonal terms have $\bx\ne\bx'$. Set $\bk=(\bx'-\bx)/\veps$ and $\bz=\bx+\veps\boldsymbol\xi$. An off-diagonal term can then be written as
\begin{equation*}
\begin{aligned}
    &\int_{\R^3}
    \nabla H_Q[\bp(\bx)]
    \left(
    \frac{\bz-\bx}{\veps}
    \right)
    \cdot
    \nabla H_Q[\bp(\bx')]
    \left(
    \frac{\bz-\bx'}{\veps}
    \right)
    \dd\bz\\
    &\qquad=
    \veps^3
    \int_{\R^3}
    \nabla H_Q[\bp(\bx)](\boldsymbol\xi)
    \cdot
    \nabla H_Q[\bp(\bx')]
    (\boldsymbol\xi-\bk)
    \dd\boldsymbol\xi =
    \veps^3
    B_{\bk}
    \bigl(\bp(\bx),\bp(\bx')\bigr),
\end{aligned}
\end{equation*}
by \cref{eq:polarized_cell_pair_interaction}. For the diagonal terms $\bx = \bx'$, the same change of variables and \cref{eq:cube_depolarization_energy} give
\begin{equation*}
    \frac12
    \int_{\R^3}
    \left|
    \nabla H_Q[\bp(\bx)]
    \left(
    \frac{\bz-\bx}{\veps}
    \right)
    \right|^2
    \dd\bz =
    \frac{\veps^3}{2}
    \int_{\R^3}
    |\nabla H_Q[\bp(\bx)]|^2
    \dd\boldsymbol\xi
    =
    \frac{\veps^3}{6}
    |\bp(\bx)|^2.
\end{equation*}
Consequently,
\begin{equation}
\begin{aligned}
    \frac12
    \int_{\R^3}
    |\nabla\phi_\veps|^2
    \dd\bz
    &=
    \frac12
    \sum_{\substack{\bx,\bx'\in\Omega_\veps\\\bx\ne\bx'}}
    \veps^3
    B_{(\bx'-\bx)/\veps}
    \bigl(\bp(\bx),\bp(\bx')\bigr) +
    \frac16
    \sum_{\bx\in\Omega_\veps}
    \veps^3|\bp(\bx)|^2\\
    &=
    \cE_\veps^{\mathrm{cell}}
    +
    \frac16
    \sum_{\bx\in\Omega_\veps}
    \veps^3|\bp(\bx)|^2.
\end{aligned}
\label{eq:cell_energy_decomposition}
\end{equation}

By \cref{eq:macroscopic_field_convergence},
\begin{equation*}
    \frac12
    \int_{\R^3}
    |\nabla\phi_\veps|^2
    \dd\bz
    \longrightarrow
    \frac12
    \int_{\R^3}
    |\nabla\phi|^2
    \dd\bz.
\end{equation*}
Since $\bp$ is continuous on $\overline\Omega$, the Riemann-sum limit
and
$|\Omega\setminus\Omega_\veps^{\mathrm{int}}|\to0$ give
\begin{equation*}
    \sum_{\bx\in\Omega_\veps}
    \veps^3|\bp(\bx)|^2
    \longrightarrow
    \int_\Omega
    |\bp(\bx)|^2
    \dd\bx.
\end{equation*}
Combining this limit with \cref{eq:cell_energy_decomposition}, we have
\begin{equation}
    \lim_{\veps\downarrow0}
    \cE_\veps^{\mathrm{cell}}
    =
    \frac12
    \int_{\R^3}
    |\nabla\phi|^2
    \dd\bx
    -
    \frac16
    \int_\Omega
    |\bp(\bx)|^2
    \dd\bx.
    \label{eq:cell_nonlocal_limit}
\end{equation}
Together with
\cref{eq:first_scale_cell_reduction}, this proves
\cref{eq:nonlocal_energy_limit}.
\end{proof}


\begin{proof}[Proof of \cref{prop:dipole_representation}]
Consider the $\phi$ defined in \cref{thm:main}. We take $\psi=\phi$ in \cref{eq:macroscopic_poisson_solution} to get
\begin{equation}
    \int_{\R^3}|\nabla\phi|^2\dd\bx
    =
    \int_{\R^3}\bp_0\cdot\nabla\phi\dd\bx.
    \label{eq:macroscopic_field_energy_identity}
\end{equation}
The weak equation \cref{eq:macroscopic_poisson_solution} is equivalent to
\begin{equation}\label{eq:phi_poisson_equation}
    -\Delta\phi
    =
    -\nabla\cdot\bp_0
    \qquad
    \text{in }\mathcal D'(\R^3).
\end{equation}
Using the Fourier transform identities
\begin{equation*}
    \mathcal F[\nabla f](\boldsymbol\xi)
    =
    \mathrm i\boldsymbol\xi\,
    \mathcal F[f](\boldsymbol\xi),
    \qquad
    \mathcal F[-\Delta f](\boldsymbol\xi)
    =
    |\boldsymbol\xi|^2
    \mathcal F[f](\boldsymbol\xi)
\end{equation*}
in \cref{eq:phi_poisson_equation} gives
\begin{equation}
    \mathcal F[\phi](\boldsymbol\xi)
    =
    -\mathrm i\frac{\boldsymbol\xi}{|\boldsymbol\xi|^2}\cdot
    \mathcal F[\bp_0](\boldsymbol\xi), \qquad \mathcal F[\nabla\phi](\boldsymbol\xi)
    =
    \frac{
    \boldsymbol\xi\otimes\boldsymbol\xi
    }{
    |\boldsymbol\xi|^2
    }
    \mathcal F[\bp_0](\boldsymbol\xi),
    \qquad
    \boldsymbol\xi\ne\bm0.
    \label{eq:fourier_macroscopic_field}
\end{equation}
Since $-\Delta G=\delta_{\bm0}$, we have $|\boldsymbol\xi|^2 \mathcal F[G](\boldsymbol\xi) = 1$ and $\mathcal F[-\nabla^2G](\boldsymbol\xi) = \boldsymbol\xi\otimes\boldsymbol\xi / |\boldsymbol\xi|^2$. 
Comparing this with \cref{eq:fourier_macroscopic_field} gives
\begin{equation}
    \nabla\phi
    =
    (-\nabla^2G)\ast\bp_0
    \qquad
    \text{in }L^2(\R^3;\R^3).
    \label{eq:macroscopic_field_convolution}
\end{equation}

Using \cref{eq:dipole_distribution_identity}, we obtain
\begin{equation}
    \nabla\phi
    =
    \pv(\bK\ast\bp_0)
    +
    \frac13\bp_0
    \qquad
    \text{in }L^2(\R^3;\R^3).
    \label{eq:macroscopic_field_dipole_representation}
\end{equation}
The kernel $\bK$ is homogeneous of degree $-3$, smooth on the unit sphere, and has zero spherical mean. Hence it defines a Calderón--Zygmund singular integral operator on $\R^3$. In particular, the truncated operators
\begin{equation*}
    \int_{|\bx-\bx'|>\delta}
    \bK(\bx-\bx')
    \bp_0(\bx')
    \dd\bx'
\end{equation*}
converge in $L^2(\R^3;\R^3)$ as $\delta\downarrow0$, defining
$\pv(\bK\ast\bp_0)$; see \cite[Chapter~II]{Stein1970}. The same singular-integral estimate is used for the dipole kernel in \cite[Section~6]{JamesMuller1994}.

Substituting
\cref{eq:macroscopic_field_dipole_representation} into
\cref{eq:macroscopic_field_energy_identity} gives
\begin{equation*}
    \frac12\int_{\R^3}|\nabla\phi|^2\dd\bx
    =
    \frac12\pv\int_{\R^3}\int_{\R^3}
    \bK(\bx-\bx'):\bp_0(\bx)\otimes\bp_0(\bx')
    \dd\bx'\dd\bx
    +
    \frac16\int_{\R^3}|\bp_0|^2\dd\bx.
\end{equation*}
Since $\bp_0$ is the extension of $\bp$ by zero outside $\Omega$, we have
\begin{equation*}
\begin{aligned}
    \frac12
    \int_{\R^3}|\nabla\phi|^2\dd\bx
    -
    \frac16
    \int_\Omega|\bp|^2\dd\bx
    =
    \frac12
    \pv
    \int_\Omega\int_\Omega
    \bK(\bx-\bx')
    :
    \bp(\bx)\otimes\bp(\bx')
    \dd\bx'\dd\bx,
\end{aligned}
\end{equation*}
which proves \cref{eq:nonlocal_dipole_representation}.
\end{proof}

\subsection{Interpretations of the results}
\label{sec:interpretation}

The local and nonlocal limits retain different information about the
microscopic random field. The first term in \cref{eq:limiting_local_energy} is the ensemble average of the microscopic Coulomb energy and therefore retains the statistics of the charge density. The second local term is the depolarization energy associated with the finite cubic material point.

By contrast, the nonlocal limit depends only on the effective polarization $\bp$. Microscopic random fields with different local statistics but the same effective polarization therefore have the same continuum nonlocal energy. The cubic self-term in the potential representation of the nonlocal energy is $-\frac16\int_\Omega|\bp|^2$, which cancels the corresponding $+\frac16\int_\Omega|\bp|^2$ contribution from the local energy. The total continuum energy consequently consists of the ensemble-averaged microscopic cell energy and the macroscopic field energy.

This separation gives a precise limitation on continuum descriptions based only on polarization. Such descriptions determine the long-range field energy, including its dependence on bound charge and the specimen boundary, but do not determine the microscopic ensemble-averaged cell energy. The distinction is consistent with the role of depolarizing fields and polarization fluctuations in finite systems \cite{PonomarevaBellaicheResta2007} and with dipolar discrete-to-continuum limits in which lattice regularization produces local corrections in addition to the macroscopic field \cite{JamesMuller1994,MullerSchlomerkemper2002}. Here the local correction is derived from the underlying random charge density, and ergodic averaging makes both parts of the limiting energy deterministic.

\section{Examples}
\label{sec:examples}

We consider two special cases of the general framework. The first is a deterministic periodic charge density and connects the present result with existing continuum descriptions of periodic ionic crystals. The second is a lattice-stationary model in which the charges undergo independent cellwise random displacements about their mean positions.

\subsection{Periodic charge density: deterministic case}
\label{sec:periodic_example}

Let $\Xi=\{\omega_0\}$ with $\Pp(\{\omega_0\})=1$ and
$T_{\bk}\omega_0=\omega_0$ for every $\bk\in\Z^3$. Let
$\rho_{\mathrm{per}}:\overline\Omega\times Q\to\R$ be such that
$\bx\mapsto\rho_{\mathrm{per}}(\bx,\cdot)$ is continuous in $L^2(Q)$,
and assume that the profiles are uniformly bounded and supported in a
fixed compact subset of $Q$. Set
\begin{equation}
    \bar\rho(\bx,\bs,\omega_0)
    :=
    \rho_{\mathrm{per}}(\bx,\bs),
    \qquad
    \int_Q
    \rho_{\mathrm{per}}(\bx,\bs)\dd\bs
    =
    0.
    \label{eq:periodic_rho_bar}
\end{equation}
Then
\begin{equation}
    \rho(\bx,\bk+\bs,\omega_0)
    =
    \rho_{\mathrm{per}}(\bx,\bs),
    \label{eq:periodic_stationary_field}
\end{equation}
so \textbf{A1} follows immediately. The effective polarization is
\begin{equation}
    \bp(\bx)
    =
    \int_Q
    \rho_{\mathrm{per}}(\bx,\bs)\bs\dd\bs.
    \label{eq:periodic_cell_polarization}
\end{equation}

We next write the Fourier series of the periodic charge density as
\begin{equation*}
    \rho_{\mathrm{per}}(\bx,\bs)
    =
    \sum_{\bk\in\Z^3}
    \widehat\rho_{\bk}(\bx)
    e^{2\pi i\bk\cdot\bs}.
\end{equation*}
Neutrality gives
$\widehat\rho_{\bm0}(\bx)=0$. The lattice-stationary Coulomb potential in this periodic setting is the zero-mean periodic solution
\begin{equation}
    h_{\mathrm{per}}(\bx,\bs)
    =
    \sum_{\bk\in\Z^3\setminus\{\bm0\}}
    \frac{\widehat\rho_{\bk}(\bx)}
    {|2\pi\bk|^2}
    e^{2\pi i\bk\cdot\bs},
    \label{eq:periodic_coulomb_convolution}
\end{equation}
which satisfies
$-\Delta_{\bs}h_{\mathrm{per}}=\rho_{\mathrm{per}}$ and gives
$\bar h(\bx,\bs,\omega_0)=h_{\mathrm{per}}(\bx,\bs)$. Equivalently, this gives the distributional Coulomb convolution
$G\ast\rho_{\mathrm{per}}$ with the zero Fourier mode fixed to zero.

Since $|\bk|\geq1$ for
$\bk\in\Z^3\setminus\{\bm0\}$, Parseval's identity gives
\begin{equation}
    \|h_{\mathrm{per}}(\bx,\cdot)\|_{L^2(Q)}^2
    =
    \sum_{\bk\in\Z^3\setminus\{\bm0\}}
    \frac{|\widehat\rho_{\bk}(\bx)|^2}
    {|2\pi\bk|^4}
    \leq
    \frac{1}{(2\pi)^4}
    \|\rho_{\mathrm{per}}(\bx,\cdot)\|_{L^2(Q)}^2.
\label{eq:periodic_potential_l2_estimate}
\end{equation}
Because $\Xi$ consists of a single point,
$L^2(Q\times\Xi)=L^2(Q)$. Moreover, applying the same estimate to the
difference at two material points gives
\begin{equation}
    \|h_{\mathrm{per}}(\bx,\cdot)
    -
    h_{\mathrm{per}}(\bx',\cdot)\|_{L^2(Q)}
    \leq
    \frac{1}{(2\pi)^2}
    \|\rho_{\mathrm{per}}(\bx,\cdot)
    -
    \rho_{\mathrm{per}}(\bx',\cdot)\|_{L^2(Q)}.
\label{eq:periodic_potential_continuity}
\end{equation}
Thus $h_{\mathrm{per}}$ is continuous in $\bx$ as an $L^2(Q)$-valued
field. Together with the Cauchy--Schwarz inequality,
\cref{eq:periodic_potential_l2_estimate,eq:periodic_potential_continuity}
also imply continuity of
\begin{equation*}
    \bx
    \longmapsto
    \int_Q
    \rho_{\mathrm{per}}(\bx,\bs)
    h_{\mathrm{per}}(\bx,\bs)
    \dd\bs.
\end{equation*}
Hence \textbf{A2} holds.

It remains to verify the strong finite-volume limits in \textbf{A3}.
For a single periodic cell, compare its electric field with the field
$\nabla H_Q[\bp(\bx)]$ generated by a uniformly polarized cell having
the same dipole moment. The difference has zero total charge and zero
dipole moment, and therefore its field decays as $|\by|^{-4}$.
Consequently, its lattice translates are absolutely summable. This gives
\begin{equation}
    \nabla_{\boldsymbol\xi}
    \left[
    \frac{
    h_R(\bx,R\boldsymbol\xi,\omega_0)
    -
    h_{\mathrm{per}}(\bx,R\boldsymbol\xi)
    }{R}
    \right]
    \longrightarrow
    \nabla H_Q[\bp(\bx)](\boldsymbol\xi)
    \label{eq:periodic_A3_local}
\end{equation}
strongly in $L^2(Q;\R^3)$ and, for every fixed
$\bk\in\Z^3\setminus\{\bm0\}$,
\begin{equation}
    \nabla_{\boldsymbol\xi}
    \left[
    \frac{
    h_R(\bx,R\boldsymbol\xi,\omega_0)
    }{R}
    \right]
    \longrightarrow
    \nabla H_Q[\bp(\bx)](\boldsymbol\xi)
    \label{eq:periodic_A3_nonlocal}
\end{equation}
strongly in $L^2(\bk+Q;\R^3)$. The estimates are given in
\cref{app:periodic_finite_volume_fields}. Thus
\textbf{A1}--\textbf{A3} hold for this periodic class.

The local energy density is therefore
\begin{equation}
    e_{\mathrm{loc}}^{\mathrm{per}}(\bx)
    =
    \frac12
    \int_Q
    \rho_{\mathrm{per}}(\bx,\bs)
    h_{\mathrm{per}}(\bx,\bs)
    \dd\bs
    +
    \frac16
    |\bp(\bx)|^2.
    \label{eq:periodic_local_energy}
\end{equation}
The resulting continuum energy agrees with the formal periodic limit in
\cite{MarshallDayal2014}: its first term is the local microscopic
electrostatic energy, and its second term is the macroscopic field energy
generated by the polarization. The recent two-scale analysis in
\cite{SenWangBreitzmanDayal2026} further shows, for periodic dielectric
crystals, how the bulk polarization associated with a chosen unit cell is
accompanied by the corresponding boundary charge. Our fixed reference-cell
convention selects one such representation.

\subsection{Random perturbations of charges from their mean positions}
\label{sec:random_example}

We next consider an analytically tractable model of independent cellwise random displacements. Independently perturbed lattices provide a standard stationary and ergodic stochastic-lattice construction \cite{BlancLeBrisLions2006Discrete,BlancLeBrisLions2007Energy,Gabrielli2004}. Our model assigns a neutral collection of charge profiles to each cell and allows the profile shapes and displacement statistics to depend on the macroscopic variable $\bx$. The independence assumption is not derived from equilibrium lattice dynamics; it defines a tractable reference class for which the hypotheses of the continuum-limit theorem can be verified. Displaced-structure ensembles and ionic-crystal simulations motivate random perturbations about mean positions, although the corresponding physical displacements are generally correlated across cells \cite{ZachariasGiustino2020,GangemiEtAl2022}. Let $N_c$ denote the number of charge profiles in each cell. For each $\bk\in\Z^3$, let
\begin{equation*}
    \mathrm{u}_{\bk}(\bx,\omega)
    =
    \{\bu_{\bk,\alpha}(\bx,\omega)\}_{\alpha=1}^{N_c}
\end{equation*}
denote the collection of random displacement fields in cell $\bk$. Assume that these collections are identically distributed and that $\{\mathrm{u}_{\bk}\}_{\bk\in\Z^3}$ is independent. Displacements of different charges within the same cell may be correlated. We assume that $\bu_{\bk,\alpha}(\cdot,\omega)$ is continuous in $\bx$. Let $q_\alpha$ and $\ba_\alpha$ denote the charge and mean position of charge $\alpha$, with
\begin{equation}
    \sum_{\alpha=1}^{N_c}q_\alpha=0.
    \label{eq:random_cell_charge_neutrality}
\end{equation}
Let $\eta_\alpha:\overline\Omega\times\R^3\to\R$ be continuous in $\bx$ and smooth with compact support in its second argument, with
\begin{equation*}
    \int_{\R^3}\eta_\alpha(\bx,\bs)\dd\bs=1,
    \qquad
    \int_{\R^3}\bs\,\eta_\alpha(\bx,\bs)\dd\bs=\bm0.
\end{equation*}
Assume that there exists $\delta>0$ such that the displaced profiles remain
a distance of at least $\delta$ from $\partial Q$, uniformly in $\bx$, $\bk$,
and $\omega$, i.e.,
\begin{equation*}
    \operatorname{dist}\!\left(
    \supp\!\left[
    \eta_\alpha\bigl(
    \bx,
    \cdot-\ba_\alpha-\bu_{\bk,\alpha}(\bx,\omega)
    \bigr)
    \right],
    \partial Q
    \right)
    \geq\delta
    \qquad
    \text{for every }\bx\in\overline\Omega,
    \bk\in\Z^3,
    \omega\in\Xi.
\end{equation*}
Define
\begin{equation}
    \bar\rho(\bx,\bs,\omega)
    =
    \sum_{\alpha=1}^{N_c}
    q_\alpha
    \eta_\alpha
    \bigl(
    \bx,
    \bs-\ba_\alpha-\bu_{\bm0,\alpha}(\bx,\omega)
    \bigr).
    \label{eq:random_reference_profile}
\end{equation}
On the corresponding product probability space, $T_{\bk}$ shifts the cell variables. The action is measure preserving and ergodic. The support and normalization of $\eta_\alpha$ give
\begin{equation}
    \int_Q
    \bar\rho(\bx,\bs,\omega)\dd\bs
    =
    \sum_{\alpha=1}^{N_c}q_\alpha
    =
    0
    \label{eq:random_profile_cell_neutrality}
\end{equation}
for every realization. Similarly, using the zero first moment of $\eta_\alpha$,
\begin{equation}
    \int_Q
    \bs\,\bar\rho(\bx,\bs,\omega)\dd\bs
    =
    \sum_{\alpha=1}^{N_c}
    q_\alpha
    \bigl(
    \ba_\alpha+\bu_{\bm0,\alpha}(\bx,\omega)
    \bigr).
    \label{eq:random_cell_polarization}
\end{equation}
The effective polarization is therefore
\begin{equation}
    \bp(\bx)
    =
    \sum_{\alpha=1}^{N_c}
    q_\alpha
    \left(
    \ba_\alpha+
    \E[\bu_{\bm0,\alpha}(\bx,\cdot)]
    \right).
    \label{eq:random_effective_polarization}
\end{equation}
Thus the dipole moments of individual cells fluctuate, while their
large-volume average is deterministic. In particular, for centered
displacements,
\begin{equation}
    \bp(\bx)
    =
    \sum_{\alpha=1}^{N_c}
    q_\alpha\ba_\alpha,
    \label{eq:centered_random_polarization}
\end{equation}
which is the polarization of the unperturbed periodic configuration.

The assumptions of the continuum-limit theorem can be verified for this
model. Exact neutrality and the assumed continuity in $\bx$
give \textbf{A1}. To verify \textbf{A2}, we separate the charge density
into its mean profile and independent centered cell fluctuations.
Neutrality gives dipole-order decay of the corresponding cell potentials,
while independence makes the centered contributions square summable in
$L^2(Q\times\Xi)$. For \textbf{A3}, the mean field satisfies the periodic
finite-volume limits, while the rescaled contribution of the centered
fluctuations vanishes almost surely. The verification of
\textbf{A1}--\textbf{A3} is completed in
\cref{app:random_displacement_fields}.

The example also illustrates the different information retained by the two
parts of the limiting energy. For centered displacements, the effective
polarization, and hence the macroscopic nonlocal energy, are unchanged from
the unperturbed periodic configuration. The local energy density, however, consists
of
\begin{equation}
    e_{\mathrm{loc}}(\bx)
    =
    \frac12
    \E\!\left[
    \int_Q
    \bar\rho(\bx,\bs,\cdot)
    \bar h(\bx,\bs,\cdot)
    \dd\bs
    \right]
    +
    \frac16|\bp(\bx)|^2.
    \label{eq:random_local_energy}
\end{equation}
The ensemble-averaged term retains the displacement statistics, whereas the
polarization term is unchanged by centered displacements. Random microscopic
fluctuations can thus change the local electrostatic energy without changing
the macroscopic polarization.

The periodic setting considered in
\cite{MarshallDayal2014,SenWangBreitzmanDayal2026} is recovered when the
displacements vanish. Here, different cells in the same realization may
instead have different charge profiles and dipole moments, while ergodicity
produces a deterministic continuum polarization. The separation between
the deterministic continuum polarization and the fluctuating cell dipoles
is related to the distinction between macroscopic dipole fields and fine-scale dipole
oscillations in \cite{JamesMuller1994,MullerSchlomerkemper2002,SchlomerkemperSchmidt2009}; in the present setting, however,
the fluctuating dipoles arise from an underlying random charge density.

\section{Conclusions}
\label{sec:conclusions}

The continuum limit shows that ergodic averaging does not remove all dependence of the electrostatic energy on microscopic randomness. For interactions between distinct material points, the fluctuating cell dipoles reduce to the deterministic effective polarization. Interactions within a material point instead retain the ensemble-averaged microscopic Coulomb energy. The cell-depolarization contribution associated with the cubic decomposition belongs only to the local--nonlocal partition: its equal and opposite terms cancel in the total energy. The resulting total continuum energy combines the ensemble-averaged microscopic cell energy with the macroscopic field energy.

The periodic specialization recovers the energy obtained in \cite{MarshallDayal2014}, while the random-displacement model verifies \textbf{A1}--\textbf{A3} for a nonperiodic class. Centered cellwise displacements in the latter model can change the local and total energies without changing either the effective polarization or the nonlocal energy. A continuum description based only on effective polarization therefore misses the energetic effect of those microscopic fluctuations.

The theorem applies to stationary and ergodic charge fields satisfying \textbf{A1}--\textbf{A3}, but the explicit verification for the random-displacement model uses independence across cells. Correlated displacement fields require a separate verification of the finite-volume assumptions. Temporal charge dynamics and material-point geometries other than the fixed cube used here are also outside the present analysis.

\section*{Acknowledgments}
PKJ acknowledges support from the National Science Foundation through the Engineering Research Initiation (ERI) program under Award No.~2502279 and from the South Dakota Board of Regents Competitive Research Grant (SDBOR CRG) program. KD acknowledges support from the Army Research Office through MURI Awards W911NF-24-2-0184 and W911NF-25-2-0164 and from the National Science Foundation under Award DMS-2342349. Part of this work was carried out during PKJ's doctoral studies at Carnegie Mellon University and appeared in his Ph.D. thesis \cite{Jha2016}. Any opinions, findings, and conclusions or recommendations expressed in this material are those of the authors and do not necessarily reflect the views of the funding agencies.

\bibliographystyle{plain}
\bibliography{main}

\appendix

\section{Ergodicity of the lattice action: Proofs}\label{sec:ergodicityProof}

\begin{proof}[Proof of \cref{thm:ergodic}]
We derive the result from the Birkhoff ergodic theorem for
$\R^3$-parameterized dynamical systems \cite[Theorem~7.2]{JikovKozlovOleinik1994}.

Introduce the enlarged probability space
\begin{equation*}
    \widetilde\Xi:=\Xi\times Q,
    \qquad
    \widetilde\Pp
    :=
    \Pp\otimes\left.\mathcal L^3\right|_Q.
\end{equation*}
Since $|Q|=1$, $\widetilde\Pp$ is a probability measure. For every $\by\in\R^3$ and $\bs\in Q$, define
$\bk_{\by}(\bs)\in\Z^3$ and $\tau_{\by}(\bs)\in Q$ by the unique
decomposition
\begin{equation}
    \bs+\by
    =
    \bk_{\by}(\bs)+\tau_{\by}(\bs).
    \label{eq:lattice_fractional_decomposition}
\end{equation}
The map $\bk_{\by}(\bs)$ gives the lattice cell containing $\bs+\by$,
while $\tau_{\by}(\bs)$ gives its coordinate within the reference cell.
Define
\begin{equation}
    \widetilde T_{\by}(\omega,\bs)
    :=
    \left(
    T_{\bk_{\by}(\bs)}\omega,
    \tau_{\by}(\bs)
    \right).
    \label{eq:extended_lattice_action}
\end{equation}

We first verify the group property. Applying
\cref{eq:lattice_fractional_decomposition} successively with $\bz$ and
$\by$ gives
\begin{equation*}
    \bs+\bz
    =
    \bk_{\bz}(\bs)+\tau_{\bz}(\bs)
\end{equation*}
and
\begin{equation*}
    \tau_{\bz}(\bs)+\by
    =
    \bk_{\by}\bigl(\tau_{\bz}(\bs)\bigr)
    +
    \tau_{\by}\bigl(\tau_{\bz}(\bs)\bigr).
\end{equation*}
Adding these identities yields
\begin{equation*}
\begin{aligned}
    \bs+\by+\bz
    &=
    \bk_{\bz}(\bs)
    +
    \bk_{\by}\bigl(\tau_{\bz}(\bs)\bigr)
    +
    \tau_{\by}\bigl(\tau_{\bz}(\bs)\bigr).
\end{aligned}
\end{equation*}
By uniqueness of the decomposition into a lattice coordinate and a
reference-cell coordinate,
\begin{equation}
\begin{aligned}
    \bk_{\by+\bz}(\bs)
    &=
    \bk_{\bz}(\bs)
    +
    \bk_{\by}\bigl(\tau_{\bz}(\bs)\bigr),\\
    \tau_{\by+\bz}(\bs)
    &=
    \tau_{\by}\bigl(\tau_{\bz}(\bs)\bigr).
\end{aligned}
\label{eq:extended_action_composition}
\end{equation}
Using \cref{eq:action_properties,eq:extended_action_composition}, we obtain
\begin{equation*}
\begin{aligned}
    \widetilde T_{\by}
    \left(
    \widetilde T_{\bz}(\omega,\bs)
    \right)
    &=
    \widetilde T_{\by}
    \left(
    T_{\bk_{\bz}(\bs)}\omega,
    \tau_{\bz}(\bs)
    \right)\\
    &=
    \left(
    T_{\bk_{\by}(\tau_{\bz}(\bs))}
    T_{\bk_{\bz}(\bs)}\omega,
    \tau_{\by}(\tau_{\bz}(\bs))
    \right)\\
    &=
    \left(
    T_{\bk_{\by+\bz}(\bs)}\omega,
    \tau_{\by+\bz}(\bs)
    \right)\\
    &=
    \widetilde T_{\by+\bz}(\omega,\bs).
\end{aligned}
\end{equation*}
Also,
\begin{equation*}
    \widetilde T_{\bm0}
    =
    \operatorname{Id}.
\end{equation*}
Therefore, $\{\widetilde T_{\by}\}_{\by\in\R^3}$ is an
$\R^3$-parameterized action.

We next verify that the action preserves $\widetilde\Pp$. For fixed
$\by\in\R^3$ and $\bk\in\Z^3$, define
\begin{equation*}
    E_{\bk}(\by)
    :=
    \left\{
    \bs\in Q:
    \bk_{\by}(\bs)=\bk
    \right\}.
\end{equation*}
The sets $E_{\bk}(\by)$ form a disjoint partition of $Q$. On
$E_{\bk}(\by)$,
\begin{equation*}
    \tau_{\by}(\bs)
    =
    \bs+\by-\bk.
\end{equation*}
Moreover, the sets
\begin{equation*}
    E_{\bk}(\by)+\by-\bk,
    \qquad
    \bk\in\Z^3,
\end{equation*}
also form a disjoint partition of $Q$.

Let $\Phi\in L^1(\widetilde\Xi)$. Using the measure-preserving property of
$T_{\bk}$ and the change of variables
$\boldsymbol t=\bs+\by-\bk$, we obtain
\begin{equation*}
\begin{aligned}
    &\int_Q\int_\Xi
    \Phi\bigl(\widetilde T_{\by}(\omega,\bs)\bigr)
    \dd\Pp(\omega)\dd\bs\\
    &\quad=
    \sum_{\bk\in\Z^3}
    \int_{E_{\bk}(\by)}
    \int_\Xi
    \Phi\left(
    T_{\bk}\omega,
    \bs+\by-\bk
    \right)
    \dd\Pp(\omega)\dd\bs\\
    &\quad=
    \sum_{\bk\in\Z^3}
    \int_{E_{\bk}(\by)+\by-\bk}
    \int_\Xi
    \Phi(\omega,\boldsymbol t)
    \dd\Pp(\omega)\dd\boldsymbol t\\
    &\quad=
    \int_Q\int_\Xi
    \Phi(\omega,\boldsymbol t)
    \dd\Pp(\omega)\dd\boldsymbol t.
\end{aligned}
\end{equation*}
Thus, $\widetilde T_{\by}$ preserves $\widetilde\Pp$.

Finally, we verify ergodicity. Let
$\Phi\in L^1(\widetilde\Xi)$ satisfy
\begin{equation}
    \Phi\bigl(\widetilde T_{\by}(\omega,\bs)\bigr)
    =
    \Phi(\omega,\bs)
    \label{eq:extended_invariant_function}
\end{equation}
for every $\by\in\R^3$ and almost every
$(\omega,\bs)\in\widetilde\Xi$.

For $\by=\bk\in\Z^3$, we have
\begin{equation*}
    \bk_{\bk}(\bs)=\bk,
    \qquad
    \tau_{\bk}(\bs)=\bs,
\end{equation*}
and therefore
\begin{equation*}
    \widetilde T_{\bk}(\omega,\bs)
    =
    (T_{\bk}\omega,\bs).
\end{equation*}
It follows from \cref{eq:extended_invariant_function} that
\begin{equation*}
    \Phi(T_{\bk}\omega,\bs)
    =
    \Phi(\omega,\bs)
\end{equation*}
for every $\bk\in\Z^3$ and almost every
$(\omega,\bs)\in\Xi\times Q$. Since $\Z^3$ is countable, Fubini's theorem
implies that, for almost every fixed $\bs\in Q$,
\begin{equation*}
    \Phi(T_{\bk}\omega,\bs)
    =
    \Phi(\omega,\bs)
\end{equation*}
for every $\bk\in\Z^3$ and almost every $\omega\in\Xi$.

Fix such a value of $\bs$ and define
\begin{equation*}
    \Phi_{\bs}(\omega):=\Phi(\omega,\bs).
\end{equation*}
Then $\Phi_{\bs}$ is invariant under the original lattice action:
\begin{equation*}
    \Phi_{\bs}(T_{\bk}\omega)
    =
    \Phi_{\bs}(\omega)
\end{equation*}
for every $\bk\in\Z^3$ and almost every $\omega\in\Xi$. To see why
ergodicity implies that $\Phi_{\bs}$ is constant, consider, for any
$a\in\R$, the measurable set
\begin{equation*}
    A_a(\bs)
    :=
    \left\{
    \omega\in\Xi:
    \Phi_{\bs}(\omega)>a
    \right\}.
\end{equation*}
The invariance of $\Phi_{\bs}$ gives
\begin{equation*}
    T_{\bk}^{-1}A_a(\bs)
    =
    A_a(\bs)
\end{equation*}
for every $\bk\in\Z^3$. By ergodicity of the original lattice action,
\begin{equation*}
    \Pp\bigl(A_a(\bs)\bigr)
    \in\{0,1\}.
\end{equation*}
Since every level set of $\Phi_{\bs}$ has probability either zero or one,
$\Phi_{\bs}$ must be constant almost surely. Hence, for almost every
$\bs\in Q$, there exists a number $c(\bs)$ such that
\begin{equation}
    \Phi(\omega,\bs)
    =
    c(\bs)
    \label{eq:extended_invariant_cell_function}
\end{equation}
for almost every $\omega\in\Xi$.

Substituting \cref{eq:extended_invariant_cell_function} into
\cref{eq:extended_invariant_function} gives
\begin{equation}
    c\bigl(\tau_{\by}(\bs)\bigr)
    =
    c(\bs)
    \label{eq:cell_translation_invariance}
\end{equation}
for every $\by\in\R^3$ and almost every $\bs\in Q$. By Fubini's theorem,
\cref{eq:cell_translation_invariance} holds for almost every
$(\bs,\by)\in Q\times Q$. Hence, for almost every $\bs\in Q$,
\begin{equation*}
    c(\bs)
    =
    \int_Q c\bigl(\tau_{\by}(\bs)\bigr)\dd\by
    =
    \int_Q c(\boldsymbol t)\dd\boldsymbol t.
\end{equation*}
In the last equality, we used that
$\by\mapsto\tau_{\by}(\bs)$ is translation modulo $\Z^3$ and therefore
preserves Lebesgue measure on $Q$. The right-hand side is independent of
$\bs$, so $c$ is constant almost everywhere. Consequently, every invariant
function $\Phi$ is constant almost everywhere, and the extended action is
ergodic.

Define
\begin{equation*}
    \widetilde f(\omega,\bs):=f(\omega).
\end{equation*}
Theorem~7.2 of \cite{JikovKozlovOleinik1994}, applied to
$\widetilde f$ and $|\widetilde f|$, gives, for almost every
$(\omega,\bs)\in\widetilde\Xi$,
\begin{equation}
    \frac{1}{R^3}
    \int_{Q_R}
    \widetilde f\bigl(
    \widetilde T_{\by}(\omega,\bs)
    \bigr)
    \dd\by
    \longrightarrow
    \E[f].
    \label{eq:extended_spatial_average}
\end{equation}

For fixed $\omega$, define the piecewise-constant realization
\begin{equation*}
    F_\omega(\bk+\boldsymbol t)
    :=
    f(T_{\bk}\omega),
    \qquad
    \bk\in\Z^3,
    \quad
    \boldsymbol t\in Q.
\end{equation*}
For $R=2n+1$, the decomposition
\begin{equation*}
    Q_R
    =
    \biguplus_{\bk\in I_R}(\bk+Q)
\end{equation*}
gives
\begin{equation}
    \frac{1}{R^3}
    \sum_{\bk\in I_R}
    f(T_{\bk}\omega)
    =
    \frac{1}{R^3}
    \int_{Q_R}
    F_\omega(\by)\dd\by.
    \label{eq:lattice_average_integral}
\end{equation}
On the other hand,
\begin{equation*}
    \widetilde f\bigl(
    \widetilde T_{\by}(\omega,\bs)
    \bigr)
    =
    F_\omega(\by+\bs),
\end{equation*}
and therefore
\begin{equation*}
    \frac{1}{R^3}
    \int_{Q_R}
    \widetilde f\bigl(
    \widetilde T_{\by}(\omega,\bs)
    \bigr)
    \dd\by
    =
    \frac{1}{R^3}
    \int_{Q_R+\bs}
    F_\omega(\by)\dd\by.
\end{equation*}

Since $\bs\in Q$, the sets $Q_R$ and $Q_R+\bs$ differ only in a boundary
layer of fixed thickness. The volume of this layer is of order $R^2$,
whereas $|Q_R|=R^3$. Applying
\cref{eq:extended_spatial_average} to $|f|$ shows that the contribution
of this boundary layer, divided by $R^3$, converges to zero. Consequently,
the averages over $Q_R$ and $Q_R+\bs$ have the same limit.

By Fubini's theorem, for almost every $\omega\in\Xi$, there exists
$\bs\in Q$ for which the preceding convergence holds. Combining it with
\cref{eq:extended_spatial_average,eq:lattice_average_integral} gives
\begin{equation*}
    \frac{1}{R^3}
    \sum_{\bk\in I_R}
    f(T_{\bk}\omega)
    \longrightarrow
    \E[f].
\end{equation*}

For the $L^2$ convergence, let
$\boldsymbol e_1,\boldsymbol e_2,\boldsymbol e_3$ be the standard basis
of $\Z^3$ and define
\begin{equation*}
    U_jf:=f\circ T_{\boldsymbol e_j},
    \qquad
    j=1,2,3.
\end{equation*}
For $R=2n+1$, the lattice average can be written as
\begin{equation*}
    \frac{1}{R^3}
    \sum_{\bk\in I_R}
    f\circ T_{\bk}
    =
    \prod_{j=1}^3
    \left(
    \frac{1}{2n+1}
    \sum_{m=-n}^{n}U_j^m
    \right)f.
\end{equation*}
The mean ergodic theorem shows that each factor converges in
$L^2(\Xi)$ to the projection onto the functions invariant under
$T_{\boldsymbol e_j}$. Since the three transformations commute, their
product converges to the projection onto the functions invariant under
every $T_{\bk}$. Ergodicity implies that such functions are constant.
The limiting constant is therefore $\E[f]$.
\end{proof}

\begin{proof}[Proof of \cref{cor:weighted_ergodic}]
For fixed $\omega$, define the piecewise-constant realization
\begin{equation*}
    f_\omega(\bk+\bs)
    :=
    f(T_{\bk}\omega),
    \qquad
    \bk\in\Z^3,
    \quad
    \bs\in Q.
\end{equation*}
The mean-value statement used in the proof of
\cref{thm:ergodic} gives
\begin{equation*}
    f_\omega(R\,\cdot)
    \rightharpoonup
    \E[f]
    \qquad
    \text{in }L^2(Q;\R^m)
\end{equation*}
for almost every $\omega\in\Xi$. Hence,
\begin{equation}
    \int_Q
    a(\boldsymbol\eta)
    f_\omega(R\boldsymbol\eta)
    \dd\boldsymbol\eta
    \longrightarrow
    \left(\int_Qa(\boldsymbol\eta)\dd\boldsymbol\eta\right)
    \E[f].
    \label{eq:weighted_realization_limit}
\end{equation}

Using
\begin{equation*}
    Q_R
    =
    \biguplus_{\bk\in I_R}(\bk+Q)
\end{equation*}
and the change of variables
$\boldsymbol\eta=(\bk+\bs)/R$, we obtain
\begin{equation*}
\begin{aligned}
    \int_Q
    a(\boldsymbol\eta)
    f_\omega(R\boldsymbol\eta)
    \dd\boldsymbol\eta
    &=
    \frac{1}{R^3}
    \sum_{\bk\in I_R}
    \int_Q
    a\left(\frac{\bk+\bs}{R}\right)
    f(T_{\bk}\omega)
    \dd\bs.
\end{aligned}
\end{equation*}
Therefore,
\begin{equation*}
\begin{aligned}
    &
    \left|
    \int_Q
    a(\boldsymbol\eta)
    f_\omega(R\boldsymbol\eta)
    \dd\boldsymbol\eta
    -
    \frac{1}{R^3}
    \sum_{\bk\in I_R}
    a\left(\frac{\bk}{R}\right)
    f(T_{\bk}\omega)
    \right|\\
    &\qquad\leq
    \omega_a\left(\frac{\sqrt{3}}{2R}\right)
    \frac{1}{R^3}
    \sum_{\bk\in I_R}
    |f(T_{\bk}\omega)|,
\end{aligned}
\end{equation*}
where
\begin{equation*}
    \omega_a(\delta)
    :=
    \sup_{\substack{
    \boldsymbol\xi,\boldsymbol\eta\in\overline Q\\
    |\boldsymbol\xi-\boldsymbol\eta|\leq\delta}}
    |a(\boldsymbol\xi)-a(\boldsymbol\eta)|
\end{equation*}
is the modulus of continuity of $a$. Since $a$ is uniformly continuous,
\begin{equation*}
    \omega_a\left(\frac{\sqrt{3}}{2R}\right)
    \longrightarrow0.
\end{equation*}
Moreover, \cref{thm:ergodic} applied to $|f|$ gives
\begin{equation*}
    \frac{1}{R^3}
    \sum_{\bk\in I_R}
    |f(T_{\bk}\omega)|
    \longrightarrow
    \E[|f|].
\end{equation*}
Thus, the difference above converges to zero. Combining this result with
\cref{eq:weighted_realization_limit} proves the claim.
\end{proof}


\section{Regularity of microscopic potential $h$ and weak finite-volume limit of $\nabla (h_R - h)$}
\label{app:weak_potential}

This appendix proves \cref{prop:microscopic_potential_properties} and \cref{prop:weak_finite_volume_field}. 

\begin{proof}[Proof of \cref{prop:microscopic_potential_properties}]
The distributional equation follows from $h=G\ast\rho$ and
$-\Delta G=\delta_{\bm0}$. By \textbf{A1}--\textbf{A2},
\begin{equation*}
    h(\bx,\cdot,\omega)\in L^2_{\mathrm{loc}}(\R^3),
    \qquad
    \rho(\bx,\cdot,\omega)\in L^2_{\mathrm{loc}}(\R^3)
\end{equation*}
for every fixed $\bx$ and almost every $\omega$. Since
\begin{equation*}
    -\Delta_{\by}h(\bx,\cdot,\omega)
    =
    \rho(\bx,\cdot,\omega)
    \qquad
    \text{in }\mathcal D'(\R^3),
\end{equation*}
standard interior elliptic regularity implies
\begin{equation*}
    h(\bx,\cdot,\omega)\in H^2_{\mathrm{loc}}(\R^3),
\end{equation*}
and hence
\begin{equation*}
    h(\bx,\cdot,\omega)\in H^1_{\mathrm{loc}}(\R^3).
\end{equation*}

Choose fixed radii $0<r<R_\ast$ such that
\begin{equation*}
    Q\subset B_r(\bm0)
    \Subset
    B_{R_\ast}(\bm0).
\end{equation*}
The Caccioppoli inequality for the Poisson equation gives
\begin{equation}
    \int_{B_r(\bm0)}
    |\nabla_{\by}h(\bx,\by,\omega)|^2
    \dd\by
    \leq
    C
    \left[
    \int_{B_{R_\ast}(\bm0)}
    |h(\bx,\by,\omega)|^2
    \dd\by
    +
    \int_{B_{R_\ast}(\bm0)}
    |\rho(\bx,\by,\omega)|^2
    \dd\by
    \right],
    \label{eq:caccioppoli_h}
\end{equation}
where $C$ depends only on $r$ and $R_\ast$. Therefore,
\begin{equation}
    \int_Q
    |\nabla_{\by}h(\bx,\by,\omega)|^2
    \dd\by
    \leq
    C
    \left[
    \int_{B_{R_\ast}(\bm0)}
    |h(\bx,\by,\omega)|^2
    \dd\by
    +
    \int_{B_{R_\ast}(\bm0)}
    |\rho(\bx,\by,\omega)|^2
    \dd\by
    \right].
    \label{eq:local_gradient_bound}
\end{equation}

Since $B_{R_\ast}(\bm0)$ is bounded, it is contained in a finite union of
lattice cells,
\begin{equation*}
    B_{R_\ast}(\bm0)
    \subset
    \bigcup_{j=1}^{N}
    (\bk_j+Q),
    \qquad
    \bk_j\in\Z^3.
\end{equation*}
Using lattice stationarity,
\begin{equation*}
\begin{aligned}
    \E\!\left[
    \int_{B_{R_\ast}(\bm0)}
    |h(\bx,\by,\cdot)|^2
    \dd\by
    \right]
    &\leq
    \sum_{j=1}^{N}
    \E\!\left[
    \int_{\bk_j+Q}
    |h(\bx,\by,\cdot)|^2
    \dd\by
    \right]\\
    &=
    N
    \E\!\left[
    \int_Q
    |\bar h(\bx,\bs,\cdot)|^2
    \dd\bs
    \right]
    <\infty.
\end{aligned}
\end{equation*}
Similarly,
\begin{equation*}
    \E\!\left[
    \int_{B_{R_\ast}(\bm0)}
    |\rho(\bx,\by,\cdot)|^2
    \dd\by
    \right]
    \leq
    N
    \E\!\left[
    \int_Q
    |\bar\rho(\bx,\bs,\cdot)|^2
    \dd\bs
    \right]
    <\infty.
\end{equation*}
Taking expectations in \cref{eq:local_gradient_bound} therefore yields
\cref{eq:stationary_field_energy_bound}.
\end{proof}

\begin{proof}[Proof of \cref{prop:weak_finite_volume_field}]
Fix $\bx\in\overline\Omega$ and suppress $\bx$ in the notation.

Define
\begin{equation}
    \mu_R
    :=
    R\rho(R\,\cdot,\omega)\chi_Q.
    \label{eq:appendix_scaled_charge}
\end{equation}
For $\varphi\in C_c^\infty(\R^3)$, cell neutrality gives
\begin{equation}
\begin{aligned}
    \langle\mu_R,\varphi\rangle
    &=
    \frac{1}{R^2}
    \sum_{\bk\in I_R}
    \int_Q
    \bar\rho(\bs,T_{\bk}\omega)
    \left[
    \varphi\left(\frac{\bk+\bs}{R}\right)
    -
    \varphi\left(\frac{\bk}{R}\right)
    \right]
    \dd\bs.
\end{aligned}
\label{eq:appendix_scaled_charge_neutrality}
\end{equation}
Taylor's formula gives
\begin{equation}
\begin{aligned}
    \langle\mu_R,\varphi\rangle
    =
    \frac{1}{R^3}
    \sum_{\bk\in I_R}
    \nabla\varphi\left(\frac{\bk}{R}\right)
    \cdot
    \int_Q
    \bs\,\bar\rho(\bs,T_{\bk}\omega)
    \dd\bs
    +
    \mathcal R_R,
\end{aligned}
\label{eq:appendix_scaled_charge_dipole}
\end{equation}
where
\begin{equation*}
    |\mathcal R_R|
    \leq
    \frac{C_\varphi}{R}
    \frac{1}{R^3}
    \sum_{\bk\in I_R}
    \int_Q
    |\bar\rho(\bs,T_{\bk}\omega)|
    \dd\bs.
\end{equation*}
The last average is bounded almost surely by
\cref{thm:ergodic} and \textbf{A1}, so
$\mathcal R_R\to0$. Applying
\cref{cor:weighted_ergodic} to the cell dipole gives
\begin{equation}
    \mu_R
    \longrightarrow
    -\nabla\cdot(\bp\chi_Q)
    \qquad
    \text{in }\mathcal D'(\R^3).
    \label{eq:appendix_mu_limit}
\end{equation}

Set
\begin{equation}
    V_R(\boldsymbol\xi)
    :=
    \frac{h_R(R\boldsymbol\xi,\omega)}{R}.
    \label{eq:appendix_VR}
\end{equation}
A change of variables and the homogeneity of $G$ give
\begin{equation}
    V_R
    =
    G\ast\mu_R.
    \label{eq:appendix_VR_convolution}
\end{equation}
Since the distributions $\mu_R$ are supported in the fixed compact set
$\overline Q$, \cref{eq:appendix_mu_limit} implies
\begin{equation}
    V_R
    \longrightarrow
    G\ast[-\nabla\cdot(\bp\chi_Q)]
    =
    H_Q[\bp]
    \qquad
    \text{in }\mathcal D'(\R^3).
    \label{eq:appendix_VR_distributional}
\end{equation}

Let $\bar\bP$ be the cell polarization field from
\cref{lem:cell_polarization_field}, extended to $\R^3$ by
\cref{eq:cell_polarization_extension}. Since $Q_R$ is a union of
complete cells,
\begin{equation}
    -\Delta_{\boldsymbol\xi}V_R
    =
    -\nabla_{\boldsymbol\xi}\cdot
    \left[
    \bP(R\boldsymbol\xi,\omega)\chi_Q(\boldsymbol\xi)
    \right]
    \qquad
    \text{in }\mathcal D'(\R^3).
    \label{eq:appendix_VR_divergence}
\end{equation}
The $\dot H^1$ estimate for
$-\Delta v=-\nabla\cdot F$ gives
\begin{equation}
    \|\nabla_{\boldsymbol\xi}V_R\|_{L^2(\R^3)}
    \leq
    \|\bP(R\,\cdot,\omega)\|_{L^2(Q)}.
    \label{eq:appendix_VR_gradient_bound}
\end{equation}
By \cref{eq:cell_polarization_estimate,thm:ergodic}, the right-hand side
is bounded almost surely. Hence
$\{\nabla_{\boldsymbol\xi}V_R\}$ is bounded in
$L^2(\R^3;\R^3)$.

Every subsequence therefore has a further subsequence converging weakly
in $L^2(\R^3;\R^3)$. By
\cref{eq:appendix_VR_distributional}, the distributional limit of the
gradients is $\nabla H_Q[\bp]$. The weak limit is therefore unique, and
\begin{equation}
    \nabla_{\boldsymbol\xi}
    \left[
    \frac{h_R(R\boldsymbol\xi,\omega)}{R}
    \right]
    \rightharpoonup
    \nabla H_Q[\bp](\boldsymbol\xi)
    \qquad
    \text{weakly in }L^2(\R^3;\R^3).
    \label{eq:appendix_weak_VR_field}
\end{equation}
This proves \cref{eq:weak_truncated_finite_volume_field}.

Next set
\begin{equation}
    W_R(\boldsymbol\xi)
    :=
    \frac{h(R\boldsymbol\xi,\omega)}{R}.
    \label{eq:appendix_WR}
\end{equation}
By lattice stationarity, \textbf{A2}, and
\cref{thm:ergodic},
\begin{equation}
    \|W_R\|_{L^2(Q)}^2
    =
    \frac{1}{R^5}\int_{Q_R}|h(\by,\omega)|^2\dd\by
    =
    \frac{1}{R^2}
    \left[
    \frac{1}{R^3}\int_{Q_R}|h(\by,\omega)|^2\dd\by
    \right]
    \longrightarrow0.
\label{eq:appendix_WR_zero}
\end{equation}
Moreover,
\begin{equation}
    \|\nabla_{\boldsymbol\xi}W_R\|_{L^2(Q)}^2
    =
    \frac{1}{R^3}
    \int_{Q_R}
    |\nabla_{\by}h(\by,\omega)|^2
    \dd\by.
    \label{eq:appendix_WR_gradient}
\end{equation}
By \cref{eq:stationary_field_energy_bound,thm:ergodic}, the right-hand
side is bounded almost surely. Thus
$\{\nabla_{\boldsymbol\xi}W_R\}$ is bounded in
$L^2(Q;\R^3)$.

Let a subsequence of
$\nabla_{\boldsymbol\xi}W_R$ converge weakly in $L^2(Q;\R^3)$.
For every
$\boldsymbol\varphi\in C_c^\infty(Q;\R^3)$,
\begin{equation*}
\begin{aligned}
    \int_Q
    \nabla_{\boldsymbol\xi}W_R
    \cdot\boldsymbol\varphi
    \dd\boldsymbol\xi
    =
    -
    \int_Q
    W_R
    \nabla_{\boldsymbol\xi}\cdot\boldsymbol\varphi
    \dd\boldsymbol\xi
    \longrightarrow0
\end{aligned}
\end{equation*}
by \cref{eq:appendix_WR_zero}. Hence every weakly convergent
subsequence has limit $\bm0$, and therefore
\begin{equation}
    \nabla_{\boldsymbol\xi}
    \left[
    \frac{h(R\boldsymbol\xi,\omega)}{R}
    \right]
    \rightharpoonup
    \bm0
    \qquad
    \text{weakly in }L^2(Q;\R^3).
    \label{eq:appendix_weak_WR_field}
\end{equation}

Subtracting
\cref{eq:appendix_weak_WR_field} from
\cref{eq:appendix_weak_VR_field}, restricted to $Q$, gives
\begin{equation*}
    \nabla_{\boldsymbol\xi}
    \left[
    \frac{
    h_R(R\boldsymbol\xi,\omega)
    -
    h(R\boldsymbol\xi,\omega)
    }{R}
    \right]
    \rightharpoonup
    \nabla H_Q[\bp](\boldsymbol\xi)
\end{equation*}
weakly in $L^2(Q;\R^3)$, proving
\cref{eq:weak_finite_volume_field}.
\end{proof}

\section{Electrostatic identities}
\label{app:electrostatic_identities}

\subsection{Depolarization identity for the unit cube}
\label{app:depolarization_identity}

For $\bq\in\R^3$, define the linear map
\begin{equation*}
    \bD_Q\bq
    :=
    \int_Q
    \nabla H_Q[\bq](\boldsymbol\xi)
    \dd\boldsymbol\xi.
\end{equation*}
For the coordinate vector $\be_j$, the equation
\begin{equation*}
    -\Delta H_Q[\be_j]
    =
    -\nabla\cdot(\be_j\chi_Q)
\end{equation*}
gives
\begin{equation*}
    \widehat{H_Q[\be_j]}(\boldsymbol\kappa)
    =
    -\frac{\mathrm i\kappa_j}{|\boldsymbol\kappa|^2}
    \widehat{\chi_Q}(\boldsymbol\kappa).
\end{equation*}
Hence, by Plancherel's identity,
\begin{equation}
    (\bD_Q)_{ij}
    =
    \frac{1}{(2\pi)^3}
    \int_{\R^3}
    \frac{\kappa_i\kappa_j}{|\boldsymbol\kappa|^2}
    \left|
    \widehat{\chi_Q}(\boldsymbol\kappa)
    \right|^2
    \dd\boldsymbol\kappa.
    \label{eq:cube_depolarization_fourier}
\end{equation}
Reflection symmetry of $Q$ makes the off-diagonal entries zero, while
invariance under permutations of the coordinate axes makes the three
diagonal entries equal. Thus
\begin{equation*}
    \bD_Q=c\bI.
\end{equation*}
Taking the trace in \cref{eq:cube_depolarization_fourier} gives
\begin{equation*}
    3c
    =
    \frac{1}{(2\pi)^3}
    \int_{\R^3}
    \frac{\kappa_1^2+\kappa_2^2+\kappa_3^2}
    {|\boldsymbol\kappa|^2}
    \left|
    \widehat{\chi_Q}(\boldsymbol\kappa)
    \right|^2
    \dd\boldsymbol\kappa
    =
    \frac{1}{(2\pi)^3}
    \int_{\R^3}
    \left|
    \widehat{\chi_Q}(\boldsymbol\kappa)
    \right|^2
    \dd\boldsymbol\kappa
    =
    \int_{\R^3}
    |\chi_Q(\boldsymbol\xi)|^2
    \dd\boldsymbol\xi
    =
    |Q|
    =
    1.
\end{equation*}
Therefore, $\bD_Q=\bI/3$.

The weak equation for $H_Q[\bq]$ is
\begin{equation*}
    \int_{\R^3}
    \nabla H_Q[\bq]\cdot\nabla\psi
    \dd\boldsymbol\xi
    =
    \int_Q
    \bq\cdot\nabla\psi
    \dd\boldsymbol\xi
    \qquad
    \text{for every }\psi\in\dot H^1(\R^3).
\end{equation*}
Taking $\psi=H_Q[\bq]$ and using
$\bD_Q=\bI/3$ gives
\begin{equation}
    \int_{\R^3}
    |\nabla H_Q[\bq]|^2
    \dd\boldsymbol\xi
    =
    \int_Q
    \bq\cdot\nabla H_Q[\bq]
    \dd\boldsymbol\xi
    =
    \bq\cdot\bD_Q\bq
    =
    \frac13|\bq|^2.
\label{eq:cube_depolarization_energy}
\end{equation}


\subsection{Distributional form of the dipole kernel}
\label{app:dipole_distribution_identity}

Because the dipole kernel has an $|\bz|^{-3}$ singularity, it is not
locally integrable at the origin and must be interpreted as a distribution. This distributional treatment, including the Dirac contribution associated with the singularity, also appears in lattice-dipole calculations; see \cite[Sections~2--3]{JamesMuller1994}. We have, for $\bz\ne\bm0$,
\begin{equation*}
    \bK(\bz)
    =
    -\nabla^2G(\bz)
    =
    \frac{1}{4\pi|\bz|^3}
    \left(
    \bI
    -
    3\frac{\bz}{|\bz|}
    \otimes
    \frac{\bz}{|\bz|}
    \right).
\end{equation*}
We first note that $\bK$ has zero spherical mean. Consequently, the principal-value distribution associated with $\bK$ is well defined by
\begin{equation}
    \langle\pv\bK,\boldsymbol\varphi\rangle
    :=
    \lim_{\delta\downarrow0}
    \int_{|\bz|>\delta}
    \bK(\bz):\boldsymbol\varphi(\bz)
    \dd\bz,
    \qquad
    \boldsymbol\varphi\in
    C_c^\infty(\R^3;\R^{3\times3}).
    \label{eq:principal_value_dipole_kernel}
\end{equation}
To see the convergence near the origin, write
\begin{equation*}
    \boldsymbol\varphi(\bz)
    =
    \boldsymbol\varphi(\bm0)
    +
    \bigl[
    \boldsymbol\varphi(\bz)
    -
    \boldsymbol\varphi(\bm0)
    \bigr].
\end{equation*}
The contribution of $\boldsymbol\varphi(\bm0)$ vanishes on every
spherical annulus due to zero spherical mean of $\bK$, while
\begin{equation*}
    \left|
    \boldsymbol\varphi(\bz)
    -
    \boldsymbol\varphi(\bm0)
    \right|
    \leq
    C|\bz|
\end{equation*}
near the origin. Since $|\bK(\bz)|\leq C|\bz|^{-3}$, the remaining
integrand is bounded by $C|\bz|^{-2}$ and is locally integrable in
$\R^3$.

For $\bz\ne\bm0$,
\begin{equation*}
    -\nabla^2G(\bz)=\bK(\bz).
\end{equation*}
Hence the distribution
\begin{equation*}
    -\nabla^2G-\pv\bK
\end{equation*}
is supported at the origin. Both $-\nabla^2G$ and $\pv\bK$ are
homogeneous of degree $-3$. A distribution supported at the origin is a
finite linear combination of derivatives of $\delta_{\bm0}$, and
$\partial^\alpha\delta_{\bm0}$ is homogeneous of degree
$-3-|\alpha|$. Therefore,
\begin{equation}
    -\nabla^2G-\pv\bK
    =
    \bA\delta_{\bm0}
    \label{eq:dipole_distribution_point_term}
\end{equation}
for some constant matrix $\bA$. The left-hand side of
\cref{eq:dipole_distribution_point_term} is invariant under rotations. Therefore, $\bA=c\bI$ for some constant $c$. Taking the trace in \cref{eq:dipole_distribution_point_term} and using
\begin{equation*}
    -\Delta G=\delta_{\bm0},
    \qquad
    \operatorname{tr}\bK(\bz)=0
    \quad
    \text{for }\bz\ne\bm0,
\end{equation*}
gives $c=1/3$. The same $\bI/3$ Dirac correction is obtained for the spherically truncated dipole field in \cite[Proposition~3.1]{JamesMuller1994}. Hence,
\begin{equation}
    -\nabla^2G
    =
    \pv\bK
    +
    \frac13\bI\delta_{\bm0}
    \qquad
    \text{in }
    \mathcal D'(\R^3;\R^{3\times3}).
    \label{eq:dipole_distribution_identity}
\end{equation}

\section{Auxiliary results related to examples}

\subsection{Finite-volume fields for the periodic example}
\label{app:periodic_finite_volume_fields}

We verify \cref{eq:periodic_A3_local,eq:periodic_A3_nonlocal}.
Fix $\bx\in\overline\Omega$ and suppress $\bx$ in the notation. Define
the potential and field generated by one microscopic cell by
\begin{equation}
    \Psi(\by)
    :=
    \int_Q
    G(\by-\bs)
    \rho_{\mathrm{per}}(\bs)
    \dd\bs,
    \qquad
    \bF(\by)
    :=
    \nabla\Psi(\by).
    \label{eq:periodic_single_cell_field}
\end{equation}
The cell is neutral and has dipole moment
\begin{equation*}
    \bp
    =
    \int_Q
    \bs\,\rho_{\mathrm{per}}(\bs)
    \dd\bs.
\end{equation*}
Let
\begin{equation}
    \bD(\by)
    :=
    \bF(\by)-\nabla H_Q[\bp](\by).
    \label{eq:periodic_cell_field_difference}
\end{equation}
The two sources defining the fields in
\cref{eq:periodic_cell_field_difference} have the same total charge and
the same dipole moment. Taylor expansion of the Coulomb kernel therefore
gives
\begin{equation}
    |\bD(\by)|
    \leq
    \frac{C}{1+|\by|^4},
    \qquad
    \by\notin Q,
    \label{eq:periodic_cell_field_decay}
\end{equation}
where $C$ is independent of $\bx$. The assumed uniform support away from
$\partial Q$ and uniform boundedness of $\rho_{\mathrm{per}}$ also give
$\bF\in L^2_{\mathrm{loc}}(\R^3)$, while
$\nabla H_Q[\bp]\in L^2(\R^3)$. Hence
\begin{equation}
    \|\bD\|_{L^2(\bm\ell+Q)}
    \leq
    \frac{C}{1+|\bm\ell|^4},
    \qquad
    \bm\ell\in\Z^3.
    \label{eq:periodic_cell_field_l2_decay}
\end{equation}
In particular, the lattice sum of $\bD$ converges absolutely in
$L^2_{\mathrm{loc}}$.

The uniformly polarized cells tile space, so their internal surface
charges cancel. Moreover, if $\bD=\nabla U$, the zero charge and zero
dipole of the source of $U$ give
$U(\by)=O(|\by|^{-3})$. Hence
\begin{equation*}
    \int_{\R^3}\bD(\by)\dd\by
    =
    \lim_{r\to\infty}
    \int_{\partial B_r}U(\by)\bn\,\dd S
    =
    \bm0.
\end{equation*}
The absolutely convergent lattice sum of $\bD$ is therefore periodic,
curl free, and has zero mean over $Q$. Its divergence gives
$-\Delta h_{\mathrm{per}}=\rho_{\mathrm{per}}$, and consequently
\begin{equation}
    \nabla h_{\mathrm{per}}(\by)
    =
    \sum_{\bj\in\Z^3}
    \bD(\by-\bj).
    \label{eq:periodic_field_D_representation}
\end{equation}

For the scaled cell $Q_R=RQ$, define
\begin{equation}
    H_{Q_R}[\bp](\by)
    :=
    \int_{\partial Q_R}
    G(\by-\bz)
    \bp\cdot\bn_{Q_R}(\bz)
    \dd S_{\bz}.
    \label{eq:polarized_scaled_cell_potential}
\end{equation}
Equivalently,
\begin{equation*}
    -\Delta H_{Q_R}[\bp]
    =
    -\nabla\cdot(\bp\chi_{Q_R})
    \qquad
    \text{in }\mathcal D'(\R^3).
\end{equation*}
Since $Q_R=RQ$ and $G$ is homogeneous of degree $-1$,
\begin{equation}
    H_{Q_R}[\bp](R\boldsymbol\xi)
    =
    R H_Q[\bp](\boldsymbol\xi),
    \qquad
    \nabla H_{Q_R}[\bp](R\boldsymbol\xi)
    =
    \nabla H_Q[\bp](\boldsymbol\xi).
    \label{eq:polarized_scaled_cell_scaling}
\end{equation}

For the finite block $Q_R$, linearity and cancellation of the surface
charges on the internal faces of the uniformly polarized cells give
\begin{equation}
    \nabla h_R(\by)
    =
    \sum_{\bj\in I_R}
    \bF(\by-\bj)=
    \sum_{\bj\in I_R}
    \bD(\by-\bj)
    +
    \nabla H_{Q_R}[\bp](\by).
    \label{eq:periodic_finite_block_D_decomposition}
\end{equation}

We first consider $\boldsymbol\xi\in Q$. Combining
\cref{eq:periodic_field_D_representation,eq:periodic_finite_block_D_decomposition,eq:polarized_scaled_cell_scaling} gives
\begin{equation}
    \nabla_{\boldsymbol\xi}
    \left[
    \frac{
    h_R(R\boldsymbol\xi)
    -
    h_{\mathrm{per}}(R\boldsymbol\xi)
    }{R}
    \right]
    -
    \nabla H_Q[\bp](\boldsymbol\xi) =
    -
    \sum_{\bj\notin I_R}
    \bD(R\boldsymbol\xi-\bj).
    \label{eq:periodic_local_remainder}
\end{equation}
Let
\begin{equation*}
    a_{\bm\ell}
    :=
    \|\bD\|_{L^2(\bm\ell+Q)}.
\end{equation*}
By \cref{eq:periodic_cell_field_l2_decay},
\begin{equation*}
    a_{\bm\ell}
    \leq
    \frac{C}{1+|\bm\ell|^4},
    \qquad
    \sum_{\bm\ell\in\Z^3}a_{\bm\ell}<\infty.
\end{equation*}
For an observation cell $\bi+Q\subset Q_R$, let
$d(\bi)$ denote its lattice distance from the complement of $I_R$.
The triangle inequality gives
\begin{equation*}
    \left\|
    \sum_{\bj\notin I_R}
    \bD(\cdot-\bj)
    \right\|_{L^2(\bi+Q)}
    \leq
    \sum_{\bj\notin I_R}a_{\bi-\bj} \leq
    C
    \sum_{m\geq d(\bi)}
    \frac{m^2}{1+m^4} \leq
    \frac{C}{1+d(\bi)}.
\end{equation*}
There are $O(R^2)$ cells at each fixed depth from $\partial Q_R$.
Consequently,
\begin{equation}
    \frac{1}{R^3}
    \int_{Q_R}
    \left|
    \sum_{\bj\notin I_R}
    \bD(\by-\bj)
    \right|^2
    \dd\by
    \leq
    \frac{C}{R}
    \sum_{d=0}^{R}
    \frac{1}{(1+d)^2} \leq
    \frac{C}{R}.
    \label{eq:periodic_local_remainder_estimate}
\end{equation}
By the change of variables $\by=R\boldsymbol\xi$,
\cref{eq:periodic_local_remainder,eq:periodic_local_remainder_estimate}
proves \cref{eq:periodic_A3_local}.

For the second limit, let
$\bk\in\Z^3\setminus\{\bm0\}$ be fixed. From
\cref{eq:periodic_finite_block_D_decomposition,eq:polarized_scaled_cell_scaling},
\begin{equation}
    \nabla_{\boldsymbol\xi}
    \left[
    \frac{h_R(R\boldsymbol\xi)}{R}
    \right]
    -
    \nabla H_Q[\bp](\boldsymbol\xi) =
    \sum_{\bj\in I_R}
    \bD(R\boldsymbol\xi-\bj),
    \qquad
    \boldsymbol\xi\in\bk+Q.
    \label{eq:periodic_nonlocal_remainder}
\end{equation}
The interiors of $Q_R$ and $R(\bk+Q)$ are disjoint. For an observation
cell $\bi+Q$ at lattice distance $d$ from $Q_R$,
\cref{eq:periodic_cell_field_l2_decay} gives, as above,
\begin{equation*}
    \left\|
    \sum_{\bj\in I_R}
    \bD(\cdot-\bj)
    \right\|_{L^2(\bi+Q)}
    \leq
    \frac{C}{1+d}.
\end{equation*}
For fixed $\bk\ne\bm0$, there are at most $C_{\bk}R^2$ observation
cells at each distance $d$. Therefore,
\begin{equation}
    \frac{1}{R^3}
    \int_{R(\bk+Q)}
    \left|
    \sum_{\bj\in I_R}
    \bD(\by-\bj)
    \right|^2
    \dd\by
    \leq
    \frac{C_{\bk}}{R}
    \sum_{d=0}^{\infty}
    \frac{1}{(1+d)^2} \leq
    \frac{C_{\bk}}{R}.
    \label{eq:periodic_nonlocal_remainder_estimate}
\end{equation}
Together with \cref{eq:periodic_nonlocal_remainder}, this proves
\cref{eq:periodic_A3_nonlocal}.

\subsection{Fields for the random-displacement example}
\label{app:random_displacement_fields}

We verify the parts of \textbf{A2}--\textbf{A3} specific to
\cref{sec:random_example}. For each cell, write
\begin{equation}
    \rho_{\bk}(\bx,\bs,\omega)
    :=
    \bar\rho(\bx,\bs,T_{\bk}\omega)
    =
    m(\bx,\bs)
    +
    \widetilde\rho_{\bk}(\bx,\bs,\omega),
    \qquad
    m(\bx,\bs)
    :=
    \E[\bar\rho(\bx,\bs,\cdot)].
    \label{eq:random_mean_centered_decomposition}
\end{equation}
Exact cell neutrality gives
\begin{equation}
    \E[\widetilde\rho_{\bk}(\bx,\bs,\cdot)]=0,
    \qquad
    \int_Q
    \widetilde\rho_{\bk}(\bx,\bs,\omega)\dd\bs=0.
    \label{eq:centered_random_charge_properties}
\end{equation}
For fixed $\bx$, the fields
$\{\widetilde\rho_{\bk}\}_{\bk\in\Z^3}$ are independent.

Define the potential generated by the centered charge in cell $\bk$ by
\begin{equation}
    V_{\bk}(\bx,\by,\omega)
    :=
    \int_Q
    G(\by-\bk-\bs)
    \widetilde\rho_{\bk}(\bx,\bs,\omega)
    \dd\bs.
    \label{eq:centered_random_cell_potential}
\end{equation}
Uniform confinement, boundedness, and neutrality give
\begin{equation}
    |V_{\bk}(\bx,\by,\omega)|
    \leq
    \frac{C}{1+|\by-\bk|^2},
    \qquad
    |\nabla V_{\bk}(\bx,\by,\omega)|
    \leq
    \frac{C}{1+|\by-\bk|^3},
    \label{eq:centered_random_cell_decay}
\end{equation}
away from the source cell, with $C$ independent of $\bx$ and $\bk$.
Since $\E[V_{\bk}]=0$, independence gives
\begin{equation}
\begin{aligned}
    \sum_{\bk\in\Z^3}
    \E\!\left[
    \|V_{\bk}(\bx,\cdot,\cdot)\|_{L^2(Q)}^2
    \right]
    &\leq
    C
    \sum_{\bk\in\Z^3}
    \frac{1}{(1+|\bk|)^4}
    <
    \infty.
\end{aligned}
    \label{eq:random_potential_square_sum}
\end{equation}
The variables $V_{\bk}(\bx,\cdot,\cdot)$ are independent and centered in
$L^2(Q)$. Hence \cref{eq:random_potential_square_sum} implies that
\begin{equation*}
    \widetilde h(\bx,\cdot,\omega)
    :=
    \sum_{\bk\in\Z^3}V_{\bk}(\bx,\cdot,\omega)
\end{equation*}
converges in $L^2(Q\times\Xi)$ and almost surely in $L^2(Q)$. Applying the
same argument on every translated cell and taking a countable intersection
defines $\widetilde h\in L^2_{\mathrm{loc}}(\R^3)$ almost surely. The
convergence in $L^2(Q\times\Xi)$ is unconditional, so reindexing the series
under $T_{\bi}$ gives
\begin{equation*}
    \widetilde h(\bx,\by+\bi,\omega)
    =
    \widetilde h(\bx,\by,T_{\bi}\omega),
    \qquad \bi\in\Z^3.
\end{equation*}
Adding the periodic potential generated by $m$ gives the lattice-stationary
potential required in \textbf{A2}. The uniform estimates above and the
continuity assumptions on $\eta_\alpha$ and $\bu_{\bk,\alpha}$ give
continuity in $\bx$ in $L^2(Q\times\Xi)$ by dominated convergence.
Cauchy--Schwarz then gives the required continuity of the local observable
in \textbf{A2}.

It remains to verify \textbf{A3}. Let $\widetilde h_R$ and
$\widetilde h$ denote the finite- and infinite-volume potentials generated
by the centered fields. On $Q$, the local correction has the exact
representation
\begin{equation*}
    \nabla_{\boldsymbol\xi}
    \left[
    \frac{
    \widetilde h_R(\bx,R\boldsymbol\xi,\omega)
    -
    \widetilde h(\bx,R\boldsymbol\xi,\omega)
    }{R}
    \right]
    =
    -\sum_{\bj\notin I_R}
    \nabla V_{\bj}(\bx,R\boldsymbol\xi,\omega).
\end{equation*}
The summands are independent and centered as $L^2(Q;\R^3)$-valued random
variables, so their cross terms vanish after taking expectation.
Consequently, \cref{eq:centered_random_cell_decay} gives
\begin{equation}
    \E\!\left[
    \left\|
    \nabla_{\boldsymbol\xi}
    \left[
    \frac{
    \widetilde h_R(\bx,R\boldsymbol\xi,\cdot)
    -
    \widetilde h(\bx,R\boldsymbol\xi,\cdot)
    }{R}
    \right]
    \right\|_{L^2(Q)}^2
    \right]
    \leq
    \frac{C}{R}.
    \label{eq:random_local_field_second_moment}
\end{equation}
Indeed, for an observation cell at depth $d$ from $\partial Q_R$, the
sum of the squared fields generated by exterior cells is bounded by
\begin{equation*}
    C\sum_{m\geq d}\frac{m^2}{1+m^6}
    \leq
    \frac{C}{(1+d)^3}.
\end{equation*}
There are $O(R^2)$ observation cells at each fixed depth, which gives
\cref{eq:random_local_field_second_moment} after division by $R^3$.

On a translated region $\bk+Q$, the corresponding representation is
\begin{equation*}
    \nabla_{\boldsymbol\xi}
    \left[
    \frac{
    \widetilde h_R(\bx,R\boldsymbol\xi,\omega)
    }{R}
    \right]
    =
    \sum_{\bj\in I_R}
    \nabla V_{\bj}(\bx,R\boldsymbol\xi,\omega).
\end{equation*}
The same argument gives, for every fixed
$\bk\in\Z^3\setminus\{\bm0\}$,
\begin{equation}
    \E\!\left[
    \left\|
    \nabla_{\boldsymbol\xi}
    \left[
    \frac{
    \widetilde h_R(\bx,R\boldsymbol\xi,\cdot)
    }{R}
    \right]
    \right\|_{L^2(\bk+Q)}^2
    \right]
    \leq
    \frac{C_{\bk}}{R}.
    \label{eq:random_nonlocal_field_second_moment}
\end{equation}

To obtain almost-sure convergence, set
\begin{equation*}
\begin{aligned}
    X_{\bj}^{R,\mathrm{loc}}(\boldsymbol\xi)
    &:=-\nabla V_{\bj}(\bx,R\boldsymbol\xi,\cdot),
    &&\bj\notin I_R,
    \quad \boldsymbol\xi\in Q,\\
    X_{\bj}^{R,\bk}(\boldsymbol\xi)
    &:=\nabla V_{\bj}(\bx,R\boldsymbol\xi,\cdot),
    &&\bj\in I_R,
    \quad \boldsymbol\xi\in\bk+Q.
\end{aligned}
\end{equation*}
Each family consists of independent centered $L^2$-valued random
variables, and its sum is the corresponding field above. For either
family, the fourth-moment estimate
\begin{equation*}
    \E\!\left[
    \left\|
    \sum_{\bj}X_{\bj}
    \right\|_{L^2}^4
    \right]
    \leq
    C
    \left[
    \left(
    \sum_{\bj}
    \E[\|X_{\bj}\|_{L^2}^2]
    \right)^2
    +
    \sum_{\bj}
    \E[\|X_{\bj}\|_{L^2}^4]
    \right].
\end{equation*}
The shell estimates used above also give
\begin{equation*}
    \sum_{\bj}
    \sup_{\omega\in\Xi}
    \|X_{\bj}\|_{L^2}^2
    \leq
    \frac{C}{R},
\end{equation*}
where $C$ is replaced by $C_{\bk}$ for the translated family. Consequently,
\begin{equation*}
\begin{aligned}
    \left(
    \sum_{\bj}
    \E[\|X_{\bj}\|_{L^2}^2]
    \right)^2
    &\leq \frac{C}{R^2},\\
    \sum_{\bj}
    \E[\|X_{\bj}\|_{L^2}^4]
    &\leq
    \left(
    \sum_{\bj}
    \sup_{\omega\in\Xi}
    \|X_{\bj}\|_{L^2}^2
    \right)^2
    \leq \frac{C}{R^2}.
\end{aligned}
\end{equation*}
Substituting these bounds into the fourth-moment estimate gives
\begin{equation}
\begin{aligned}
    \E\!\left[
    \left\|
    \nabla_{\boldsymbol\xi}
    \left[
    \frac{
    \widetilde h_R(\bx,R\boldsymbol\xi,\cdot)
    -
    \widetilde h(\bx,R\boldsymbol\xi,\cdot)
    }{R}
    \right]
    \right\|_{L^2(Q)}^4
    \right]
    &\leq
    \frac{C}{R^2},
    \\
    \E\!\left[
    \left\|
    \nabla_{\boldsymbol\xi}
    \left[
    \frac{
    \widetilde h_R(\bx,R\boldsymbol\xi,\cdot)
    }{R}
    \right]
    \right\|_{L^2(\bk+Q)}^4
    \right]
    &\leq
    \frac{C_{\bk}}{R^2}.
\end{aligned}
    \label{eq:random_field_fourth_moments}
\end{equation}
Fix $\bx\in\overline\Omega$ and
$\bk\in\Z^3\setminus\{\bm0\}$, and denote the two norms in
\cref{eq:random_field_fourth_moments} by
$Z_R^{\mathrm{loc}}$ and $Z_R^{\bk}$, respectively. For every
$\delta>0$, Markov's inequality and the union bound give
\begin{equation*}
\begin{aligned}
    \Pp\!\left(
    Z_R^{\mathrm{loc}}>\delta
    \ \text{or}\
    Z_R^{\bk}>\delta
    \right)
    &\leq
    \delta^{-4}
    \left(
    \E[(Z_R^{\mathrm{loc}})^4]
    +
    \E[(Z_R^{\bk})^4]
    \right)
    \leq
    \frac{C_{\bk}}{\delta^4R^2}.
\end{aligned}
\end{equation*}
Consequently, for every $m\in\mathbb N$,
\begin{equation*}
    \sum_{n=0}^{\infty}
    \Pp\!\left(
    Z_{2n+1}^{\mathrm{loc}}>\frac1m
    \ \text{or}\
    Z_{2n+1}^{\bk}>\frac1m
    \right)
    <
    \infty.
\end{equation*}
The Borel--Cantelli lemma and a countable intersection over
$m\in\mathbb N$ therefore give, along $R=2n+1$,
\begin{equation}
    \nabla_{\boldsymbol\xi}
    \left[
    \frac{
    \widetilde h_R(\bx,R\boldsymbol\xi,\omega)
    -
    \widetilde h(\bx,R\boldsymbol\xi,\omega)
    }{R}
    \right]
    \longrightarrow
    \bm0
    \quad\text{strongly in }L^2(Q;\R^3)
    \label{eq:random_centered_A3_local}
\end{equation}
and
\begin{equation}
    \nabla_{\boldsymbol\xi}
    \left[
    \frac{
    \widetilde h_R(\bx,R\boldsymbol\xi,\omega)
    }{R}
    \right]
    \longrightarrow
    \bm0
    \quad\text{strongly in }L^2(\bk+Q;\R^3)
    \label{eq:random_centered_A3_nonlocal}
\end{equation}
almost surely. The mean profile $m(\bx,\cdot)$ satisfies the periodic
hypotheses of \cref{sec:periodic_example}. Combining
\cref{eq:random_centered_A3_local,eq:random_centered_A3_nonlocal}
with
\cref{eq:periodic_A3_local,eq:periodic_A3_nonlocal}
proves \textbf{A3}.

\end{document}